\newtheorem{lemma}{Lemma}
\newtheorem{theorem}{Theorem}
\newtheorem{definition}{Definition}
\newtheorem{corollary}{Corollary}
\newtheorem{claim}{Claim}
\newtheorem{observation}{Observation}[section]
\newcommand{\Oh}{\mathcal{O}}
\newcommand{\NP}{\mbox{NP}}
\newcommand{\PP}{\mbox{P}}
\newcommand{\xig}{\xi_{\mbox{\tiny $G$}}}
\newcommand{\CX}{{\mathcal X}}
\newcommand{\mc}{{\rm mc}}
\providecommand{\keywords}[1]
{ \medskip  
  \noindent
  \small	
  \textbf{\textit{Keywords---}} #1
}
\begin{document}

\title{Computing the Largest Bond and the Maximum Connected Cut of a Graph
\thanks{This work is partially supported by 
JST CREST JPMJCR1401, and JSPS KAKENHI grant numbers JP17H01788, JP16K16010, JP17K19960, and JP19K21537, and by São Paulo Research Foundation (FAPESP) grant number 2015/11937-9, and Rio de Janeiro Research Foundation (FAPERJ) grant number E-26/203.272/2017, and by National Council for Scientific and Technological Development (CNPq-Brazil) grant numbers 308689/2017-8, 425340/2016-3, 313026/2017-3, 422829/2018-8, 303726/2017-2.
The Japanese authors thank Akitoshi Kawamura and Yukiko Yamauchi for giving an opportunity to discuss in  the Open Problem Seminar in Kyushu University, Japan. Preliminary versions of this paper appeared in \cite{eto_et_al:LIPIcs:2019:11474} and \cite{duarte_et_al:LIPIcs:2019:11473}.}
}

\author[1]{Gabriel L. Duarte}
\author[2]{Hiroshi Eto}
\author[3]{Tesshu Hanaka}
\author[4]{Yasuaki Kobayashi}
\author[4]{Yusuke Kobayashi}
\author[5]{Daniel Lokshtanov}
\author[6]{Lehilton L. C. Pedrosa}
\author[6]{Rafael C. S. Schouery}
\author[1]{U\'everton~S.~Souza}

\affil[1]{Fluminense Federal University, Niter\'oi, RJ, Brazil
\thanks{\texttt{\{gabrield, ueverton\}@ic.uff.br}}}
\affil[2]{Kyushu University, Fukuoka, Japan 
\thanks{\texttt{h-eto@econ.kyushu-u.ac.jp}}}
\affil[3]{Chuo University, Tokyo, Japan
\thanks{\texttt{hanaka.91t@g.chuo-u.ac.jp}}}
\affil[4]{Kyoto University, Kyoto, Japan 
\thanks{\texttt{kobayashi@iip.ist.i.kyoto-u.ac.jp, yusuke@kurims.kyoto-u.ac.jp }}}
\affil[5]{University of California Santa Barbara, CA, USA
\thanks{\texttt{daniello@ucsb.edu}}}
\affil[6]{University of Campinas, SP, Brazil
\thanks{\texttt{\{lehilton, rafael\}@ic.unicamp.br}}}

\date{}


\maketitle

\begin{abstract}
The cut-set $\partial(S)$ of a graph $G=(V,E)$ is the set of edges that have one endpoint in $S\subset V$ and the other endpoint in $V\setminus S$, and whenever $G[S]$ is connected, the cut $[S,V\setminus S]$ of $G$ is called a connected cut. A bond of a graph $G$ is an inclusion-wise minimal disconnecting set of $G$, i.e., bonds are cut-sets that determine cuts $[S,V\setminus S]$ of $G$ such that $G[S]$ and $G[V\setminus S]$ are both connected. 
Contrasting with a large number of studies related to maximum cuts, there exist very few results regarding the largest bond of general graphs.
In this paper, we aim to reduce this gap on the complexity of computing the largest bond, and the maximum connected cut of a graph.
Although cuts and bonds are similar, we remark that computing the largest bond and the maximum connected cut of a graph tends to be harder than computing its maximum cut.
We show that it does not exist a constant-factor approximation algorithm to compute the largest bond, unless $\PP = \NP$. Also, we show that {\sc Largest Bond} and {\sc Maximum Connected Cut} are \NP-hard even for planar bipartite graphs, whereas \textsc{Maximum Cut} is trivial on bipartite graphs and polynomial-time solvable on planar graphs. In addition, we show that {\sc Largest Bond} and {\sc Maximum Connected Cut} are \NP-hard on split graphs, and restricted to graphs of clique-width $w$ they can not be solved in time $f(w)\times n^{o(w)}$ unless the Exponential Time Hypothesis fails, but they can be solved in time $f(w)\times n^{O(w)}$.
Finally, we show that both problems are fixed-parameter tractable when parameterized by the size of the solution, the treewidth, and the twin-cover number.

\keywords{bond; cut; maximum cut; connected cut; FPT; treewidth; clique-width.}
\end{abstract}

\section{Introduction}

Let $G=(V,E)$ be a simple, connected, undirected graph.  A \emph{disconnecting set} of $G$ is a set of edges $F\subseteq E(G)$ whose removal disconnects $G$. The edge-connectivity of $G$ is $\kappa'(G) = \min\{|F|: \mbox{$F$ is a disconnecting set of $G$}\}$. A cut $[S,T]$ of $G$ is a partition of $V$ into two subsets $S$ and $T=V\setminus S$. The cut-set $\partial(S)$ of a cut $[S,T]$ is the set of edges that have one endpoint in $S$ and the other endpoint in $T$; these edges are said to cross the cut. In a connected graph, each cut-set determines a unique cut.
Note that every cut-set is a disconnecting set, but the converse is not true.
An inclusion-wise minimal disconnecting set of a graph is called a \emph{bond} (or a \emph{minimal cut}). It is easy to see that every bond is a cut-set, but there are cut-sets that are not bonds. More precisely, a nonempty set of edges $F$ of $G$ is a bond if and only if $F$ determines a cut $[S,T]$ of $G$ such that $G[S]$ and $G[T]$ are both connected. Let~$s,t\in V(G)$. An $st$-bond of $G$ is a bond whose removal disconnects $s$ and $t$. 

A minimum (maximum) cut of a graph $G$ is a cut with cut-set of minimum (maximum) size. Every minimum cut is a bond, thus a minimum bond is also a minimum cut of $G$, and it can be found in polynomial time using the classical Edmonds–Karp algorithm~\cite{EdmondsKarpAlgorithm}. Besides that, minimum $st$-bonds are well-known structures, since they are precisely the $st$-cuts involved in the Gomory-Hu trees~\cite{GomoryHu}.

\textsc{Maximum Cut} is one of the most fundamental problems in theoretical computer science. Given a graph $G$ and an integer $k$, the problem asks for a subset $S$ of vertices such that $|\partial(S)|\geq k$. Whenever the subgraph of $G$ induced by $S$ is connected, the cut $[S, V\setminus S]$ is called {\em connected}. 
Recall that every bond is a connected cut, but the converse is not true.

In this paper, we are interested in the complexity aspects of the following problem.

\medskip
\noindent    \fbox{
        \parbox{.95\textwidth}{
\noindent
\textsc{Maximum Connected Cut}\\
\noindent
\textbf{Instance}: A graph $G=(V,E)$; a positive integer $k$.\\
\noindent
\textbf{Question}: Is there a proper subset $S\subset V(G)$ such that  $G[S]$ is connected and $|\partial(S)|\geq k$?
}
}

\medskip
\noindent    \fbox{
        \parbox{.95\textwidth}{
\noindent
\textsc{Largest Bond}\\
\noindent
\textbf{Instance}: A graph $G=(V,E)$; a positive integer $k$.\\
\noindent
\textbf{Question}: Is there a proper subset $S\subset V(G)$ such that  $G[S]$ and $G[V\setminus S]$ are both connected and $|\partial(S)|\geq k$?
}
}

\medskip

We also consider the versions of both problems where the removal of the cut-set must disconnect a given pair of vertices.

\noindent    \fbox{
        \parbox{.95\textwidth}{
\noindent
\textsc{Maximum Connected $st$-Cut}\\
\noindent
\textbf{Instance}: A graph $G=(V,E)$; vertices $s,t\in V(G)$; a positive integer $k$.

\noindent
\textbf{Question}: Is there a proper subset $S\subset V(G)$ with $s\in S$ and $t\notin S$, such that  $G[S]$ is connected and $|\partial(S)|\geq k$?
}
}

\medskip

\noindent    \fbox{
        \parbox{.95\textwidth}{
\noindent
{\sc \textsc{Largest $st$-Bond}}

\noindent
\textbf{Instance}: A graph $G=(V,E)$; vertices $s,t\in V(G)$; a positive integer $k$.


\noindent
\textbf{Question}: Is there a proper subset $S\subset V(G)$ with $s\in S$ and $t\notin S$, such that $G[S]$ and $G[V\setminus S]$ are both connected and $|\partial(S)|\geq k$?
}
}

\medskip

Such problems can be seen as variants of \textsc{Maximum Cut}, which was shown to be NP-hard in Karp's seminal work~\cite{Karp1972}. To overcome this intractability, a lot of researches have been done from various view points, such as  approximation algorithms~\cite{goemans1995semidef}, fixed-parameter tractability~\cite{Raman2007}, and special graph classes~\cite{bodlaender2000,Diaz2007,Guruswami1999,Hadlock,Orlova1972}.

The \textsc{Maximum Connected Cut} problem was defined in \cite{Haglin} and it is known to be NP-complete even on planar graphs \cite{HajiaghayiKMPS15}. Regarding bonds on planar graphs, a folklore theorem states that if $G$ is a connected planar graph, then a set of edges is a cycle in $G$ if and only if it corresponds to a bond in the dual graph of~$G$~\cite{LivroPlanarBond}. Note that each cycle separates the faces of $G$ into the faces in the interior of the cycle and the faces of the exterior of the cycle, and the duals of the cycle edges are exactly the edges that cross from the interior to the exterior~\cite{oxley2006matroid}. Consequently, the girth of a planar graph equals the edge connectivity of its dual~\cite{CHO20072456}.
Although cuts and bonds are similar, computing the largest bond of a graph seems to be harder than computing its maximum cut. {\sc Maximum Cut} is \NP-hard in general~\cite{garey2002computers}, but becomes polynomial for planar graphs~\cite{Hadlock}. On the other hand, finding a longest cycle in a planar graph is \NP-hard, implying that finding a largest bond of a planar multigraph (or of a simple edge-weighted planar graph) is \NP-hard. In addition, it is well-known that if a simple planar graph is $3$-vertex-connected, then its dual is a simple planar graph. In~1976, Garey, Johnson, and Tarjan~\cite{PlanarHamiltonian} proved that the problem of establishing whether a $3$-vertex-connected planar graph is Hamiltonian is \NP-complete, thus, as also noted by Haglin and Venkatesan~\cite{Haglin},  finding the largest bond of a simple planar graph is also \NP-hard, contrasting with the polynomial-time solvability of {\sc Maximum Cut} on planar graphs. Recently, Chaourar proved that \textsc{Largest Bond} can be solved in polynomial time on series parallel graphs and graphs without $K_5\setminus e$ as a minor in \cite{Chaourar2017,Chaourar2019}.

Computing the largest bond of a graph $G$ corresponds to compute the maximum minimal cut of $G$. Graph problems about finding a maximum minimal (or minimum maximal) solutions such as \textsc{Maximum Minimal Vertex Cover} \cite{Boria2015,Zehavi2017}, \textsc{Maximum Minimal Dominating Set} \cite{Bazgan2018}, \textsc{Maximum Minimal Edge Cover} \cite{Khoshkhah2019}, \textsc{Maximum Minimal Separator} \cite{Hanaka2017}, \textsc{Minimum Maximal matching} \cite{garey2002computers,Yannakakis1980}, and \textsc{Minimum Maximal Independent Set} \cite{Demange1999}, have been long studied.

From the point of view of parameterized complexity, it is well known that {\sc Maximum Cut} can be solved in FPT time when parametrized by the size of the solution~\cite{mahajan1999parameterizing}, and since every graph has a cut with at least half the edges~\cite{erdos1965some}, it follows that it has a linear kernel. Concerning approximation algorithms, a $1/2$-approximation algorithm can be obtained by randomly partitioning the set vertices into two parts, which induces a cut-set whose expected size is at least half of the number of edges~\cite{mitzenmacher_upfal_2005}. The best-known result is the seminal work of Goemans and Williamson~\cite{goemans1995semidef}, who gave a $0.878$-approximation based on semidefinite programming. This has the best approximation factor unless the Unique Games Conjecture fails~\cite{khot2007hard}.
To the best of our knowledge, there is no algorithmic study regarding the parameterized complexity of computing the largest bond of a graph as well as the approximability of the problem.
%
%
Observe that a bond induces a feasible solution of {\sc Maximum Connected Cut}, but not the other way around, since $G[T]$ may be disconnected. Indeed, the size of a largest bond can be arbitrarily smaller than the size of the maximum connected cut; take, e.g., a star with $n$ leaves.
For {\sc Maximum Connected Cut} on general graphs, there exists a $\Omega(1/\log n)$-approximation~\cite{GandhiHKPS18}, where $n$ is the number of vertices. Also, there is a constant-factor approximation with factor~$1/2$ for graphs of bounded treewidth~\cite{ShenLN18}, and a polynomial-time approximation scheme for graphs of bounded genus~\cite{HajiaghayiKMPS15}.

Recently, Saurabh and Zehavi~\cite{SaurabhZ19} considered a generalization of {\sc Maximum Connected Cut}, named {\sc Multi-Node Hub}. In this problem, given numbers $l$ and $k$, the objective is to find a cut $[S, T]$ of $G$ such that $G[S]$ is connected, $|S| = l$ and $|\partial(S)| \ge k$.
They observed that the problem is $W[1]$-hard when parameterized on~$l$, and gave the first parameterized algorithm for the problem with respect to the parameter~$k$. We remark that the $W[1]$-hardness also holds for {\sc Largest Bond} parameterized by $|S|$.

Since every nonempty bond determines a cut $[S,T]$ such that $G[S]$ and $G[T]$ are both connected, every bond of $G$ has size at most $|E(G)|-|V(G)|+2$. A graph~$G$ has a bond of size $|E(G)|-|V(G)|+2$ if and only if $V(G)$ can be partitioned into two parts such that each part induces a tree. Such graphs are known as \emph{Yutsis graphs}. The set of planar Yutsis graphs is exactly the dual class of  Hamiltonian planar graphs.
According to Aldred, Van Dyck, Brinkmann, Fack, and McKay~\cite{aldred2009graph}, cubic Yutsis graphs appear in the quantum theory of angular momenta as a graphical representation of general recoupling coefficients. They can be manipulated following certain rules in order to generate the so-called summation formulae for the general recoupling coefficient (see \cite{biedenharn1981racah,VANDYCK20071506,yutsis1962mathematical}).

There are very few results about the largest bond size in general graphs. In 2008, Aldred, Van Dyck, Brinkmann, Fack, and McKay~\cite{aldred2009graph} showed that if a Yutsis graph is regular with degree $3$, the partition of the vertex set from the largest bond will result in two sets of equal size. In 2015, Ding, Dziobiak and Wu~\cite{ding2016large} proved that any simple $3$-connected graph $G$ will have a largest bond with size at least $\frac{2}{17}\sqrt{\log n}$, where $n=|V(G)|$. In 2017, Flynn~\cite{flynn2017largest} verified the conjecture that any simple $3$-connected graph $G$ has a largest bond with size at least $\Omega(n^{\log_32})$ for a variety of graph classes including planar graphs.

Even though there are many important applications of  \textsc{Maximum Connected Cut} and \textsc{Largest Bond} such as image segmentation \cite{Vicente2008}, forest planning \cite{Carvajal2013}, and  computing a market splitting for electricity markets \cite{Grimm2019}, the known results are much fewer than those for \textsc{Maximum Cut} due to the difficult nature of simultaneously maximizing its size and handling the connectivity of a cut.

\subsection{Our contributions}

In this paper, we complement the state of the art on the problems of computing the largest bond and the maximum connected cut of a graph.
Preliminarily, we present general reductions that allows us to observe that {\sc Largest Bond} and {\sc Maximum Connected Cut} are \NP-hard for several graph classes for which {\sc Maximum Cut} is \NP-hard. Using this framework, we are able to show that {\sc Largest Bond} and {\sc Maximum Connected Cut} on graphs of clique-width $w$ cannot be solved in time $f(w)\times n^{o(w)}$ unless the ETH fails.
We also prove that both \textsc{Maximum Connected Cut} and \textsc{Largest Bond} are NP-complete even on planar bipartite graphs.
Interestingly, although \textsc{Maximum Cut} can be solved in polynomial time on planar graphs \cite{Hadlock,Orlova1972} and it is trivial on bipartite graphs, our both problems are intractable even on the intersection of these classes. Also, we show that these problems are NP-complete on split graphs.
Moreover, we show that {\sc Largest Bond} does not admit a constant-factor approximation algorithm, unless $\PP = \NP$, and thus is asymptotically harder to approximate than {\sc Maximum Cut}.

To tackle this difficulty, we study both problems from the perspective of the parameterized complexity.
%
Using win/win approaches, we consider the strategy of preprocessing the input in order to bound the treewidth of the resulting instance. 
After that, we give $O^*({2}^{O(tw\log tw)})$-time algorithms for both problems\footnote{The $O^*(\cdot)$ notation suppresses polynomial factors in the input size.}, where $tw$ is the tree-width of the input graph.
Moreover, we can improve this running time using the rank-based approach \cite{Bodlaender2015} to $O^*(c^{tw})$ for some constant $c$ and using the Cut \& Count technique \cite{Cygan2011} to $O^*(3^{{tw}})$ for \textsc{Maximum Connected Cut} and $O^*(4^{tw})$ for \textsc{Largest Bond}, using randomization.
Let us note that our result generalizes the polynomial time algorithms for \textsc{Largest Bond} on series parallel graphs and graphs without $K_5\setminus e$ as a minor due to Chaourar~\cite{Chaourar2017,Chaourar2019} since such graphs are tree-width bounded~\cite{Robertson1986}.
Based on these algorithms, we give $O^*(2^{O(k)})$-time algorithms for both problems.
Also, we remark that  the problems do not admit polynomial kernels, unless NP $\subseteq$ coNP/poly.

Finally, we consider different structural graph parameters.
We design tight (assuming ETH) XP-time algorithms for both problems when parameterized by clique-width $cw$.
Also, we give $O^*(2^{2^{tc}+tc})$-time and $O^*(2^{tc}3^{2^{tc}})$-time FPT algorithms for \textsc{Maximum Connected Cut} and \textsc{Largest Bond}, respectively, where $tc$ is the minimum size of a twin-cover of the input graph.

\section{Intractability results}

In this section, we discuss aspects of the hardness of computing the largest bond and the maximum connected cut of a graph.
Notice that {\sc Largest Bond} and {\sc Maximum Connected Cut} are Turing reducible to {\sc Largest $st$-Bond} and  {\sc Maximum Connected $st$-Cut}, respectively. Therefore, the hardness of {\sc Largest Bond} and {\sc Maximum Connected Cut}  presented in this section also holds for {\sc Largest $st$-Bond} and  {\sc Maximum Connected $st$-Cut} as well, unless $P=NP$.

\smallskip

Next, we present a general framework for reducibility from {\sc Maximum Cut} to {\sc Largest Bond}, by defining a special graph operator $\psi$ such that {\sc Maximum Cut} on a graph class~$\mathcal{F}$ is reducible to {\sc Largest Bond} on the image of $\mathcal{F}$ via~$\psi$. An interesting particular case occurs when $\mathcal{F}$ is closed under $\psi$ (for instance, chordal graphs are closed under $\psi$).

\begin{definition}
Let $G$ be a graph and let $n=V(G)$. The graph $\psi(G)$ is constructed as follows: 
\begin{enumerate}
\item[(i)] create $n$ disjoint copies $G_1, G_2, \ldots, G_n$ of $G$;
\item[(ii)] add vertices $v_a$ and $v_b$;
\item[(iii)] add an edge between $v_a$ and $v_b$;
\item[(iv)] add all  possible edges between $V(G_1 \cup G_2 \cup \ldots \cup G_n)$ and~$\{v_a,v_b\}$.
\end{enumerate}
\end{definition}

\begin{definition}
A set of graphs~$\mathcal{G}$ is closed under operator $\psi$ if whenever $G \in \mathcal{G}$, then $\psi(G)\in \mathcal{G}$.
\end{definition}


\begin{theorem}\label{framework}
{\sc Largest Bond} is \NP-complete for any graph class $\mathcal{G}$ such that:
\begin{itemize}
\item $\mathcal{G}$ is closed under operator $\psi$; and 
\item {\sc Maximum Cut} is \NP-complete for graphs in $\mathcal{G}$.
\end{itemize}
\end{theorem}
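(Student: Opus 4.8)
The plan is to establish membership in \NP\ and then reduce {\sc Maximum Cut} on $\mathcal{G}$ to {\sc Largest Bond} on $\mathcal{G}$. Membership is immediate: given a candidate set $S$, one checks in polynomial time that $G[S]$ and $G[V\setminus S]$ are both connected and that $|\partial(S)|\ge k$. For hardness, given an instance $(G,k)$ of {\sc Maximum Cut} with $n=|V(G)|$, I would build the instance $(\psi(G),k')$ of {\sc Largest Bond} with $k' = n^2 + nk + 1$. Since $\mathcal{G}$ is closed under $\psi$ we have $\psi(G)\in\mathcal{G}$, and $\psi(G)$ is computable in polynomial time, so it remains to prove that $G$ admits a cut of size at least $k$ if and only if $\psi(G)$ admits a bond of size at least $k'$. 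Writing $\mc(G)$ for the maximum cut value of $G$, the key equality to establish is that the largest bond of $\psi(G)$ has size exactly $n^2 + n\cdot\mc(G) + 1$.

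The heart of the argument is a structural analysis of the bonds of $\psi(G)$, driven by the two apex vertices $v_a,v_b$, each adjacent to every copy-vertex. Let $[S,T]$ be any bond of $\psi(G)$; then $\psi(G)[S]$ and $\psi(G)[T]$ must both be connected. I would split into two cases according to whether $v_a$ and $v_b$ lie on opposite sides or on the same side of the cut. The favorable case is when they are separated, say $v_a\in S$ and $v_b\in T$: then connectivity is automatic, because each copy-vertex in $S$ is adjacent to $v_a$ and each copy-vertex in $T$ is adjacent to $v_b$, so both induced subgraphs are connected no matter how the copy-vertices are distributed. Hence \emph{every} partition separating $v_a$ from $v_b$ is a bond, and its cut-set decomposes cleanly: the edge $v_av_b$ contributes $1$; each of the $n^2$ copy-vertices contributes exactly one crossing edge, namely the one toward whichever of $v_a,v_b$ lies opposite it; and within each copy $G_i$ the crossing edges form the cut induced by the restriction of $[S,T]$ to $V(G_i)$. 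This yields
\[
|\partial(S)| \;=\; 1 + n^2 + \sum_{i=1}^{n}\mathrm{cut}_i,
\]
where $\mathrm{cut}_i$ is the size of the cut that $[S,T]$ induces inside the $i$-th copy.

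Because the copies are pairwise non-adjacent, the quantities $\mathrm{cut}_i$ can be chosen independently, and each is the cut-set of a partition of a graph isomorphic to $G$; hence $\sum_i \mathrm{cut}_i$ is maximized, at $n\cdot\mc(G)$, by giving every copy a maximum cut of $G$. Thus the best separated bond has size exactly $1 + n^2 + n\cdot\mc(G)$, realized by placing, in each copy, one shore of a maximum cut on the $v_a$-side and the other on the $v_b$-side. The remaining and more delicate task is to show that the unseparated case cannot do better. If $v_a$ and $v_b$ lie on the same side, say both in $S$, then $T$ consists solely of copy-vertices; since distinct copies share no edge and $v_a,v_b\notin T$, connectivity of $\psi(G)[T]$ forces $T$ to sit inside a single copy $G_i$. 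Such a cut has size at most $2|T| + |E(G)| \le 2n + \binom{n}{2}$, which is strictly smaller than $1+n^2$ whenever $(n-1)(n-2)>0$, i.e. for all $n\ge 3$ (and $n\ge 3$ may be assumed, as {\sc Maximum Cut} is trivial otherwise). This is precisely where taking $n$ copies rather than a single one is essential: the separated optimum grows like $n^2 + n\cdot\mc(G)$, dwarfing the $O(n^2)$ ceiling of the unseparated case, so every optimal bond must separate $v_a$ from $v_b$.

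Combining the two cases, the largest bond of $\psi(G)$ equals $1 + n^2 + n\cdot\mc(G)$, so it is at least $k' = n^2 + nk + 1$ precisely when $\mc(G)\ge k$. This establishes the reduction, and together with membership in \NP\ and the closure of $\mathcal{G}$ under $\psi$ it proves that {\sc Largest Bond} is \NP-complete on $\mathcal{G}$. I expect the main obstacle to be the clean verification that the same-side case is always dominated: the bound $2n+\binom{n}{2}$ must be compared against $1+n^2$ with care at small $n$, and one should also confirm that every partition separating the apices really is inclusion-wise minimal (hence a genuine bond) and that $\psi(G)$ is produced in polynomial time while remaining inside $\mathcal{G}$.
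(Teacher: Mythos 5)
Your proposal is correct and follows essentially the same route as the paper: the same reduction $(G,k)\mapsto(\psi(G),\,n^2+nk+1)$ and the same case split on whether $v_a$ and $v_b$ are separated by the cut, with the same-side case ruled out because the other shore must sit inside a single copy. The only difference is cosmetic — you pin down the largest bond of $\psi(G)$ as exactly $1+n^2+n\cdot\mc(G)$, whereas the paper uses an averaging argument over the $n$ copies in the separated case; both are sound.
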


\begin{proof}
Let $G\in \mathcal{G}$, $n = |V(G)|$, and $H=\psi(G)$. By~(i), $H\in\mathcal{G}$. Suppose that $G$ has a cut $[S,V(G)\setminus S]$ of size $k$, and let $S_1$, $S_2$, \ldots, $S_n$ be the copies of $S$ in $G_1, G_2, \ldots, G_n$, respectively. If $S'=\{v_a\}\cup S_1\cup S_2 \cup \ldots \cup S_n$, then $[S',V(H)\setminus S']$ defines a bond $\partial(S')$ of~$H$ of size at least $nk+n^2+1$.
Conversely, suppose $H$ has a bond $\partial(S')$ of size at least $nk+n^2+1$. We consider the following cases: $(a)$~If $\{v_a, v_b\}\subseteq S'$, then for all copies $G_i$ but one we have $V(G_i)\subseteq S'$, as otherwise the graph induced by $V(H)\setminus S'$ would not be connected, and~$\partial(S')$ would not be a bond. Thus, $V(H)\setminus S' \subseteq V(G_j)$ for some $j$, then the size of $\partial(S')$ is smaller than $nk+n^2+1$, a contradiction. $(b)$~If $v_a\in S'$ and $v_b\notin S'$,
then  $\{v_a,v_b\}$ is incident with exactly $n^2+1$ edges crossing $[S',V(H)\setminus S']$, which implies that at least one copy $G_i$ has $k$ or more edges crossing $[S',V(H)\setminus S']$. Therefore, $G$ has a cut of size at least $k$. 
\end{proof}

\begin{corollary}\label{subclasses_NPc}
{\sc Largest Bond} is \NP-complete for the following classes:
\begin{enumerate}
  \item chordal graphs;
  \item co-comparability graphs;
  \item $P_5$-free graphs;
  \item AT-free graphs.
\end{enumerate}
\end{corollary}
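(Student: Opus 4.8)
The plan is to verify, for each of the four classes, the two hypotheses of Theorem~\ref{framework}: that the class is closed under $\psi$, and that \textsc{Maximum Cut} is \NP-complete on it. The engine of the closure proofs is a single structural observation: $\psi(G)$ is obtained from the disjoint union $D = G_1 \sqcup \cdots \sqcup G_n$ of $n$ copies of $G$ by appending two \emph{universal} vertices $v_a, v_b$ (adjacent to each other and to everything else). Thus it suffices to check that each class is preserved (a) under disjoint union and (b) under the addition of a single universal vertex, and then to apply (b) twice.

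For disjoint union, in the chordal, $P_5$-free, and AT-free cases the relevant forbidden configuration---a chordless cycle of length at least four, an induced $P_5$, or an asteroidal triple---is connected (in the AT case the three mutually connecting paths force all three vertices into one component), so it must live inside a single copy $G_i \cong G$; hence $D$ inherits membership from $G$. For co-comparability graphs I would argue in the complement: $\overline{D}$ is the \emph{join} of the $\overline{G_i}$, and the join of comparability graphs is again a comparability graph (fix a transitive orientation inside each part and orient all cross edges from the lower-indexed part to the higher-indexed one; a short case check on directed paths of length two shows transitivity), so $D$ is co-comparability.

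For adding a universal vertex $v$, the arguments are local. In the chordal case $v$ cannot lie on a chordless cycle of length at least four, since any nonconsecutive vertex of such a cycle is a neighbor of $v$ and yields a chord. In the $P_5$-free case $v$ cannot belong to an induced $P_5$, since it would be adjacent to the other four vertices and thus have degree four, exceeding the maximum degree two in $P_5$. In the AT-free case $v$ can neither participate in an asteroidal triple (no path avoids $N[v] = V$) nor help create one among other vertices (any witnessing path must avoid the closed neighborhood of the third vertex, which contains the universal $v$). In the co-comparability case $v$ becomes an isolated vertex of the complement, and isolated vertices do not affect transitive orientability. Applying each of these twice establishes closure under $\psi$.

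It remains to invoke \NP-completeness of \textsc{Maximum Cut} on each class, which I would obtain from a hard subclass. Chordal and $P_5$-free graphs both contain the split graphs---trivially for chordal, and for $P_5$-free because split graphs are $2K_2$-free while $P_5$ itself contains an induced $2K_2$---and \textsc{Maximum Cut} is known to be \NP-complete on split graphs. Co-comparability and AT-free graphs both contain the co-bipartite graphs, since the complement of a co-bipartite graph is bipartite and every bipartite graph is a comparability graph (orient all edges across the bipartition), giving co-bipartite $\subseteq$ co-comparability $\subseteq$ AT-free; and \textsc{Maximum Cut} is known to be \NP-complete on co-bipartite graphs. Since hardness on a subclass transfers upward, Theorem~\ref{framework} then yields the four claims. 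I expect the main obstacles to be the co-comparability closure, which unlike the others must be handled through complementation together with the join-of-comparability-graphs lemma, and the correct sourcing of the two base hardness results for \textsc{Maximum Cut} on split and on co-bipartite graphs.
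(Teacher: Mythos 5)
Your proposal is correct and follows the same overall strategy as the paper: invoke Theorem~\ref{framework}, source the hardness of \textsc{Maximum Cut} from the Bodlaender--Jansen results on split and co-bipartite graphs, and verify closure under $\psi$ class by class. Your closure arguments for chordal, $P_5$-free, and AT-free graphs (forbidden configurations are connected, hence survive disjoint union; universal vertices cannot participate in them) are exactly the paper's. The one genuine divergence is the co-comparability case: the paper works directly with the intersection model of curves between two parallel lines, observing that the disjoint union has such a representation and that $v_a,v_b$ can be realized as two concurrent lines crossing every curve, whereas you pass to the complement and use the fact that the join of comparability graphs is a comparability graph (orienting all cross edges consistently), with the universal vertices becoming isolated vertices of the complement. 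Both are valid; your route avoids relying on the geometric characterization and is arguably easier to make fully rigorous, while the paper's is shorter given that characterization. A second cosmetic difference is that you obtain hardness on $P_5$-free graphs from split graphs (via $2K_2$-freeness) where the paper uses co-bipartite graphs; either containment suffices.
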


\begin{proof}
Bodlaender and Jansen~\cite{bodlaender2000complexity} proved that {\sc Maximum Cut} is \NP-complete when restricted to split and co-bipartite graphs. Since split graphs are chordal and co-bipartite graphs are $P_5$-free, AT-free and co-comparability graphs, the \NP-completeness also holds for these classes.
Now we have to show that the classes are closed under $\psi$.

\emph{(1.)} A graph is chordal if every cycle of length at least $4$ has a chord. Let $G$ be a chordal graph. Notice that the disjoint union of $G_1, G_2, \ldots, G_n$ is also chordal. In addition, no chordless cycle of length at least $4$ may contain either $v_a$ or $v_b$ because both vertices are universal. Therefore,~$\psi(G)$ is chordal.

\emph{(2.)} A graph is a co-comparability graph if it is the intersection graph of curves from a line to a parallel line. Let $G$ be a co-comparability graph. Notice that the class of co-comparability graphs is closed under disjoint union. Thus, in order to conclude that $\psi(G)$ is co-comparability, it is enough to observe that from a representation of curves (from a line to a parallel line) of the disjoint union of $G_1, G_2, \ldots, G_n$, one can construct a representation of $\psi(G)$ by adding two concurrent lines (representing $v_a$ and $v_b$) crossing all curves.

\emph{(3.)} The disjoint union of $P_5$-free graphs is also $P_5$-free. In addition, no induced $P_5$ contains either $v_a$ or $v_b$ because both vertices are universal. Then, the class of $P_5$-free graphs is closed under~$\psi$.

\emph{(4.)} Three vertices of a graph form an asteroidal triple if every two of them are connected by a path avoiding the neighbourhood of the third.
A graph is AT-free if it does not contain any asteroidal triple. 
Since an asteroidal triple does not contain universal vertices and it is a connected subgraph, the class of AT-free graphs is closed under~$\psi$.  
\end{proof}


Now we consider a similar result for {\sc Maximum Connected Cut}.

\begin{definition}
Let $G$ be a graph and let $n=V(G)$. The graph $\phi(G)$ is constructed as follows: 
\begin{enumerate}
\item[(i)] create $n$ disjoint copies $G_1, G_2, \ldots, G_n$ of $G$;
\item[(ii)] add a new vertex $v_a$;
\item[(iv)] add exactly one edge from $v_a$ to a vertex of $G_i$, $1\leq i\leq n$.
\end{enumerate}
\end{definition}

At this point, it is easy to see that a graph $G$ has a cut $[S,V(G)\setminus S]$ of size $k$ if and only if $\phi(G)$ has a bond $\partial(S')$ of size at least $nk$. Thus, the following theorem also holds.

\begin{theorem}\label{framework2}
{\sc Maximum Connected Cut} is \NP-complete for any graph class $\mathcal{G}$ such that:
\begin{itemize}
\item $\mathcal{G}$ is closed under operator $\phi$; and 
\item {\sc Maximum Cut} is \NP-complete for graphs in $\mathcal{G}$.
\end{itemize}
\end{theorem}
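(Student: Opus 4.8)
The plan is to establish this theorem in complete parallel with the companion result for \textsc{Largest Bond} (Theorem~\ref{framework}), with the operator $\phi$ playing the role of $\psi$ and \emph{connected cut} replacing \emph{bond}. First I would note that \textsc{Maximum Connected Cut} is in \NP: a proper subset $S'\subset V$ is a certificate, and one checks in polynomial time both that $G[S']$ is connected and that $|\partial(S')|\ge k$.

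For the hardness, fix a class $\mathcal{G}$ that is closed under $\phi$ and on which \textsc{Maximum Cut} is \NP-complete (such classes exist by Bodlaender and Jansen's result, used already in Corollary~\ref{subclasses_NPc}). Given a \textsc{Maximum Cut} instance $(G,k)$ with $G\in\mathcal{G}$ and $n=|V(G)|$, I would output the \textsc{Maximum Connected Cut} instance $(\phi(G),nk)$. Closure under $\phi$ gives $\phi(G)\in\mathcal{G}$, and since $\phi(G)$ has only $n^2+1$ vertices it is built in polynomial time. Correctness is exactly the equivalence recorded immediately before the statement: $G$ has a cut of size $k$ if and only if $\phi(G)$ has a connected cut of size at least $nk$. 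Granting this equivalence, $(G,k)$ is a yes-instance of \textsc{Maximum Cut} precisely when $(\phi(G),nk)$ is a yes-instance of \textsc{Maximum Connected Cut}, which together with membership in \NP yields \NP-completeness on $\mathcal{G}$.

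What remains is to justify that equivalence, and this is where I would actually do the work, imitating the case analysis in the proof of Theorem~\ref{framework}. For the forward direction, starting from a cut $[S,V(G)\setminus S]$ of size $k$ and writing $S_1,\dots,S_n$ for the copies of $S$ in $G_1,\dots,G_n$, I would take $S'=\{v_a\}\cup S_1\cup\cdots\cup S_n$, verify that $\phi(G)[S']$ is connected through the hub $v_a$, and count that each copy contributes its $k$ crossing edges, giving $|\partial(S')|\ge nk$. For the converse, given a connected cut $\partial(S')$ of size at least $nk$, I would split on the position of $v_a$: if $v_a\notin S'$, then connectivity of $G[S']$ forces $S'$ to lie inside a single copy $G_j$, so $|\partial(S')|$ is bounded by the maximum cut of one copy (plus at most one edge at $v_a$) and cannot reach $nk$; hence $v_a\in S'$, and then a counting/averaging argument over the $n$ copies isolates a copy whose induced cut has size at least $k$, which is a cut of size at least $k$ in $G$.

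The main obstacle I anticipate is precisely this converse direction: one must track how the single hub vertex $v_a$ interacts with the connectivity requirement across the $n$ copies and confirm that the threshold $nk$ cleanly separates the two configurations, so that every ``$v_a$ outside'' arrangement stays strictly below $nk$ while the genuine per-copy contributions accumulate to at least $nk$ only when some copy realizes a cut of size $k$. This is the analogue of the case distinction on $\{v_a,v_b\}$ in Theorem~\ref{framework}. Once that bookkeeping is in place the argument is routine, which is exactly why the theorem follows directly from the pre-established equivalence.
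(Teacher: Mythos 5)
Your high-level plan coincides with the paper's: the paper also reduces from \textsc{Maximum Cut} via $\phi$ and rests entirely on the asserted equivalence ``$G$ has a cut of size $k$ iff $\phi(G)$ has a connected cut of size at least $nk$,'' which it does not prove. The trouble is in the forward direction, which you treat as routine but which is exactly where the argument breaks. You set $S'=\{v_a\}\cup S_1\cup\cdots\cup S_n$ and claim $\phi(G)[S']$ is ``connected through the hub $v_a$.'' But $\phi$ joins $v_a$ to each copy $G_i$ by \emph{exactly one} edge, to a single attachment vertex $u_i$; so $S_i$ is reachable from $v_a$ inside $S'$ only if $u_i\in S_i$, and even then only the component of $G_i[S_i]$ containing $u_i$ is reached. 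A maximum-cut side $S$ of $G$ generally induces a disconnected subgraph and need not contain the attachment vertex, so this $S'$ is in general not a feasible solution of \textsc{Maximum Connected Cut}. (Your converse direction, and the observation that $v_a\notin S'$ confines $S'$ to one copy, are essentially fine.)

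The gap is not merely one of presentation: with $\phi$ as defined, the forward implication is false. For any connected $S'\ni v_a$, each $S'\cap V(G_i)$ must be empty or induce a connected subgraph of $G_i$ containing $u_i$, so $|\partial(S')|\le\sum_{i}\max\{1,\mu(G,u_i)\}$, where $\mu(G,u)$ denotes the largest value of $|\partial(A)|$ over connected sets $A\ni u$; this quantity can be far below the maximum cut. Concretely, for $G=C_4$ we have $n=4$ and maximum cut $k=4$, hence $nk=16$, yet every nonempty proper connected subset of $C_4$ has exactly $2$ boundary edges, so the maximum connected cut of $\phi(C_4)$ equals $8$. Thus the threshold $nk$ cannot be met, and neither your argument nor the paper's one-line justification establishes the claimed equivalence. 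To repair the reduction one must change the construction (for instance, joining $v_a$ to every vertex of every copy, as $\psi$ does, makes $\{v_a\}\cup S_1\cup\cdots\cup S_n$ automatically connected, though the target value then has to be recomputed to account for the edges at $v_a$) rather than merely tighten the bookkeeping you describe.
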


\subsection{Algorithmic lower bound for clique-width parameterization}

The {\em clique-width} of a graph~$G$, denoted by~$cw(G)$, is defined as the minimum number of labels needed to construct~$G$, using the following four operations:

\begin{enumerate}
  \item Create a single vertex~$v$ with an integer label~$\ell$ (denoted by~$\ell(v)$);

  \item Take the disjoint union (i.e., co-join) of two graphs (denoted by~$\oplus$);

  \item Join by an edge every vertex labeled~$i$ to every vertex labeled~$j$ for~$i \neq j$ (denoted by~$\eta(i,j)$);

  \item Relabel all vertices with label~$i$ by label~$j$ (denoted by~$\rho(i,j)$).
\end{enumerate}

An algebraic term that represents such a construction of $G$ and uses at most~$w$ labels is said to be a {\em $w$-expression} of $G$, and the clique-width of $G$ is the minimum~$w$ for which $G$ has a $w$-expression.

Graph classes with bounded clique-width include cographs, distance-hereditary graphs, graphs of bounded treewidth, graphs of bounded branchwidth, and graphs of bounded rank-width.

In the '90s,  Courcelle, Makowsky, and Rotics~\cite{courcelle2000linear} proved that all problems expressible in MS$_1$-logic are fixed-parameter tractable when parameterized by the cli\-que-width of a graph and the logical expression size.
The applicability of this meta-theorem has made clique-width become one of the most studied parameters in parameterized complexity. However, although several problems are MS$_1$-expressible, this is not the case with {\sc Maximum Cut}.

In 2014, Fomin, Golovach, Lokshtanov and Saurabh~\cite{fomin2014almost} showed that {\sc Maximum Cut} on a graph of clique-width $w$ cannot be solved in time $f(w)\times n^{o(w)}$ for any function $f$ of $w$ unless Exponential Time Hypothesis (ETH) fails. Using operators $\psi$ and $\phi$, we are able to extend this result to {\sc Largest Bond} and {\sc Maximum Connected Cut}.

\begin{lemma}\label{lb_cw}
{\sc Largest Bond} and {\sc Maximum Connected Cut} on graphs of clique-width $w$ cannot be solved in time $f(w)\times n^{o(w)}$ unless the ETH fails.
\end{lemma}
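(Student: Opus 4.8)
The plan is to transfer the clique-width lower bound of Fomin, Golovach, Lokshtanov and Saurabh~\cite{fomin2014almost} for {\sc Maximum Cut} through the operators $\psi$ and $\phi$ already used in Theorems~\ref{framework} and~\ref{framework2}. Recall that these reductions send a graph $G$ on $n$ vertices to $\psi(G)$ (resp.\ $\phi(G)$) on $N = n^2 + O(1)$ vertices, so that {\sc Maximum Cut} on $G$ is equivalent to {\sc Largest Bond} (resp.\ {\sc Maximum Connected Cut}) on the image. To lift an XP lower bound through a reduction, the two quantities I must control are how the clique-width changes under $\psi$ and $\phi$, and how the vertex count grows.

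First I would bound $cw(\psi(G))$ in terms of $cw(G)$. Starting from a $w$-expression of $G$ with $w = cw(G)$, the disjoint union $G_1 \oplus \cdots \oplus G_n$ also admits a $w$-expression, since the co-join $\oplus$ is a free operation and each copy can reuse the same $w$ labels. After building the union, I would relabel every vertex to a single label and then install $v_a$ and $v_b$ using one more label: collapse everything to label $1$, create $v_a$ with label $2$, apply $\eta(1,2)$, relabel $v_a$ back to $1$, create $v_b$ with label $2$, and apply $\eta(1,2)$ again; this realizes both universal vertices together with the edge $v_a v_b$. Hence $cw(\psi(G)) \le \max\{cw(G),\, 2\}$. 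For $\phi$, the vertex $v_a$ is not universal but attaches to a single designated vertex of each copy, so I would instead reserve one extra label for the attachment points: build each $G_i$ so that its designated vertex retains a private label, take the disjoint union, then create $v_a$ on a fresh label and fire a single join. This gives $cw(\phi(G)) \le cw(G) + O(1)$.

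With these bounds in hand, I would argue by contradiction. Suppose {\sc Largest Bond} on graphs of clique-width $w'$ could be solved in time $f(w')\cdot N^{o(w')}$. Given a {\sc Maximum Cut} instance $G$ of clique-width $w$ on $n$ vertices, I build $\psi(G)$, which has $N = n^2 + O(1)$ vertices and clique-width $w' \le w + O(1)$, and solve {\sc Largest Bond} on it. The running time is $f(w+O(1))\cdot (n^2)^{o(w+O(1))}$; since an additive constant in the argument does not change the little-o class and $(n^2)^{o(w)} = n^{o(w)}$, this is $f'(w)\cdot n^{o(w)}$ for some function $f'$. Such an algorithm for {\sc Maximum Cut} would, by the result of~\cite{fomin2014almost}, refute the ETH. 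Replacing $\psi$ by $\phi$ and {\sc Largest Bond} by {\sc Maximum Connected Cut} gives the analogous statement for the connected-cut problem.

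The step I expect to require the most care is the clique-width bound for $\phi$: because $v_a$ is adjacent to exactly one vertex of each copy rather than to all of them, I cannot simply collapse the copies to a single label, and I must verify the auxiliary fact that a single designated vertex of $G_i$ can be kept on a private label throughout the $w$-expression of that copy at the cost of only one extra label (tracking that vertex's ``virtual'' label and duplicating, onto the private label, precisely the joins it participates in). The remainder is bookkeeping on how $n$ and $w$ propagate through the quadratic blow-up, where the only thing to check is that the $n \mapsto n^2$ growth is absorbed into the exponent without escaping the $n^{o(w)}$ regime.
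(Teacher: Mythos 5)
Your proposal is correct and follows essentially the same route as the paper: transfer the lower bound of Fomin et al.\ through $\psi$ and $\phi$, showing $cw(\psi(G))\le\max\{cw(G),2\}$ via universal vertices on a second label and $cw(\phi(G))\le cw(G)+1$ by reserving one label for the attachment vertices. Your treatment is in fact slightly more careful than the paper's, since you explicitly verify that the quadratic blow-up in the number of vertices is absorbed into the $n^{o(w)}$ bound, a step the paper leaves implicit.
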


\begin{proof}
{\sc Maximum Cut} cannot be solved in time $f(w)\times n^{o(w)}$ on graphs of clique-width $w$, unless Exponential Time Hypothesis (ETH) fails~\cite{fomin2014almost}. Therefore, by the polynomial-time reduction presented in Theorem~\ref{framework} and Theorem~\ref{framework2}, it is enough to show that the clique-width of $\psi(G)$ and $\phi(G)$ is upper bounded by a linear function of the clique-width of $G$.

If $G$ has clique-width $w\geq 2$, then the disjoint union $H_1 = G_1\oplus G_2 \oplus \ldots \oplus G_n$ has clique-width~$w$.

For $\psi(G)$, suppose that all vertices in $H_1$ have label $1$.
Now, let $H_2$ be the graph isomorphic to a $K_2$ such that $V(H)=\{v_a,v_b\}$, and $v_a,v_b$ are labeled with $2$.
In order to construct~$\psi(G)$ from $H_1\oplus H_2$ it is enough to apply the join $\eta(1,2)$. Thus, $\psi(G)$ has clique-width equals~$w$.

For $\phi(G)$, note that from a $w$-expression of $H_1$ we can obtain a $w+1$-expression of $H_1$ resulting into a labelled graph such that the neighborhood of $v_a$ are exactly the vertices with label $w+1$ and all the other vertices of $H_1$ have label $1$. Thus we can add $v_a$ with label $2$ and apply the join $\eta(1,2)$, which implies that $\phi(G)$ has clique-width at most~$w+1$. 
\end{proof}

\subsection{On planar bipartite graphs}

Although {\sc Maximum Cut} is trivial for bipartite graphs, we first observe that the same does not apply to compute the largest bond.

\begin{theorem}\label{thm:bipartite:maxmin}
\textsc{Largest Bond} is NP-complete even on planar bipartite subcubic graphs.
\end{theorem}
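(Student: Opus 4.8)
The plan is to reduce from a known NP-hard problem whose structure is already planar and constrained, and to exploit the planar duality between bonds and cycles that was mentioned earlier in the excerpt. Recall the folklore fact quoted above: in a connected planar graph, a set of edges forms a bond if and only if its dual edge set forms a cycle in the planar dual. Since a largest bond corresponds to a longest cycle in the dual, the natural route is to start from \textsc{Longest Cycle} (equivalently \textsc{Hamiltonian Cycle}) on a class of planar graphs whose dual is planar bipartite and subcubic. First I would recall that a planar graph has maximum degree at most $3$ precisely when its dual has all faces of length at most $3$, and a planar graph is bipartite precisely when its dual is Eulerian (every vertex of even degree); translating the desired target properties ``planar bipartite subcubic'' through duality tells me exactly which source class of planar graphs to reduce from.

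Concretely, I would try to reduce from \textsc{Hamiltonian Cycle} restricted to a suitable family of planar graphs. The cleanest candidate is \textsc{Hamiltonian Cycle} on planar cubic (3-regular) graphs, or on 3-connected planar graphs as invoked earlier for the Garey--Johnson--Tarjan hardness; these are exactly the hypotheses that make the dual simple and well-behaved. Given a plane graph $G$ where we seek a Hamiltonian cycle, its dual $G^*$ is the instance I would hand to \textsc{Largest Bond}: a bond of $G^*$ of size $k$ corresponds to a cycle of length $k$ in $G$, so a bond of size $|V(G)|$ exists iff $G$ has a Hamiltonian cycle. The remaining work is to guarantee that $G^*$ (or a lightly modified version of it) is simultaneously planar, bipartite, and subcubic. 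Subcubicity of $G^*$ amounts to $G$ being a triangulation-like graph where every face is a triangle; bipartiteness of $G^*$ amounts to $G$ being Eulerian (all degrees even). Since these two target constraints pull the source graph in incompatible directions, I would not expect a single clean dual class to work, and would instead introduce local gadgets: subdivide or replace each edge/vertex of the source instance by a small planar bipartite subcubic widget that preserves cut sizes up to a controlled additive or multiplicative shift, so that a cut/bond in the gadgeted graph of the target size corresponds to the desired cycle in the original.

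The key steps, in order, would be: (1) fix the source problem as \textsc{Hamiltonian Cycle} on an appropriate planar (cubic or 3-connected) class and recall its NP-completeness; (2) build the dual or a gadget-augmented graph $H$ and verify membership in planar bipartite subcubic graphs --- this is the bookkeeping step where each gadget is checked for the degree bound, for the absence of odd cycles (a proper 2-coloring), and for a valid plane embedding; (3) prove the forward direction, showing a Hamiltonian cycle (or large cut) yields a bond of the target size, by exhibiting the explicit partition $S,\overline{S}$ with $G[S]$ and $G[\overline{S}]$ both connected; (4) prove the converse, showing any bond of size at least $k$ forces the corresponding cycle/structure in the source, which is where the connectivity of \emph{both} sides of the bond does the real work of ruling out spurious large disconnecting sets. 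Finally I would confirm membership of \textsc{Largest Bond} in NP, which is immediate since a witness $S$ can be verified in polynomial time by checking connectivity of $G[S]$ and $G[V\setminus S]$ and counting $|\partial(S)|$.

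The hard part will be step (2) together with the converse in step (4): simultaneously enforcing \emph{planarity}, \emph{bipartiteness}, and the \emph{subcubic} degree bound is delicate, because the most natural dual construction fails at least one of the three, and any edge gadget I insert to fix bipartiteness (e.g.\ subdividing edges to kill odd cycles) changes the cut-set sizes and, more dangerously, can create new small bonds or destroy the biconnectivity needed for each side to stay connected. I would therefore expect the technical core of the proof to be the design of a single gadget --- likely a constant-size planar bipartite subcubic piece replacing each original vertex of high or odd degree --- engineered so that (a) the maximum bond of the whole graph is achieved \emph{only} by partitions respecting gadget boundaries, and (b) the target threshold $k$ is calibrated so that hitting it is equivalent to the Hamiltonicity/large-cut condition in the source. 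Getting both the planarity-preserving embedding and the exact size accounting right for this gadget is where the bulk of the argument lives.
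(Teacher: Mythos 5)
Your proposal stops short of a proof: the step you defer as ``the technical core'' --- designing a gadget that simultaneously preserves planarity, enforces bipartiteness and the degree bound, and controls the bond size --- is exactly the part that has to be done, and you do not do it. Worse, you explicitly flag the one construction that closes the argument as too dangerous to use. The paper's proof is two lines: start from the already-known NP-completeness of \textsc{Largest Bond} on \emph{planar cubic} graphs (Haglin and Venkatesan), and subdivide every edge once. Subdivision preserves planarity, keeps the maximum degree at $3$ (new vertices have degree $2$), makes the graph bipartite (every cycle length doubles), and --- this is the point you misjudge --- preserves the size of the largest bond \emph{exactly}, not ``up to a controlled additive or multiplicative shift.'' Given a bond $[S,T]$ of $G$, assign each subdivision vertex of an edge $\{u,v\}$ to the side containing $u$; both sides stay connected and exactly one of the two half-edges crosses per original crossing edge. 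Conversely, in any bond of the subdivided graph of size at least $3$, a degree-$2$ subdivision vertex cannot lie on the opposite side from both of its neighbours (it would then be a whole side by itself, giving a bond of size $2$), so restricting the partition to the original vertices recovers a bond of $G$ of the same size. Your worry that subdivision ``can create new small bonds or destroy the biconnectivity needed for each side to stay connected'' is unfounded.

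There is also a structural inefficiency in your plan: you propose to reduce from \textsc{Hamiltonian Cycle} and re-derive the bond/cycle duality, which forces you into the incompatible dual constraints you correctly identify (subcubic dual wants triangulated faces, bipartite dual wants Eulerian primal). That tension is real, but it is a consequence of choosing the wrong starting point. The duality argument has already been absorbed into the Haglin--Venkatesan result for planar cubic graphs; once you take \textsc{Largest Bond} on that class as the source problem, no dual graph and no vertex gadget is needed, and the entire reduction is a single edge subdivision. Membership in NP you handle correctly.
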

\begin{proof}
In~\cite{Haglin}, Haglin and Venkatesan proved that \textsc{Largest Bond} remains NP-com\-ple\-te on planar cubic graphs. Since subdivision of edges does not increase the size of the largest bond, by subdividing each edge of a planar cubic graph $G$ we obtain planar bipartite subcubic graph $G'$ such that $G$ has a bond of size $k$ if and only if $G'$ has a bond of size $k$. 
\end{proof}

\begin{theorem}\label{w1hard}
Let $G$ be a simple bipartite graph and $\ell\in \mathbb{N}$. To determine the largest bond $\partial(S)$ of $G$ with $|S|=\ell$ is $W[1]$-hard with respect to $\ell$.
\end{theorem}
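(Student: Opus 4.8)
The plan is to reduce from \textsc{Multicolored Clique} (equivalently \textsc{Clique}), which is $W[1]$-hard parameterized by the number of colors $k$; this strengthens to the bipartite setting the general-graph hardness that \textsc{Largest Bond} parameterized by $|S|$ inherits from Saurabh and Zehavi's \textsc{Multi-Node Hub}~\cite{SaurabhZ19}. The guiding identity is $|\partial(S)|=\sum_{v\in S}\deg_G(v)-2|E(G[S])|$. The idea is to build a bipartite graph in which the vertices that encode the choice of a clique carry a large private reward (many pendant leaves), while the connections needed to keep $G[S]$ connected are cheap. Then a connected set of the prescribed size attains a large cut precisely when it selects $k$ mutually adjacent reward-vertices, i.e.\ a clique, because any other way of achieving connectivity must spend the budget on internal edges that shrink the cut.

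Concretely, from an instance $H$ of \textsc{Clique} I would, for each vertex $u\in V(H)$, create a selector vertex $x_u$ with a large bundle of pendant leaves, and for each edge $uv\in E(H)$ a single subdivision vertex $m_{uv}$ adjacent to $x_u$ and $x_v$. Putting the selectors on one side and the leaves together with the vertices $m_{uv}$ on the other side makes $G$ bipartite. For a $k$-clique $C$, the set $\{x_u : u\in C\}\cup\{m_{uv} : u,v\in C\}$ induces a subdivided clique, which is connected, so I would set $\ell=k+\binom{k}{2}$ to force exactly this size; a global backbone attached to all gadgets and kept in the complement guarantees that $V(G)\setminus S$ is connected, so that $\partial(S)$ is a genuine bond. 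The threshold $k'$ would be the cut value produced by an honest clique.

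For correctness, the forward direction is immediate from the construction, and the parameter $\ell=k+\binom{k}{2}$ is a function of $k$ only. For the converse I would argue that any bond with $|S|=\ell$ and $|\partial(S)|\ge k'$ must spend its budget on exactly $k$ selectors together with $\binom{k}{2}$ connectors $m_{uv}$ both of whose endpoints are selected; since each $m_{uv}$ exists only for an edge of $H$, the $k$ chosen vertices are pairwise adjacent and form a clique. The main obstacle, and where I expect most of the work to lie, is calibrating the gadget so that maximization is truly aligned with clique detection: connectivity forces $G[S]$ to contain at least $|S|-1$ internal edges, and these edges \emph{decrease} the cut, so the leaf rewards and the degrees must be tuned so that the only way to reach the target within the size budget is to realize the required internal edges as connectors between already-selected reward-vertices (which a clique supplies cheaply) rather than squandering the budget on the abundant sparse, tree-like selections that yield a smaller total reward. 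Simultaneously preserving bipartiteness, the exact size $|S|=\ell$, and the second connectivity condition of the bond on $V(G)\setminus S$ is the delicate point, and the heart of the proof will be showing that no non-clique selection can meet the threshold.
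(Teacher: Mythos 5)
Your construction has a genuine gap, and it sits exactly where you flag ``the heart of the proof'': in your gadget the clique configuration is \emph{not} the maximizer. Using $|\partial(S)|=\sum_{v\in S}\deg_G(v)-2|E(G[S])|$, fix any choice of $k$ selectors in a $d$-regular $H$: the degree sum $k(d+L)+2\binom{k}{2}$ is the same for every choice of $\binom{k}{2}$ non-selector vertices, so the cut is maximized by \emph{minimizing} internal edges. A clique selection puts all $\binom{k}{2}$ connectors in $S$ with both endpoints in $S$, giving $|E(G[S])|=2\binom{k}{2}$ and cut $k(d+L)-k(k-1)$; but $k$ selectors that merely induce a \emph{connected} subgraph of $H$ can be joined by a spanning tree of $k-1$ full connectors, with the remaining $\binom{k}{2}-(k-1)$ slots filled by connectors having exactly one endpoint in $S$ (each contributing $2-2=0$ net), giving cut $k(d+L)-2(k-1)$, which is larger by $(k-1)(k-2)$. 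So any threshold attainable by a clique is also attained by a mere connected $k$-set, and no tuning of the leaf bundles can help, since both configurations contain the same selectors and hence the same leaf reward. The one mechanism that could rescue the reduction --- a connector $m_{uv}$ with both neighbours selected but $m_{uv}\notin S$ becomes isolated in $G[V\setminus S]$, so the bond condition \emph{forces} it into $S$ --- is neutralized by your global backbone if the backbone touches the connectors; and if it does not, the tree-like selections above (where the chosen selectors induce exactly a tree in $H$) trigger no forcing at all and still win. Note also that pendant leaves of a selected $x_u$ would themselves be isolated in the complement unless the backbone reaches them, so the backbone cannot simply be omitted.

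The paper's proof avoids both problems by reducing from \textsc{$k$-Independent Set} on $d$-regular graphs and, crucially, first adding an edge between \emph{every} pair of vertices of $H$ (so adjacent pairs get two parallel connectors after subdivision, non-adjacent pairs get one). Then, for any $k$ chosen originals, complement-connectivity forces \emph{all} $\binom{k}{2}+|E(H[C])|$ subdivision vertices between chosen pairs into $S$, and the exact budget $\ell=k+\binom{k}{2}$ is respected only when $|E(H[C])|=0$, i.e.\ when $C$ is independent; regularity makes the resulting cut value $dk+k(n-k)$ uniform over all such $C$. In short, the target pattern must be the one with the \emph{fewest} internal edges among the structurally forced configurations, which means reducing from independence (or complementing $H$) and overlaying the complete graph of connectors --- not rewarding cliques, which \textsc{Largest Bond} inherently penalizes.
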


\begin{proof}
From an instance $H$ of {\sc $k$-Independent Set} on regular graphs we first construct a multigraph $G'$ by adding an edge between any pair of vertices.
Finally, we obtain a simple graph $G$ by subdividing every edge of $G'$.
Notice that $H$ has an independent set of size $k$ if and only if $G$ has a bond $\partial(S)$ of size $dk+k(n-k)$ with $|S|=k + {\genfrac(){0pt}{2}{k}{2}}$, where $d$ is the vertex degree of $H$. 
\end{proof}

Now, we consider {\sc Maximum Connected Cut} on planar bipartite graphs.

\begin{figure}[tbp]
  \begin{center}
       \includegraphics[width=80mm]{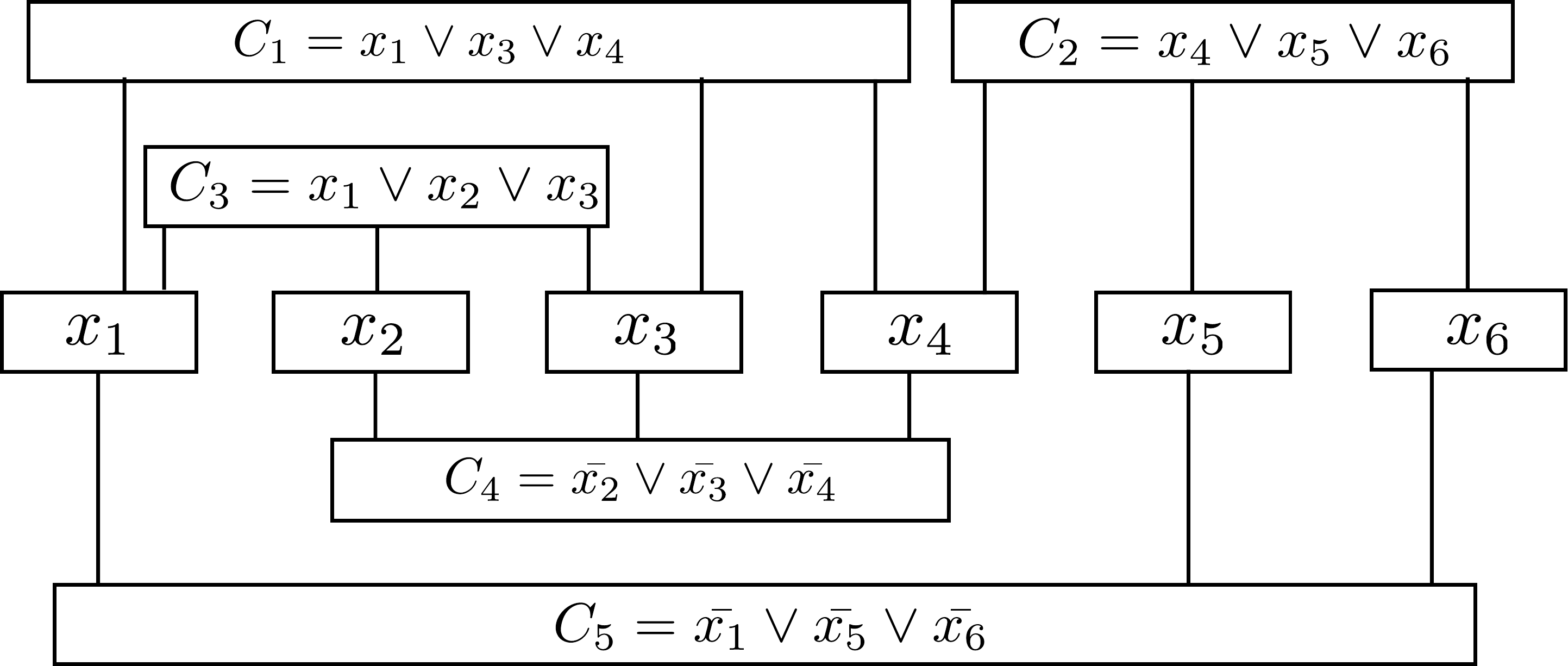}
  \end{center}
  \begin{center}
  a) {A monotone rectilinear representation of an instance of {\sc Planar Monotone Rectilinear 3-SAT}.}
  \end{center}
  \begin{center}
   \includegraphics[width=\textwidth]{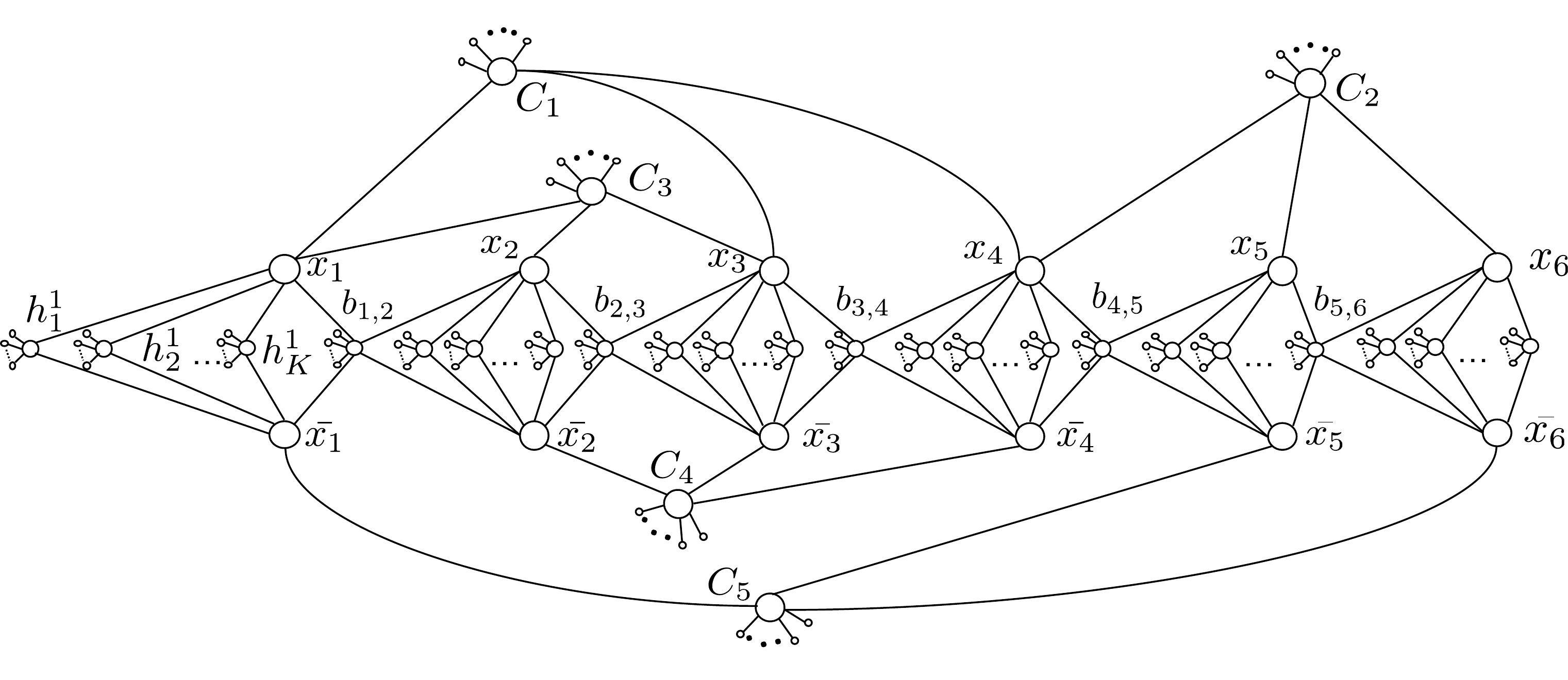}
  \end{center}
  \begin{center}
  b) {The reduced graph $H_\phi$. }
  \end{center}
  \caption{An example illustrating (a) the rectilinear representation of an formula $\phi=(x_1\lor x_3 \lor x_4)\land (x_4\lor x_5 \lor x_6)\land (x_1\lor x_2 \lor x_3)\land (\bar{x_2}\lor \bar{x_3} \lor \bar{x_4})\land(\bar{x_1}\lor \bar{x_5} \lor \bar{x_6})$ of \textsc{Planar Monotone Rectilinear 3-SAT} and (b) the reduced graph $H_\phi$.}
 \label{PM3SATtoCMConPB}
\end{figure}

\begin{theorem}\label{thm:bipartite:Connected}
\textsc{Maximum Connected Cut} is NP-complete on planar bipartite graphs.
\end{theorem}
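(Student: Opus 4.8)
The plan is to reduce from \textsc{Planar Monotone Rectilinear 3-SAT}, a restriction of 3-SAT in which every clause is either all-positive or all-negative and the formula admits a rectilinear layout (variables on a horizontal line, positive clauses drawn as rectangles above the line and negative clauses below, each connected to its variables by vertical segments), as illustrated in Figure~\ref{PM3SATtoCMConPB}(a). This problem is known to be \NP-complete, and its geometric structure is exactly what lets us produce a \emph{planar} and \emph{bipartite} instance. Given a formula $\phi$ with variables $x_1,\dots,x_n$ and clauses $C_1,\dots,C_m$, I would build the graph $H_\phi$ of Figure~\ref{PM3SATtoCMConPB}(b) by placing a variable gadget for each $x_i$ along a spine and a clause gadget for each $C_j$ in the half-plane (above or below) dictated by the layout.

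First I would design the variable gadgets so that any large connected cut is forced to make a binary choice per variable. Each variable gadget exposes a ``positive'' attachment point and a ``negative'' attachment point, and it is wired to a common connected backbone that any large connected cut must contain in $S$; the connectivity requirement on $G[S]$ then propagates the backbone through all gadgets and forces $S$, within each variable gadget, to commit to either the positive or the negative side. Reading off this commitment gives the truth assignment: extending $S$ toward the positive side of $x_i$ encodes $x_i=\texttt{true}$. Edges internal to each gadget are subdivided as needed so that the whole construction is bipartite, while the planar rectilinear embedding guarantees that no edges cross, so $H_\phi$ is planar bipartite. The clause gadget for $C_j$ is attached to the three literal-endpoints of its variables so that it contributes a fixed bonus of crossing edges precisely when at least one of its literals lies on the chosen side of $S$, i.e.\ when $C_j$ is satisfied. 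With the threshold $k$ set to the baseline number of crossing edges produced by the variable gadgets plus one bonus per clause, the forward direction is routine: from a satisfying assignment I let $S$ consist of the backbone together with, for each $x_i$, the side selected by the assignment and, for each clause, the portion that attaches to a satisfied literal; this $S$ induces a connected subgraph and realizes exactly $k$ crossing edges.

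The main obstacle, as usual for such reductions, is the \emph{backward} direction: given an arbitrary connected cut $[S,V(H_\phi)\setminus S]$ with $|\partial(S)|\ge k$, I must argue it corresponds to a genuine assignment. The difficulty is that $S$ need not respect the gadget boundaries — it could contain half of one gadget, or fragments of several — and $G[S]$ only has to be connected (the complement may be disconnected), so fewer structural constraints are available than in a bond reduction. I would handle this with a normalization (exchange) argument: show that any connected cut can be transformed, without decreasing $|\partial(S)|$ and without violating connectivity of $G[S]$, into a ``canonical'' cut that is faithful to each variable gadget (choosing exactly one side) and to each clause gadget. The gadgets must therefore be engineered so that any non-canonical inclusion either disconnects $G[S]$ or strictly loses crossing edges, making canonical cuts optimal.

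Once canonicity is established, the count $|\partial(S)|\ge k$ forces every clause bonus to be collected, hence every clause to be satisfied, completing the equivalence. Since $H_\phi$ has size polynomial in $|\phi|$ and is constructible in polynomial time, this yields \NP-hardness on planar bipartite graphs, and membership in \NP is immediate because a certificate $S$ can be verified for connectivity of $G[S]$ and for $|\partial(S)|\ge k$ in polynomial time.
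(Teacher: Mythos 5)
Your overall plan coincides with the paper's: both reduce from \textsc{Planar Monotone Rectilinear 3-SAT}, both use a connected ``backbone'' that forces a per-variable binary choice, and both settle the backward direction by an exchange/normalization argument. But the proposal stops at the point where the actual work begins, and two of the steps you wave at would not go through as described. First, the gadgets are never constructed, and the construction is where all the difficulty lives. The paper's exchange arguments only work because of a carefully tuned weighting scheme realized by pendant vertices: each of the $K$ helper vertices per variable carries $K$ pendants (so refusing to put a helper in $S$ costs $\Theta(K)$ crossing edges), each clause vertex carries only $\sqrt{K}$ pendants, and $K>4m^2$. The exponent gap between $K$ and $\sqrt{K}$ is essential: when one shows that at most one literal vertex per variable lies in $S$, moving a literal vertex out of $S$ may disconnect up to $m$ clause vertices, which must be dragged out too at a loss of at most $m(\sqrt{K}+3)$ edges, and this must be dominated by the $K+1$ edges gained. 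Your proposal asserts that ``the gadgets must be engineered so that any non-canonical inclusion either disconnects $G[S]$ or strictly loses crossing edges'' --- that is precisely the claim that needs a concrete gadget and a counting argument, and without exhibiting one the proof is not there.

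Second, your proposed route to bipartiteness --- ``edges internal to each gadget are subdivided as needed'' --- is unsound for \textsc{Maximum Connected Cut}, even though it is exactly the trick the paper uses for \textsc{Largest Bond}. If an edge $\{u,v\}$ has both endpoints in $S$ and you subdivide it into $u$--$w$--$v$, then $w$ may be placed in $V\setminus S$, which is permitted because only $G[S]$ must be connected, and this gains two crossing edges. So subdivision can strictly increase the optimal connected-cut value in an uncontrolled way, and the equivalence between the original and subdivided instances breaks. (For bonds this defection is blocked because $G[V\setminus S]$ must also be connected.) The paper avoids this entirely by making $H_\phi$ bipartite by construction: helper, bridge, and clause vertices have neighbours only among literal vertices, and pendants attach to a single vertex, so $\{\text{literal, pendant}\}$ versus $\{\text{helper, bridge, clause}\}$ is already a bipartition. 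To complete your proof you would need to replace the subdivision step with gadgets that are bipartite as built, and then supply the quantitative exchange arguments that your normalization step presupposes.
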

\begin{proof}
The reduction is based on the proof of Theorem $4$ in \cite{HajiaghayiKMPS15}, which proves that \textsc{Maximum Connected Cut} is NP-hard on planar graphs.
We transform the planar reduced graph in \cite{HajiaghayiKMPS15} into our planar bipartite graph by using additional vertices, called {\em bridge vertices}.

In this proof, we reduce an instance of \textsc{Planar Monotone Rectilinear 3-SAT}, which is known to be NP-complete~\cite{Berg2009}, to a planar bipartite instance of \textsc{Maximum Connected Cut}.
An instance of \textsc{Planar Monotone Rectilinear 3-SAT} consists of a 3-CNF formula $\phi$ satisfies the following properties: (a) each clause contains either only positive literals or only negative literals, (b) the bipartite incidence graph $G_{\phi}$ is planar, and (c) $G_{\phi}$ has a {\em monotone rectilinear representation}.
In a monotone rectilinear representation of $G_{\phi}$, the variable vertices are drawn on a straight line in the order of their indices and each positive (resp., negative) clause vertex is drawn in the ``positive side'' (resp., ``negative side'') of the plane defined by the straight line (See Figure~\ref{PM3SATtoCMConPB}).

\noindent\textbf{The Reduction.}
Given a formula $\phi$  of \textsc{Planar Monotone Rectilinear 3-SAT} with a monotone rectilinear representation as in Figure~\ref{PM3SATtoCMConPB} (a), let $X=\{x_1, x_2,\dots, x_n\}$ be a set of variables and ${\mathcal C}=\{C_1, C_2, \dots, C_m\}$ be a set of clauses.
Let $K>4m^2$ and $m>2$. 
Then we create the graph $H_{\phi}=(V,E)$ as follows (see Fig.\ref{PM3SATtoCMConPB}). 
For each variable $x_i\in X$, we create two {\em literal vertices} $v(x_i)$ and $v(\bar{x}_i)$ corresponding to the literals $x_i$ and $\bar{x}_i$, respectively. 
Moreover, we add $K$ {\em helper vertices} $h^i_1, \ldots, h^i_{K}$ and connect $h^i_{k}$ to $v(x_i)$ and $v(\bar{x}_i)$ for each $k=1, \ldots, K$.
For every clause $C_j\in {\mathcal C}$, we create a {\em clause vertex} $v(C_j)$ and connect $v(x_i)$ (resp., $v(\bar{x}_i)$) to $v(C_j)$ if $C_j$ contains $x_i$ (resp., $\bar{x}_i$).
Moreover, we attach $\sqrt{K}$ pendant vertices to each $v(C_j)$.

Then we attach $K$ pendant vertices to each helper vertex  $h^i_k$. 
Finally, we add a {\em bridge vertex} $b_{i,i+1}$ with  $K$ pendant vertices that we make adjacent to each  $v(x_{i})$, $v(\bar{x}_{i})$, $v(x_{i+1})$, and $v(\bar{x}_{i+1})$ for $1\le i\le n-1$. 
We denote by $H_\phi$ the graph we obtained.
Notice that we can draw $H_\phi$ in the plane according to a monotone rectilinear representation.
Moreover, $H_\phi$ is bipartite since we only add helper and bridge vertices, which have a neighbor only in literal vertices, and pendant vertices to the planar drawing of $G_\phi$.

Clearly, this reduction can be done in polynomial time.
To complete the proof, we prove the following claim.
\begin{claim}
A formula $\phi$ is satisfiable if and only if there is a maximum connected cut of size at least $m\sqrt{K} + nK^2 + (2n-1)K +2(n-1)$ in $H_\phi$. 
\end{claim}
\begin{proof}
Let $V_X$, $V_C$, $V_H$, $V_B$, and $V_P$ be the set of literal vertices, clause vertices, helper vertices, bridge vertices, and pendant vertices, respectively.

\noindent ($\Rightarrow$) We are given a satisfiable assignment $\alpha$ for $\phi$. 
For $\alpha$, we denote a true literal by $l_i$. We also call $v(l_i)$ a {\em true literal vertex}.
Let $S=\bigcup_{i=1}^n \{v(l_i)\}\cup V_C\cup V_H\cup V_B$.
That is, $S$ consists of  the set of true literal vertices, all the clauses vertices, all helper vertices and bridge vertices.
Observe that the induced subgraph by $S$ is connected.
This follows from the facts that each clause has at least one true literal and literal vertices are connected by bridge vertices.

Finally, we show that $|\partial(S)|\ge m\sqrt{K} + nK^2 + (2n-1)K +2(n-1)$.
Since each clause vertex has $\sqrt{K}$  pendant vertices and each helper vertex $K$ pendant vertices, there are $m\sqrt{K} + nK^2$ cut edges.
Moreover, each bridge vertex has $K$ cut edges incident to its pendant vertices and two cut edges incident to literal vertices not in $S$.
Finally, since either $v(x_i)$ or $v(\bar{x_i})$ is not in $S$, there are $nK$ cut edges between literal vertices and helper vertices.
Therefore, we have $|\partial(S)|\ge m\sqrt{K} + nK^2 + (n-1)(K+2) + nK = m\sqrt{K} + nK^2 + (2n-1)K + 2(n-1)$.

\medskip
\noindent ($\Leftarrow$)
We are given a connected cut $S$ in $H_\phi$ such that $|\partial(S)|\ge m\sqrt{K} + nK^2 + (2n-1)K +2(n-1)$.
Here, we assume without loss of generality that $S$ is an optimal connected cut of $H_\phi$.
Suppose, for contradiction, that neither of $v(x_i)$ and $v(\bar{x}_i)$ is contained in $S$ for some variable $x_i$. 
Then, all helper vertices $h^i_k$ cannot be contained in $S$ due to the connectivity of $S$. 
There are $m\sqrt{K} + 3m + 2nK + (K + 4)(n-1)$ edges except for those between helper vertices and its pendant vertices.
Thus, it follows that $|\partial(S)| \le m\sqrt{K} + 3m + 2nK + (K + 4)(n-1) + (n-1)K^2$. 
Since $K>4m^2$ and $m>2$, this contradicts the assumption that $|\partial(S)|\ge m\sqrt{K} + nK^2 + (2n-1)K +2(n-1)$.
Thus, at least one literal vertex must be contained in $S$ for each $x_i$.

Next, we show that every helper vertex must be contained in $S$.
Suppose a helper vertex $h^i_k$ is not contained in $S$.
Then, all $K$ pendant vertices attached to $h^i_k$ is not contained in $S$ due to the connectivity of $S$.
Since at least one literal vertex of $x_i$ is contained in $S$, we can increase the size of the cut by moving $h^i_k$ to $S$, contradicting the optimality of $S$.
Therefore, we assume that every helper vertex is contained in $S$.
Similar to helper vertices, we can prove that every bridge vertex is contained in $S$.

Then, we observe that exactly one literal vertex must be contained in $S$ for each~$x_i$.
Suppose that both $v(x_i)$ and $v(\bar{x}_i)$ are contained in  $S$ for some $x_i$.
Since all helper vertices and bridge vertices are contained in $S$, we may increase the size of the cut by moving either of $v(x_i)$ or $v(\bar{x}_i)$ to $V \setminus S$.
However, there are some issues we have to consider carefully.
Suppose that $v(x_i)$ is moved to $V \setminus S$.
Then, some clause vertices $v(C_j)$ in $S$ can be disconnected in $G[S]$.
If so, we also move $v(C_j)$ together with its pendant vertices to $V \setminus S$.
Since there are at least $K + 1$ cut edges newly introduced but at most $m(\sqrt{K} + 3)$ edges removed from the cutset, the size of the cutset is increased, also contradicting the optimality of $S$.

Finally, we show that every clause vertex is, in fact, contained in $S$.
Suppose that $|S \cap V_C| = m' < m$.
If $v(C_j)$ is not in $S$, its pendant vertices are also not in $S$.
Due to the optimality of $S$, the pendant vertices of every helper vertex and every bridge vertex is in $V \setminus S$.
Thus, we have $|\partial(S) \cap \partial(V_H \cup V_B)| = nK(K+1) + (K+2)(n-1) = nK^2 + (2n-1)K+2(n-1)$.
Therefore, $|\partial(S)| = |\partial(S) \cap \partial(V_C)| +  |\partial(S) \cap \partial(V_H \cup V_B)| \le m'\sqrt{K} +3(m-m')+ nK^2 + (2n-1)K +2(n-1) < m\sqrt{K} + nK^2 + (2n-1)K +2(n-1)$ as $K > 4m^2$.
This is also contradicting to the assumption of the size of the cut.

To summarize, exactly one literal vertex is in $S$ for each variable and every clause vertex is in $S$.
Since $G[S]$ is connected, every clause vertex is adjacent to a literal vertex included in $S$.
Given this, we can obtain a satisfying assignment for $\phi$. 
\end{proof}
This completes the proof of Theorem \ref{thm:bipartite:Connected}. 
\end{proof}

\subsection{On split graphs}\label{sec:split}

\begin{theorem}\label{thm:split:connected}
\textsc{Maximum Connected Cut} is NP-complete on split graphs.
\end{theorem}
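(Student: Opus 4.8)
The plan is to show membership in NP — a connected cut $[S,V\setminus S]$ is a polynomial-size certificate whose connectivity and cut size are checkable in polynomial time — and then to prove hardness by a polynomial reduction from \textsc{Max Cut}, invoking its NP-completeness on split graphs due to Bodlaender and Jansen~\cite{bodlaender2000complexity} (already used in Corollary~\ref{subclasses_NPc}). Note that the operator $\phi$ of Theorem~\ref{framework2} is useless here: taking $n$ disjoint copies of a split graph with a nontrivial clique destroys the split property, so a dedicated construction is required.

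Starting from a split graph $G$ with clique $K$ and independent set $I$, I would build a split graph $H$ that reuses $K$ as (part of) its clique and attaches three families of gadgets, all realizable without ever creating an edge inside the independent set: (i) an anchor vertex placed in the clique and joined to enough of $V(G)$ that the side $S$ we care about is connected essentially for free; (ii) for each original edge, independent-set vertices meant to detect whether its endpoints are separated; and (iii) pendant vertices hung on clique vertices (legal, since a pendant's only neighbour sits in the clique) to calibrate how much each clique vertex contributes to the cut. All multiplicities would be large polynomials in $|V(G)|$, so that a single threshold $k$ separates yes- and no-instances with a clear numerical gap.

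The core of the argument is a structural lemma pinning down the optimal connected cut of $H$. I would argue that, for the chosen multiplicities, any optimal $S$ must contain the anchor and the pendant-loaded vertices; once this canonical form is forced, the only remaining freedom is the placement of the copies of $V(G)$, whose net contribution should equal $\mathrm{cut}_G$ plus a partition-independent constant. The two directions then combine to give: $G$ has a cut of size at least $t$ if and only if $H$ has a connected cut of size at least the matching threshold.

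The step I expect to be the main obstacle is precisely this structural lemma, and the difficulty is intrinsic to split graphs. Because only $H[S]$ — and not $H[V\setminus S]$ — must be connected, the gadgets are fundamentally asymmetric: any independent-set vertex can always be thrown into $V\setminus S$ to cut \emph{all} of its edges toward $S$ at no connectivity cost. Consequently a naive ``edge vertex'' adjacent to the two endpoints of an edge merely contributes the number of its endpoints lying in $S$, and the crossing/non-crossing distinction on which \textsc{Max Cut} depends washes out entirely (for a $d$-regular instance the whole edge contribution collapses to $d\,|S\cap K|$). The way out I would pursue is to let the \emph{clique edges themselves} carry the cut signal — a clique edge crosses exactly when its two endpoints are separated, so the clique contributes $|A|\cdot|K\setminus A|=\mathrm{cut}_G(A)+\mathrm{cut}_{\overline{G}}(A)$ — and then to use the calibrated pendant gadgets both to cancel the spurious non-edge term $\mathrm{cut}_{\overline{G}}(A)$ and to neutralise the size-dependent balancing bias of $|A|\cdot|K\setminus A|$. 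Making these cancellations \emph{exact} rather than merely approximate, so that the residual optimum reads off $\mathrm{cut}_G$ alone, is where the precise choice of multiplicities and the gap argument have to do all the work.
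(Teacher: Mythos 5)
Your proposal is a plan rather than a proof, and the step you yourself flag as the main obstacle is not merely unfinished --- with the gadgets you describe it cannot be finished. The difficulty is an expressiveness obstruction. In a split graph whose independent-set vertices have all their neighbours in the clique, each such vertex $w$ is optimally placed so that its contribution to a connected cut is a function only of $x=|N(w)\cap A|$, where $A=S\cap K$: it is $0$ if $x=0$ and $\max(x,|N(w)|-x)$ otherwise. For the two gadget types you propose --- pendants ($|N(w)|=1$, contribution $[v\in A]$) and per-edge detectors ($|N(w)|=2$, contribution $|\{u,v\}\cap A|$) --- this is an \emph{affine} function of the indicator vector of $A$, and the clique edges contribute $|A|\cdot|K\setminus A|$, a function of $|A|$ alone. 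Hence for every fixed value of $|A|$ the entire objective is affine in the indicators, while the term $\mathrm{cut}_{\overline{G}}(A)$ you need to cancel (and $\mathrm{cut}_G(A)$ itself) is genuinely quadratic: two partitions with the same size and the same vertex degrees but different cut values receive the same objective. No choice of multiplicities or thresholds can repair this; you would need gadgets whose contribution is a non-affine symmetric function of a larger neighbourhood (degree at least~$3$, or anchored detectors), and designing those so that the residual optimum reads off $\mathrm{cut}_G$ is essentially a different reduction that you have not supplied. Your diagnosis of why the naive edge-vertex construction washes out is correct --- it is exactly why the paper's Largest Bond reduction on split graphs (Theorem~\ref{thm:split:maxmin}), which does use the clique-plus-edge-vertex construction, does not transfer to \textsc{Maximum Connected Cut}, where only one side must be connected --- but your proposed fix inherits the same linearity problem one level up.

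The paper escapes this trap by changing the source problem: it reduces from \textsc{Exact 3-cover} rather than \textsc{Max Cut}. Coverage is a condition that the available gadget vocabulary \emph{can} express: the clique consists of set-vertices $u_1,\dots,u_m$ plus $m-2n$ padding vertices forced into $S$ by large attached independent sets $Y_i$; the clique term $(m-t)(m-2n+t)=(m-n)^2-(t-n)^2$ is concave in $t=|S\cap\{u_1,\dots,u_m\}|$ and penalizes deviation from $t=n$; and the element vertices lose at least one incident cut edge whenever they are partially covered and lose everything when uncovered, which, combined with the assumption that each element lies in at least $3(n+2)$ sets, forces $t=n$, no uncovered elements, and no over-covered elements --- i.e., an exact cover. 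If you want to salvage your write-up, the cleanest path is to abandon \textsc{Max Cut} as the source and adopt a covering-type problem, or else to design and analyse a genuinely non-affine detector gadget, which your current text does not do.
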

\begin{proof} 
We reduce the following problem called {\sc Exact 3-cover}, which is known to be NP-complete:
Given a set $X = \{x_1, x_2, \dots , x_{3n}\}$ and a family $\mathcal F = \{F_1, F_2, \dots , F_m\}$,
where each $F_i = \{x_{i_1}, x_{i_2}, x_{i_3}\}$ has three elements of $X$, 
the objective is to find a subfamily $\mathcal F' \subseteq \mathcal F$ such that 
every element in $X$ is contained in exactly one of the subsets $\mathcal F'$. 
By making some copies of $3$-element sets if necessary, 
we may assume that 
$|\{F \in \mathcal F\mid x \in F\} | \ge 3 (n+2)$ for each $x \in X$, 
which implies that $m$ is sufficiently large compared to $n$. 

Given an instance of {\sc Exact 3-cover}
with $|\{F \in \mathcal F\mid x \in F\} | \ge 3 (n+2)$ for each $x \in X$, 
we construct an instance of 
{\sc Maximum Connected Cut} in a split graph as follows. 
We introduce $m$ vertices $u_1, u_2, \dots, u_m$, where each $u_i$ corresponds to $F_i$, 
and introduce $m - 2n$ vertices $u_{m+1}, u_{m+2}, \dots, u_{2(m-n)}$. 
Let $U := \{u_1, u_2, \dots, u_{2(m-n)}\}$. 
For $i=m+1, m+2, \dots, 2(m-n)$, introduce a vertex set $Y_i$ of size $M$, 
where $M$ is a sufficiently large integer compared to $n$ (e.g. $M =3n+1$). 
Now, we construct a graph $G=(U \cup X \cup Y, E)$, where
$Y := \bigcup_{m+1 \le i \le m-2n}Y_i$,
$E_U := \{\{u, u'\} \mid u, u' \in U, u \neq u' \}$, 
$E_X := \{\{u_i, x_j\} \mid  1\le i \le m, 1 \le j \le 3n, x_j \in F_i \}$, 
$E_Y := \{\{u_i, y\} \mid m + 1 \le i \le 2(m-n), y \in Y_i \}$,  and
$E   := E_U \cup E_X \cup E_Y$. 
Then, $G$ is a split graph in which $U$ induces a clique and $X \cup Y$ is an independent set. 
We now show the following claim. 

\begin{claim}
The original instance of {\sc Exact 3-cover} has a solution if and only if 
the obtained graph $G$
has a connected cut of size at least $(m-n)^2 + 3m - 3n + (m-2n) M$. 
\end{claim}

\begin{figure}[tbp]
  \begin{center}
   \includegraphics[width=77mm]{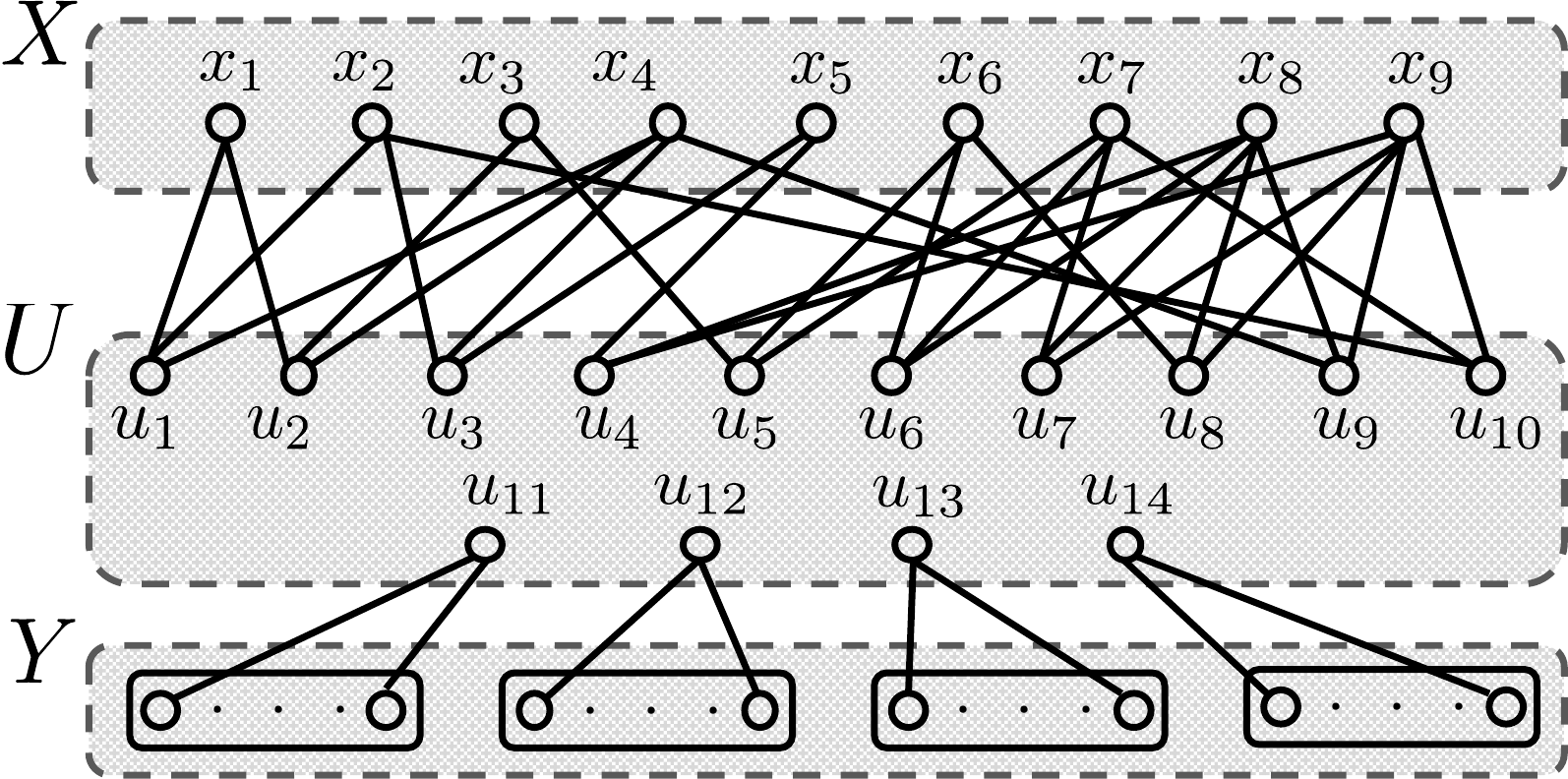}
  \end{center}
  \caption{An instance of \textsc{Maximum Connected Cut} on split graphs reduced
 from an instance of {\sc Exact 3-cover} where $X = \{x_1, x_2,x_3,x_4,x_5, x_6,x_7,x_8,
  x_9\}$ and $\mathcal F = \{\{x_1,x_2,x_3\},\{x_1,x_3,x_4\}$,
  $\{x_2,x_4,x_5\}$,
  $\{x_5,x_8,x_9\}$,
  $\{x_3,x_6,x_7\}$,
  $\{x_6,x_7,x_8\}$,
  $\{x_7,x_8,x_9\}$,
  $\{x_6,x_8,x_9\}$,
  $\{x_4,x_8,x_9\}$,
  $\{x_2,x_7,x_9\}
  \}$.
  }
 \label{CMConsplit}
\end{figure}

\begin{proof}
Suppose that the original instance of {\sc Exact 3-cover} has a solution $\mathcal F'$. 
Then 
$S := \{u_i \mid F_i \in \mathcal F'\} \cup \{u_i \mid m+1\le i \le 2(m-n) \} \cup X$
is a desired connected cut, because 
$|\partial(S) \cap E_U| = (m-n)^2$, 
$|\partial(S) \cap E_X| = \sum_{i=1}^m |F_i| - |X| =  3m-3n$, and
$|\partial(S) \cap E_Y| = (m-2n) M$. 

Conversely, suppose that 
the obtained instance of {\sc Maximum Connected Cut} 
has a connected cut $S$ such that $|\partial(S)| \ge (m-n)^2 + 3m - 3n + (m-2n) M$. 
Since 
$|\partial(S) \cap E_U| \le (m-n)^2$, 
$|\partial(S) \cap E_X| \le  3m$, and
$|\partial(S) \cap E_Y| \le |S \cap \{ u_{m+1}, \dots , u_{2(m-n)} \}| \cdot M$,  
we obtain $|S \cap \{ u_{m+1}, \dots , u_{2(m-n)} \}| = m-2n$, that is, 
$\{ u_{m+1}, \dots , u_{2(m-n)} \} \subseteq S$. 
Let $t = |S \cap \{u_1, \dots ,u_m\}|$, $X_0 = \{x \in X \mid N(x) \cap S = \emptyset \}$ the vertices in $X$ that has no neighbor in $S$, $X_{\rm all} = \{x \in X \mid N(x) \subseteq S\}$ the vertices in $X$ whose neighbor is entirely included in $S$, and $X_{\rm part} = X \setminus (X_0 \cup  X_{\rm all})$ all the other vertices in $X$. 
Recall that every element in $X$ is contained in at least $3(n+2)$ subsets of $\mathcal F$.
Then, since 
$|\partial(S) \cap E_U| = (m - t) (m-2n+t) = (m-n)^2 - (t-n)^2$, 
$|\partial(S) \cap E_X| \le |E_X| -  |X_{\rm part}| - |\partial(X_0)|  \le  3m -  (3n - |X_{\rm all}| - |X_0|) - 3(n+2) |X_0|$, 
$|\partial(S) \cap E_Y| \le (m-2n) M$, and
$|\partial(S)| \ge (m-n)^2 + 3m - 3n + (m-2n) M$,  
we obtain 
\begin{align}\label{eq:01}
|X_{\rm all}| - (3n+5) |X_0| - (t-n)^2 \ge 0. 
\end{align}
By counting the number of edges between $S \cap \{u_1, u_2, \dots, u_m\}$ and $X$, we obtain
$3t \ge |\partial(X_{\rm all})| \ge 3(n+2) |X_{\rm all}|$, 
which shows that 
$t \ge (n+2) |X_{\rm all}|$. 
If $|X_{\rm all}| \ge 1$, then $t \ge (n+2) |X_{\rm all}| \ge n+ 2 |X_{\rm all}|$, and hence
$
|X_{\rm all}| - 3(n+5) |X_0| - (t-n)^2 
\le |X_{\rm all}| - (2 |X_{\rm all}|)^2 <0$, 
which contradicts (\ref{eq:01}). 
Thus, we obtain $|X_{\rm all}| = 0$, and hence 
we have $t=n$ and $X_0 = \emptyset$ by~(\ref{eq:01}). 
Therefore, $\mathcal F':= \{ F_i \mid  1 \le i \in m, \ u_i \in S \}$ 
satisfies that $|\mathcal F'| = n$ and $\bigcup_{F \in \mathcal F'} F = X$. 
This shows that $\mathcal F'$ is a solution of the 
original instance of {\sc Exact 3-cover}.  
\end{proof}

This shows that {\sc Exact 3-cover} is reduced to 
{\sc Maximum Connected Cut} in split graphs, 
which completes the proof.  
\end{proof}

\begin{theorem}\label{thm:split:maxmin}
\textsc{Largest Bond} is NP-complete on split graphs.
\end{theorem}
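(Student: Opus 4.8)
The plan is to reuse the split-graph construction of Theorem~\ref{thm:split:connected} and to repair the single feature that distinguishes a bond from a connected cut, namely that the second side $V\setminus S$ must also be connected. First I would revisit the honest solution of that reduction, $S=\{u_i\mid F_i\in\mathcal F'\}\cup\{u_{m+1},\dots,u_{2(m-n)}\}\cup X$. Its complement is $\{u_i\mid F_i\notin\mathcal F'\}\cup Y$: the unused clique vertices are mutually adjacent, hence connected, but every vertex of $Y$ has its unique neighbour among the dummy vertices $u_{m+1},\dots,u_{2(m-n)}$, all of which lie in $S$. Thus each vertex of $Y$ is isolated in $G[V\setminus S]$, so $S$ is a connected cut but very far from being a bond. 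Repairing this while staying inside the split graphs and keeping the size bookkeeping tight is the whole content of the theorem.

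My construction would augment the graph $G=(U\cup X\cup Y,E)$ of Theorem~\ref{thm:split:connected} with two \emph{connector} vertices $v_a,v_b$, made adjacent to one another and to every other vertex, exactly as in the operator $\psi$. Since $v_a,v_b$ join the clique, the augmented graph $G^{+}$ is still split (clique $U\cup\{v_a,v_b\}$, independent set $X\cup Y$), and the reduction is still polynomial. The useful property is that whenever a cut of $G^{+}$ places $v_a$ and $v_b$ on opposite sides, each side contains a universal vertex and is therefore automatically connected; hence every such partition is a bond, and the edges incident to $\{v_a,v_b\}$ contribute the fixed amount $|V(G^{+})|-1$ to the cut-set, independently of the partition. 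The forward direction is then immediate: starting from an exact cover, take the connected cut $S$ above, keep $v_a$ on the $S$-side and $v_b$ on the complement side, and obtain a bond whose size is exactly the threshold of Theorem~\ref{thm:split:connected} increased by the constant $|V(G^{+})|-1$.

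For the converse I would argue in two steps. First, exclude the \emph{cheating} bonds, those placing $v_a$ and $v_b$ on the same side $P_1$: then the opposite side $P_2$ contains no universal vertex, so the only possible neighbours of a $Y$-vertex in $P_2$ are dummy clique vertices, forcing either a whole pendant population $Y_i$ to be split from its dummy (which keeps $P_2$ small) or kept together with it (which sacrifices the corresponding $M$ cut-edges). In either case the connectors now contribute only $2|P_2|$ rather than $|V(G^{+})|-1$, so the choice $M=3n+1$ (indeed any sufficiently large $M$) makes every cheating bond fall strictly below the threshold, exactly as the inequalities $K>4m^{2}$ and $M>3n$ are tuned in Theorem~\ref{thm:split:connected}. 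Once the cheating bonds are ruled out, any bond of the target size is connector-separated, the $\{v_a,v_b\}$-contribution is the fixed constant, and the remaining cut-edges lie inside $G$; the combinatorial counting of Theorem~\ref{thm:split:connected} (the bounds on $E_U$, $E_X$, $E_Y$ that pin $|S\cap\{u_{m+1},\dots\}|=m-2n$, then $t=n$ and $X_0=\varnothing$) applies essentially verbatim and produces an exact cover. Membership in \NP{} being clear, this establishes \NP-completeness.

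I expect the cheating-bond analysis to be the main obstacle: unlike the disjoint-copy device in Theorem~\ref{framework}, here a single split base graph must itself defeat the configuration in which both universal connectors sit on one side while the other side greedily accumulates connector edges. The argument hinges on the observation that the massive pendant sets $Y_i$ cannot be moved to the connector-free side without either disconnecting that side or forfeiting their $M$ cut-edges, so the padding parameter $M$ must be chosen large enough to dominate the at most $2|V(G^{+})|$ edges any connector-together split can gain; verifying this inequality uniformly over all admissible splits is the delicate part.
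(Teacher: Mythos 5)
Your construction has a fatal gap in the converse direction, and it is not the ``cheating'' case you flag as delicate --- it is the honest case. Once $v_a$ and $v_b$ are universal and placed on opposite sides, \emph{every} partition of the remaining vertices yields a bond, so your reduction degenerates into unrestricted \textsc{Maximum Cut} on the Exact 3-Cover gadget plus a constant. But the correctness of the gadget of Theorem~\ref{thm:split:connected} rests entirely on the connectivity of $G[S]$: it is connectivity that forces $Y_i\cap S=\emptyset$ whenever $u_i\notin S$ (giving the bound $|\partial(S)\cap E_Y|\le |S\cap\{u_{m+1},\dots\}|\cdot M$) and that forces every element vertex in $S$ to have a neighbour among the chosen set vertices (which is precisely how the covering constraint is encoded). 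With universal connectors, both mechanisms vanish. Concretely, take $S=\{v_a\}\cup\{u_{m+1},\dots,u_{2(m-n)}\}\cup\{u_{i_1},\dots,u_{i_n}\}\cup X$ for \emph{any} $n$ distinct indices $i_1,\dots,i_n\le m$, whether or not $\{F_{i_1},\dots,F_{i_n}\}$ covers $X$. Both sides contain a universal vertex, so this is a bond. It cuts $(m-n)^2$ edges of $E_U$ (a balanced split of the clique of size $2(m-n)$), all $(m-2n)M$ edges of $E_Y$ (every $Y_i$ lies opposite its dummy $u_i\in S$), and $3m-\sum_j|F_{i_j}|=3m-3n$ edges of $E_X$ (all of $X$ is in $S$, and exactly $3n$ of the $3m$ edges of $E_X$ stay inside $S$), plus the connector contribution $|V(G^+)|-1$. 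This meets your threshold exactly for every instance, so your reduced instance is always a yes-instance and the reduction fails. The problem cannot be repaired by retuning $M$ or the threshold; any device that makes $V\setminus S$ connected ``for free'' also makes $S$ connected for free and thereby erases the covering constraint.

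The paper takes an entirely different route: it reduces from \textsc{Maximum Cut} directly. Given $G=(V,E)$ with $n$ vertices, it builds the split graph $G'$ in which $V$ becomes a clique and each edge $e=\{u,v\}$ is replaced by $n^3$ independent-set vertices adjacent only to $u$ and $v$. Connectivity of both sides then comes from the clique on $V$ (each side's clique part is trivially connected, and each edge-vertex attaches to a clique vertex on its own side), so no universal connectors are needed, and the blow-up factor $n^3$ swamps the at most $n^2$ clique edges so that the cut size is read off from the number of separated pairs. If you want to keep your high-level plan, you would need to enforce connectivity of $V\setminus S$ structurally inside the gadget (as the paper's clique does) rather than by adding universal vertices.
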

\begin{proof}
We give a reduction from \textsc{Maximum Cut}.
Given a graph $G=(V,E)$ with $n$ vertices, we create a split graph $G'= (V\cup V_E, E')$  where  $V$ is a clique, $V_E = \{e^{\ell} \mid e \in E, 1 \le \ell \le n^3\}$ is an independent set, and $E'=\{ \{u, e^{\ell}\}, \{v,e^{\ell}\} \mid e=\{u,v\}\in E, 1 \le {\ell} \le n^3\}$. We show that $G$ has a cut of size at least $k$ if and only if $G'$ has a bond of size at least $kn^3$.
Without loss of generality, we assume that $n > 1$ and $k > 2$.

Let $[S_1,S_2]$ be a cut of $G$ of size $k$.
We define a cut $[S'_1, S'_2]$ of $G'$ with $S_i \subseteq S'_i$ for $i \in \{ 1, 2\}$.
For each $e\in E$ and $1 \le {\ell} \le n^3$, we set $e^{\ell} \in S'_2$ if both endpoints of $e$ are in $S_2$ in $G$, and otherwise $e^{\ell}\in S'_1$.
It is straightforward to verify that $G'[S'_1]$ and $G'[S'_2]$ are connected.
If $e = \{u, v\}$ contributes to the cut $[S_1, S_2]$, there are $n^3$ edges ($\{u, e_{\ell}\}$ or $\{v, e_{\ell}\}$) in $G'$ that contribute to $[S'_1, S'_2]$. Therefore, the size of $[S'_1, S'_2]$ is at least $kn^3$.

Conversely, let $[S'_1,S'_2]$ be a bond of size $kn^3$ in $G'$. Let $S_i = S'_i \cap V$ for $i \in \{ 1, 2\}$. For each $e=\{u,w\}$ and $i\in \{1,2\}$, we can observe that $e^{\ell}\in S_i$ if $u,w\in S_i$ due to the connectivity of $S_i$ and $k > 2$.
Since $V$ forms a clique in $G'$, there are at most $n^2$ edges between vertices of $V$ in the cut $[S'_1,S'_2]$.
Thus, at least $kn^3-n^2 > kn^3 - n^3 = (k - 1)n^3$ edges between $V$ and $V_E$ belong to the cutset.
This implies that there are at least $k$ pairs $\{u, w\}$ with $u \in S'_1 \cap V$ and $w \in S'_2 \cap V$, and hence $G$ has a cut of size at least~$k$. 
\end{proof}

\subsection{Inapproximability for Largest Bond}

While the maximum cut of a graph has at least a constant fraction of the edges,
the size of the largest bond can be arbitrarily smaller than the number of
edges; take, e.g., a cycle on $n$ edges, for which a largest bond has size $2$.
This discrepancy is also reflected on the approximability of the problems.
Indeed, we show that {\sc Largest Bond} is strictly harder to approximate than
{\sc Maximum Cut}.
To simplify the presentation, we consider a weighted version of the problem in
which edges are allowed to have weights $0$ or $1$; the hardness results will
follow for the unweighted case as well. In the {\sc Binary Weighted Largest Bond}, the
input is given by a connected weighted graph $H$ with weights $w : E(H) \to
\{0,1\}$. The objective is to find a bond whose total weight is maximum.

Let $G$ be a graph on $n$ vertices and whose maximum cut has size $k$. Next, we
define the $G$-\emph{edge embedding} operator~$\xig$. Given a connected weighted
graph $H$, the weighted graph~$\xig(H)$ is constructed by replacing each edge
$\{u,v\} \in E(H)$ with weight~$1$ by a copy of~$G$, denoted by $G_{uv}$, whose
edges have weight $1$, and, for each vertex~$t$ of $G_{uv}$, new edges~$\{u,t\}$
and~$\{v,t\}$, both with weight $0$.

We can also apply the $G$-edge embedding operation on the graph $\xig(H)$, then
on $\xig(\xig(H))$, and so on. In what follows, for an integer $h \ge 0$, denote
by $\xig^h(H)$ the graph resulting from the operation that receives a graph $H$
and applies $\xig$ successively $h$ times. Notice that $\xig^h(H)$ can be constructed in $\Oh(|V(G)|^{h+1})$ time.
For some $j$, ${0 \le j \le h-1}$, observe that an edge ${\{u,v\} \in
E(\xig^j(H))}$ will be replaced by a series of vertices added in iterations $j +
1, j+ 2, \dots, h$. These vertices will be called the \emph{descendants}
of~$\{u,v\}$, and will be denoted by~$V_{uv}$.

Let $K_2$ be the graph composed of a single edge $\{u,v\}$, and consider the
problem of finding a bond of $\xig(K_2)$ with maximum weight. Since edges
connecting $u$ or $v$ have weight $0$, one can assume that $u$ and $v$ are in
different sides of the bond, and the problem reduces to finding a  maximum cut
of~$G$.
In other words, the operator $\xig$ embeds an instance $G$ of {\sc Maximum Cut}
into an edge~$\{u,v\}$ of~$K_2$.

This suggests the following strategy to solve an instance of {\sc Maximum Cut}.
For some constant integer $h \ge 1$, calculate $H = \xig^h(K_2)$, and obtain a
bond~$F$~of~$H$ with maximum weight.
Note that, to solve $H$, one must solve embedded instances of {\sc Maximum Cut}
in multiple levels simultaneously. For a level~$j$, ${1 \le j \le h - 1}$, each
edge $\{u,v\} \in E(\xig^j(K_2))$ with weight~$1$ will be replaced by a graph
$G_{uv}$ which is isomorphic to $G$.
In Lemma~\ref{lem-edge-cut} below, we argue that $F$ is such that either
$V(G_{uv}) \cup \{u,v\}$ are all in the same side of the cut, or~$u$~and~$v$ are in
distinct sides. In the latter case, the edges of $F$ that separate $u$ and $v$
will induce a cut of $G$.

In the remaining of this section, we consider a constant integer $h \ge 0$.
Then, we define $H^j = \xig^j(K_2)$ for every $j$, ${0 \le j \le h}$, and $H =
H^h$.
Also, we write $[S,T]$ to denote the cut induced by a bond~$F$~of~$H$.

\begin{definition}
Let $F$ be a bond~of~$H$ with cut $[S,T]$.
We say that an edge ${\{u,v\} \in E(H^j)}$ with weight~$1$ is \emph{nice for~$F$}
if either
\begin{itemize}
  \item $|\{u, v\} \cap S| = 1$, or
  \item $(\{u, v\} \cup V_{uv}) \subseteq S$, or
  \item $(\{u, v\} \cup V_{uv}) \subseteq T$.
\end{itemize}
Also, we say that $F$ is \emph{nice} if, for every $j$, ${0 \le j \le
h-1}$, and every edge $\{u,v\} \in E(H^j)$ with weight~$1$, $\{u,v\}$ is nice
for~$F$.
\end{definition}

\begin{lemma}\label{lem-edge-cut}
There is a polynomial-time algorithm that receives a bond $F$, and finds a nice
bond ${F'}$ such that $w(F') = w(F)$.
\end{lemma}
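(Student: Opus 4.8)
The plan is to obtain $F'$ from $F$ by repeatedly repairing a single violation of niceness, using local moves that change neither the total weight nor the bond property, and to argue that such repairs can always be performed until no violation remains.

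First I would record two structural facts about $H=H^h$. \emph{Weight fact:} every weight-$1$ edge of $H$ lies inside one of the deepest copies of $G$ (those introduced in the $h$-th application of $\xig$); equivalently, every vertex created before level $h$ is incident only to weight-$0$ edges. Hence moving any such ``internal'' vertex from one side of the cut to the other leaves $w(\cdot)$ unchanged, as it only flips weight-$0$ edges. \emph{Separator fact:} for every edge $\{u,v\}\in E(H^j)$ with $j\le h-1$, the pair $\{u,v\}$ separates its descendant set $V_{uv}$ from the rest of $H$, since every descendant has all of its neighbours in $V_{uv}\cup\{u,v\}$. Combining this with the connectivity of both sides of a bond gives the key consequence: if $u,v$ lie on the same side of $[S,T]$, say $u,v\in S$, and $V_{uv}$ meets $T$, then in fact $T\subseteq V_{uv}$ (the part of $T$ inside $V_{uv}$ could not otherwise be joined to any part of $T$ outside it).

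Next I would iterate the following repair step. Among all edges that are not nice for the current bond, pick one, $\{a,b\}$, of maximum level. It must have both endpoints on the same side (an edge with split endpoints is automatically nice), say $a,b\in S$, and $V_{ab}$ must meet $T$ (otherwise $V_{ab}\subseteq S$ would make it nice); by the key consequence, $T\subseteq V_{ab}$. Using maximality of the level I would argue that some vertex $s_i$ of the copy $G_{ab}$ directly replacing $\{a,b\}$ lies in $T$: otherwise $T$, being connected, contained in $V_{ab}$, and avoiding all these vertices, would sit inside a single deeper subtree $V_{s_i s_{i'}}$, making $\{s_i,s_{i'}\}$ a deeper non-nice edge, a contradiction. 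I then move $b$ to $T$. This preserves $w(\cdot)$ by the weight fact, and preserves the bond property: $T\cup\{b\}$ stays connected because $b$ is adjacent to $s_i\in T$, while $S\setminus\{b\}$ stays connected because each $S$-neighbour of $b$ reconnects to $S$ without using $b$ (the vertices of $G_{ab}$ and the parents of $b$'s copy are all adjacent to $a$, and vertices of sibling subtrees, which lie entirely in $S$, reattach through their own endpoints). Finally I would verify that no new violation is created: $\{a,b\}$ becomes split-nice; every sibling edge $\{b,w\}$ passes from ``entirely in $S$'' to ``split'', hence stays nice; every ancestor edge that was nice must have been of the split type (its subtree already contained vertices of both $S$ and $T$, so it could not be monochromatic), and split-type niceness depends only on the endpoints and is thus unaffected; all other edges are untouched. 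Therefore the number of non-nice edges strictly decreases.

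Since $H$ has $\Oh(|V(G)|^{h+1})$ edges and each repair is polynomial-time computable, after polynomially many steps we reach a nice bond $F'$ with $w(F')=w(F)$. The hard part will be the connectivity bookkeeping inside the repair step: ensuring that a single vertex move simultaneously fixes the chosen violation, keeps both sides connected, and introduces no new violation. This is exactly what the separator fact together with the ``deepest violation'' choice buys us — the former confines the opposite side to $V_{ab}$ and forces every nice ancestor to be of split type, while the latter guarantees a neighbour of $b$ on the far side so that the move can be made.
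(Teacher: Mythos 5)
Your proof is correct. It rests on the same two structural pillars as the paper's argument---the separator property of an edge $\{u,v\}$ with respect to its descendant set $V_{uv}$, which together with the connectedness of $T$ confines $T$ to $V_{uv}$ whenever a violation occurs, and the fact that the vertices being moved are incident only to weight-$0$ edges---but the repair mechanism and the termination argument are genuinely different. The paper selects a violating edge of \emph{minimum} level $j^*$, moves an entire chunk of $S$ at once (everything outside the component of $H[S\setminus\{v\}]$ containing $u$), and shows that the minimum level of a violation strictly increases, so at most $h$ iterations are needed; connectivity of the modified parts is then nearly automatic because whole components are transferred. You instead select a violating edge of \emph{maximum} level and move a single endpoint $b$, decreasing the number of violations by one per step, which gives polynomially many iterations. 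Your choice of the deepest violation is exactly what makes the single-vertex move work: it forces $T$ to meet the copy $G_{ab}$ itself rather than only a deeper subtree, so $b$ has a neighbour on the far side and $T\cup\{b\}$ stays connected, and it also yields the fact that every nice ancestor is of split type and hence unaffected by the move. The price you pay is the more delicate connectivity check for $S\setminus\{b\}$, which you discharge correctly by rerouting each $S$-neighbour of $b$ through $a$, through the parents of $b$'s copy, or through the siblings' own endpoints. Both arguments run in polynomial time; the paper's bookkeeping gives the sharper bound of at most $h$ repair rounds, while yours trades that for a simpler, purely local modification.
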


\begin{proof}
Let $[S,T]$ be the cut induced by $F$ and let $j^*$ be the minimum value such that there
exists an edge $\{u,v\} \in H^{j^*}$ with weight~$1$ which is not nice for $F$.
Then ${|\{u, v\} \cap S| \ne 1}$. Assume, without loss of generality that $u, v
\in S$. In this case,  $U := V_{uv} \cap T$ is not empty. Since removing
vertices $\{u, v\}$ disconnects $U$, and $T$ must be connected, it follows that
${U = T}$.
This implies that ${N(T) \subseteq (V_{uv} \setminus T)\cup \{u, v\}}$.

We will construct a bond $F'$ of $H$ with cut $[S', T']$.
Let $S'$ be the set of vertices in the connected component of $H[S \setminus
\{v\}]$ which contains~$u$, and ${T' = V(H) \setminus S'}$.
Since $H[S]$ is connected, so must be $H[S \setminus S']$. Also, each vertex
of~$U$ is adjacent~to~$v$, thus $H[(S \setminus S') \cup U]$ is connected.
Observe that $T' = (S \setminus S') \cup U$, so indeed the cut~$[S', T']$
induces a bond ${F' = \partial(S')}$.
Observe that any edge that appears only in $F$ or only in $F'$ is adjacent
to~$v$. Since such edges have weight~$0$, this implies~$w(F) = w(F')$.

To complete the proof, we claim that if for some $j$, $0 \le j \le h$, there
exists an edge $\{u,v\} \in H^{j}$ with weight~$1$ which is not nice for~$F'$,
then $j > j^*$. If this claim holds, then we need to repeat the previous
procedure at most $h$ times before obtaining a nice bond~$F'$.

To prove the claim, consider an edge $\{s,t\} \in H^{j}$ which is not nice
for~$F'$. Suppose, for a contradiction, that ${V_{st} \cap V_{uv} = \emptyset}$.
There are two possibilities.
If $s, t \in S'$, then $V_{st} \subseteq S'$;
if $s, t \in T'$, then $V_{st} \subseteq S \setminus S' \subseteq T'$.
In either possibility, $\{s,t\}$ is nice for~$F'$. This is a contradiction, and
thus ${V_{uv} \cap V_{st} \ne \emptyset}$.

The statement ${V_{uv} \cap V_{st} \ne \emptyset}$ can only happen if $V_{uv}
\subseteq V_{st}$ or $V_{st} \subseteq V_{uv}$. If $V_{uv} \subseteq V_{st}$,
then $U \subseteq V_{st}$ and $s, t \in S$. This implies that $\{s,t\}$ is not
nice for~$F$. But in this case $j < j^*$, contradicting the choice of~$j^*$.
Therefore, $V_{st} \subseteq V_{uv}$, and $j > j^*$, proving our claim. 
\end{proof}

In the following, assume that $F$ is a nice bond with cut $[S, T]$.
Consider a level $j$, ${0 \le j \le h}$, and an edge $\{u,v\} \in E(H^j)$ with
weight~$1$ such that ${|\{u, v\} \cap S| = 1}$.
If $j < h$, then we define $F_{uv}$ to be the subset of edges in $F$ which are
incident with some vertex of~$V_{uv}$; if $j = h$, then we define $F_{uv} = \{
\{u,v\}\}$.
Note that, because $F$ is nice, if ${|\{u, v\} \cap S| \ne 1}$, then no edge
of~$F$ is incident with $V_{uv}$.

Suppose now that ${|\{u, v\} \cap S| = 1}$ for some edge $\{u,v\} \in E(H^j)$ with
weight~$1$ and $0 \le j \le h - 1$.

In this case, $F$ induces a cut-set of~$G_{uv}$. Namely, define
$
{\hat S_{uv} := S \cap V(G_{uv})}
$
and
$
{\hat T_{uv} := T \cap V(G_{uv})}
$
and let $\hat F_{uv}$ be the cut-set of $G_{uv}$ corresponding to cut $[\hat
S_{uv} ,\hat T_{uv}]$.

Observe that for distinct edges $\{u,v\}$ and $\{s,t\}$, it is possible that ${|\hat
F_{uv}| \ne |\hat F_{st}|}$.
We will consider bonds $F$ for which all induced cut-sets $\hat F_{uv}$ have the
same size.

\begin{definition}
Let $\ell$ be a positive integer.
A bond~$F$ of $H$ with cut $[S, T]$ is said to be \emph{\mbox{$\ell$-uniform}}
if, \emph{(i)} $F$~is~nice, and $(ii)$ for every $j$, ${0 \le j \le h-1}$, and
every edge $\{u,v\} \in E(H^j)$ with weight~$1$ such that ${|\{u,v\} \cap S\}| =
1}$, $|\hat F_{uv}| = \ell$.
\end{definition}

An $\ell$-uniform bond induces a cut-set of $G$ of size $\ell$.

\begin{lemma}\label{lem-induced-cut}
Suppose $F$ is an $\ell$-uniform bond of $H$. One can find in polynomial time a
cut-set $L$ of $G$ with $|L| = \ell$.
\end{lemma}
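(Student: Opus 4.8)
The plan is to read off the desired cut-set of $G$ directly from the single top-level edge of $H^0 = K_2$. Write $\{u,v\}$ for the unique weight-$1$ edge of $K_2$; since $H = \xig^h(K_2)$, the descendants $V_{uv}$ of this edge are exactly the vertices introduced in iterations $1,\dots,h$, so $V(H) = \{u,v\} \cup V_{uv}$. The whole argument hinges on the fact that this particular edge is forced into the first niceness case.

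First I would show that the top-level edge satisfies $|\{u,v\} \cap S| = 1$. Because $F$ is a bond, the induced cut $[S,T]$ is proper, i.e.\ both $S$ and $T$ are nonempty (with $H[S]$ and $H[T]$ connected). Since $F$ is nice, $\{u,v\}$ is nice for $F$, so one of the three defining cases must hold. The case $(\{u,v\}\cup V_{uv}) \subseteq S$ would force $V(H) \subseteq S$, hence $T = \emptyset$, and symmetrically $(\{u,v\}\cup V_{uv}) \subseteq T$ would force $S = \emptyset$; both contradict that $[S,T]$ is proper. Therefore only the remaining alternative $|\{u,v\}\cap S| = 1$ survives.

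Next, assuming $h \ge 1$, the index $j = 0$ lies in the range $0 \le j \le h-1$, so the top-level edge is replaced at level $1$ by a copy $G_{uv}$ of $G$. Applying condition (ii) of $\ell$-uniformity with $j = 0$ then yields $|\hat F_{uv}| = \ell$, where $\hat F_{uv}$ is the cut-set of $G_{uv}$ induced by $[\hat S_{uv}, \hat T_{uv}]$ with $\hat S_{uv} = S \cap V(G_{uv})$ and $\hat T_{uv} = T \cap V(G_{uv})$. Finally I would transport this cut-set back through the fixed isomorphism $G_{uv} \cong G$: letting $L$ be the image of $\hat F_{uv}$ in $E(G)$ (equivalently, the cut-set of $G$ determined by the preimage of $\hat S_{uv}$) gives a cut-set of $G$ with $|L| = \ell$. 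Locating $u,v$, reading off which vertices of $G_{uv}$ lie in $S$ versus $T$, and applying the isomorphism are all computable in time polynomial in $|V(H)|$.

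The only delicate step is the first one, pinning down $|\{u,v\} \cap S| = 1$ for the top-level edge; everything afterward is bookkeeping through the isomorphism $G_{uv}\cong G$. The essential leverage is that being a \emph{bond} makes $[S,T]$ proper, which is precisely what excludes the two ``all on one side'' options that niceness would otherwise permit. (When $h = 0$ there are no embedded copies of $G$ and the statement is vacuous, so we may assume $h \ge 1$.)
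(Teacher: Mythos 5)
Your proof is correct and follows essentially the same route as the paper: both read off the cut-set from the copy $G_{uv}$ embedded in the top-level edge of $K_2$ and invoke $\ell$-uniformity at level $j=0$. The only difference is that you explicitly justify $|\{u,v\}\cap S|=1$ (using that a bond gives a proper cut, so the two ``all on one side'' niceness cases are impossible), a step the paper's two-line proof leaves implicit.
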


\begin{proof}
Let $u,v$ be the vertices of $K_2$ to which $\xig$ was applied. Since $F$ is
$\ell$-uniform, $|\hat F_{uv}|= \ell$. Note that $\hat F_{uv}$ induces a cut-set
of size $\ell$ on $G$. 
\end{proof}

In the opposite direction, a cut of $G$ induces an $\ell$-uniform bond~of~$H$.

\begin{lemma}\label{lem-induced-bond}
Suppose $L$ is a cut-set of $G$ with $|L| = \ell$. One can find in polynomial
time an $\ell$-uniform bond $F$ of $H$ with $w(F) = \ell^h$.
\end{lemma}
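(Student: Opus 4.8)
The plan is to build $F$ directly from a fixed partition $[A,B]$ of $V(G)$ whose cut-set is $L$, by deciding in a single top-down sweep over the levels $H^0, H^1, \dots, H^h$ on which side of the bond each vertex lies. First I place the two endpoints $u_0,v_0$ of the original $K_2 = H^0$ on opposite sides, say $u_0 \in S$ and $v_0 \in T$. Then, processing the iterations $j = 1, \dots, h$ in order, for each copy $G_{uv}$ inserted at iteration $j$ to replace a weight-$1$ edge $\{u,v\}$ of $H^{j-1}$ (whose endpoints are already assigned), I proceed as follows: if $\{u,v\}$ is \emph{crossing}, i.e.\ $u$ and $v$ lie on different sides, I split $V(G_{uv})$ according to $[A,B]$, sending the copy of $A$ to $S$ and the copy of $B$ to $T$; if $u$ and $v$ lie on the same side, I send all of $V(G_{uv})$ to that same side. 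This assignment is well-defined because the endpoints $u,v$ belong to $H^{j-1}$ and are therefore settled before iteration $j$, and by construction it makes $F := \partial(S)$ nice: every weight-$1$ edge $\{u,v\}$ of an $H^j$ with $j<h$ is either crossing (so $|\{u,v\}\cap S| = 1$) or has all of $V_{uv}$ on one side.

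The key step, and the main obstacle, is to verify that $F$ is genuinely a bond, i.e.\ that both $H[S]$ and $H[T]$ are connected. Here the weight-$0$ edges are essential, since every vertex inserted inside a copy $G_{uv}$ is joined to both endpoints $u$ and $v$ of the edge it replaces. I would prove, by induction on the level $j$, that every vertex of $S$ reaches $u_0$ using only edges of $H[S]$, and symmetrically that every vertex of $T$ reaches $v_0$ inside $H[T]$. For a level-$j$ vertex $x \in S$ belonging to a copy $G_{uv}$ replacing a level-$(j-1)$ edge $\{u,v\}$, the vertex $x$ is adjacent to both $u$ and $v$; if $\{u,v\}$ is crossing then one of $u,v$ lies in $S$, and if $\{u,v\}$ is non-crossing then our rule forces $u,v \in S$, so in either case $x$ is adjacent in $H[S]$ to a level-$(j-1)$ vertex of $S$, and the induction closes at the base vertex $u_0$. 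The same argument with $v_0$ handles $T$. As $S \ni u_0$ and $T \ni v_0$ are both nonempty and connected, $F = \partial(S)$ is a bond.

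It then remains to check $\ell$-uniformity and to compute the weight. For $\ell$-uniformity, whenever a weight-$1$ edge $\{u,v\}$ of $H^j$ with $j<h$ is crossing, the splitting of $G_{uv}$ by $[A,B]$ makes the induced cut-set $\hat F_{uv}$ an exact copy of $L$, so $|\hat F_{uv}| = \ell$; together with niceness this shows $F$ is $\ell$-uniform. For the weight, observe that the only weight-$1$ edges of $H = H^h$ are those inside the copies inserted at the last iteration $h$. Writing $c_j$ for the number of crossing weight-$1$ edges at level $j$, the splitting rule yields the recurrence $c_{j+1} = \ell\, c_j$, since each crossing edge spawns a copy contributing exactly $\ell$ crossing edges while each non-crossing edge spawns none, with $c_0 = 1$; hence $c_j = \ell^j$, and $w(F)$ equals the number of crossing weight-$1$ edges at level $h$, namely $c_h = \ell^h$. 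Finally, since $H$ has $\Oh(|V(G)|^{h+1})$ vertices and the entire assignment is one sweep over the levels, the construction runs in polynomial time for the constant $h$.
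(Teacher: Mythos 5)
Your construction is exactly the vertex-partition view of the paper's own proof: the paper recursively sets $F^0=\{\{u_0,v_0\}\}$ and $F^j=\bigcup_{\{u,v\}\in F^{j-1}}L_{uv}$, which coincides with your set of crossing weight-$1$ edges at level $j$, and your recurrence $c_j=\ell^j$ is the paper's $w(F^j)=|L|\cdot w(F^{j-1})$. The proof is correct and follows the same route; you merely spell out the connectivity check via the weight-$0$ edges and the niceness/uniformity verification that the paper leaves as ``one can verify.''
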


\begin{proof}
For each $j \ge 0$, we construct a bond~$F^j$~of~$H^j$.
For $j = 0$, let $F^0$ be the set containing the unique edge of $H^0 = K_2$.
Suppose now that we already constructed a bond~$F^{j-1}$ of~$H^{j-1}$. For each
edge $\{u,v\} \in F^{j-1}$, let $L_{uv}$ be the set of edges of $G_{uv}$
corresponding to~$L$. Define $F^j := \cup_{\{u,v\} \in F^{j-1}} L_{uv}$.
One can verify that indeed $F^j$ is a bond of $H^j$, and that $w(F_j) = |L|
\times w(F_{j-1}) = \ell^j$. 
\end{proof}

\begin{lemma} \label{lem-uniform}
There is a polynomial-time algorithm that receives a bond~$F$~of~$H$, and finds
an $\ell$-uniform bond ${F'}$~of~$H$ such that $w(F') = \ell^h \ge w(F)$.
\end{lemma}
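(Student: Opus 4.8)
The plan is to combine the two preceding lemmas: first normalize $F$ to be nice via Lemma~\ref{lem-edge-cut}, then read off from the normalized bond a single large cut-set $L$ of $G$, and finally rebuild a uniform bond from $L$ using Lemma~\ref{lem-induced-bond}. Concretely, I would start by applying Lemma~\ref{lem-edge-cut} to replace $F$ by a nice bond of the same weight, so that for every level $j$ and every weight-$1$ edge $\{u,v\}\in E(H^j)$ either $|\{u,v\}\cap S|=1$ or no edge of $F$ touches $V_{uv}$. For each \emph{active} edge (one with $|\{u,v\}\cap S|=1$) at a level $j\le h-1$, the set $\hat F_{uv}$ is a genuine cut-set of a copy of $G$; I would let $\ell$ be the maximum of $|\hat F_{uv}|$ taken over all active edges of all levels, and let $L$ be a cut-set attaining this maximum (if $w(F)=0$, so that all these cut-sets are empty, I would instead take $L$ to be any nonempty cut-set of $G$, which only helps). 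Since $H$ has $\Oh(|V(G)|^{h+1})$ vertices and $h$ is constant, enumerating the active edges and their induced cut-sets, and hence computing $\ell$ and $L$, takes polynomial time.

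The heart of the argument is the weight bound $w(F)\le \ell^h$. Here I would exploit the recursive structure of the embedding. Because $F$ is nice, the weight-$1$ edges of $H$ that cross $[S,T]$ are exactly the bottom-level (level-$h$) cut edges, and each such edge lies in the copy $G_{uv}$ attached to a unique active level-$(h-1)$ edge; iterating upward, every crossing level-$h$ edge is a descendant of exactly one active edge at each higher level, forming a tree whose root is the single edge of $H^0=K_2$. For an active edge $e$ at level $j$ let $W(e)$ denote the number of crossing level-$h$ edges among its descendants. Then $W(e)=1$ when $j=h$, and when $j<h$ the children of $e$ are precisely the edges of the induced cut $\hat F_{uv}$, so $W(e)=\sum_{e'\in \hat F_{uv}} W(e')$. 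A downward induction on $h-j$, using $|\hat F_{uv}|\le \ell$ at every step, then yields $W(e)\le \ell^{\,h-j}$ for every active edge $e$ at level $j$; applied to the root edge (which must be active, as otherwise niceness forces one side of the bond to be empty) this gives $w(F)=W(\text{root})\le \ell^{h}$.

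Finally, I would feed the cut-set $L$ of size $\ell$ into Lemma~\ref{lem-induced-bond}, obtaining an $\ell$-uniform bond $F'$ of $H$ with $w(F')=\ell^{h}$; combined with the bound above this gives $w(F')=\ell^{h}\ge w(F)$, as required, and every step is polynomial. I expect the main obstacle to be making the tree/recursion bookkeeping fully rigorous: one has to be careful that distinct induced copies $G_{uv}$ have disjoint descendant sets $V_{uv}$, that niceness really forces inactive edges to contribute nothing, and that the edges of $\hat F_{uv}$ are themselves active at level $j+1$ (so that the recursion $W(e)=\sum_{e'} W(e')$ is exact rather than merely an inequality), which is where the structural definitions of \emph{nice} and of the induced cut-sets must be used most delicately.
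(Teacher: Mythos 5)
Your proposal is correct and follows essentially the same route as the paper: normalize to a nice bond, take $\ell$ to be the maximum size of an induced cut-set $\hat F_{uv}$ over all active edges at all levels, prove $w(F)\le \ell^{h}$ by backward induction on the level using the recursive partition of the crossing edges into the sets associated with the edges of $\hat F_{uv}$, and then invoke Lemma~\ref{lem-induced-bond} to rebuild an $\ell$-uniform bond of weight $\ell^{h}$. The only (harmless) additions are your explicit invocation of Lemma~\ref{lem-edge-cut} at the start, the degenerate case $w(F)=0$, and the observation that the root edge must be active — details the paper leaves implicit.
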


\begin{proof}
Let $[S,T]$ be the cut corresponding to $F$. First, find the largest cut-set of
a graph~$G_{uv}$ over cut-sets~$\hat F_{uv}$. More precisely, define $\hat F$ to
be the cut-set $\hat F_{uv}$ with maximum $|\hat F_{uv}|$ over all edges $\{u,v\}
\in E(H^j)$ with weight~$1$ such that ${|\{u,v\} \cap S\}| = 1}$, and over all
$j$, ${0 \le j \le h-1}$.
Let $\ell := |\hat F|$.

We claim that for every $j$, $0 \le j \le h$, and every edge $\{u,v\} \in E(H^j)$
with weight~$1$ such that ${|\{u,v\} \cap S\}| = 1}$, $w(F_{uv}) \le \ell^{h -
j}$.
The proof is by (backward) induction on $j$. For $j = h$, $F_{uv} = \{u,v\}$, so
$w(F_{uv}) = 1$. Next, let $j < h$, and assume the claim holds for $j+1$.

Let $F^0_{uv}$ be the subset of edges in $F_{uv}$ incident with $u$ or $v$.
The set $F_{uv}$ can be partitioned into $F^0_{uv}$ and sets $F_{st}$ for
$\{s,t\} \in \hat F_{uv}$. To see this, observe that each edge $\{x, y\} \in
F_{uv} \setminus F^0_{uv}$ must be incident with descendants of $\{u,v\}$, and
thus $\{x, y\}$ is incident with vertices of~$V_{st}$, for some edge $\{s,t\}
\in E(G_{uv})$. Since $|\{x, y\}\cap S| = 1$, neither $V_{st} \cup \{s, t\}
\subseteq S$, nor $V_{st} \cup \{s, t\} \subseteq T$. Because $F$ is nice, it
follows that $|\{s,t\} \cap S| = 1$, then $\{s,t\} \in \hat F_{uv}$, and thus
$\{x,y\} \in F_{st}$.
To complete the claim, observe that, by the induction hypothesis, ${w(F_{st}) \le
\ell^{h - j - 1}}$ for each $\{s,t\} \in \hat F_{uv}$, and recall that $|\hat
F_{uv}| \le |\hat F|$. Therefore
\[
w(F) = w(F^0_{uv}) + \sum_{\{s,t\} \in \hat F_{uv}} w(F_{st})
     \le |\hat F| \times \ell^{h - j - 1} =  \ell^{h - j}.
\]

Using Lemma~\ref{lem-induced-bond} for $\hat F$, we construct a bond~$F'$
for~$H$ with ${w(F') = \ell^h}$.
\end{proof}

\begin{lemma}\label{lem-opt-sol}
Let $F^*$ be a bond of $H$ with maximum weight. Then $w(F^*) = k^h$.
\end{lemma}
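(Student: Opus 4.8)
The plan is to sandwich $w(F^*)$ between $k^h$ and $k^h$ using the machinery built in the preceding lemmas, which translate optimal bonds of $H$ into cut-sets of $G$ and back. Since all the technical work (niceness, uniformity, and the backward induction bounding $w(F_{uv})$ by $\ell^{h-j}$) has already been carried out, this final step amounts to combining Lemmas~\ref{lem-uniform}, \ref{lem-induced-cut}, and~\ref{lem-induced-bond} and invoking the definition of $k$ as the size of a maximum cut of $G$.

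For the lower bound $w(F^*) \ge k^h$, I would start from a maximum cut of $G$, which by definition yields a cut-set $L$ of $G$ with $|L| = k$. Feeding $L$ into Lemma~\ref{lem-induced-bond} produces a $k$-uniform bond $F$ of $H$ with $w(F) = k^h$. Because $F^*$ is a maximum-weight bond, this immediately gives $w(F^*) \ge w(F) = k^h$.

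For the upper bound $w(F^*) \le k^h$, I would apply Lemma~\ref{lem-uniform} to $F^*$ itself, obtaining an $\ell$-uniform bond $F'$ with $w(F') = \ell^h \ge w(F^*)$. Maximality of $F^*$ forces $w(F') = w(F^*)$, so in fact $w(F^*) = \ell^h$. The key observation is then that an $\ell$-uniform bond cannot encode a cut of $G$ larger than the maximum one: by Lemma~\ref{lem-induced-cut}, $F'$ induces a cut-set of $G$ of size exactly $\ell$, and since the maximum cut of $G$ has size $k$, we must have $\ell \le k$. Hence $w(F^*) = \ell^h \le k^h$.

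Combining the two bounds yields $w(F^*) = k^h$. I do not anticipate a genuine obstacle here, as the substance is entirely in the earlier lemmas; the only point requiring slight care is keeping the directions of the inequalities consistent. In particular, it matters that Lemma~\ref{lem-uniform} returns a bond whose weight is \emph{at least} that of the input, which is precisely what lets maximality of $F^*$ pin down $w(F^*) = \ell^h$ rather than merely bound it, and dually that any cut-set of $G$ has size \emph{at most} $k$, so that the induced $\ell$ satisfies $\ell \le k$.
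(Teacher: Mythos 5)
Your proof is correct and follows essentially the same route as the paper: uniformize $F^*$ via Lemma~\ref{lem-uniform} and extract a cut-set of size $\ell \le k$ for the upper bound, and push a maximum cut of $G$ through Lemma~\ref{lem-induced-bond} for the lower bound. (You even cite Lemma~\ref{lem-induced-cut} for the extraction step where the paper's text appears to mis-cite Lemma~\ref{lem-induced-bond}.)
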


\begin{proof}
We assume that $F^*$ is $\ell$-uniform such that $w(F^*) = \ell^h$ for some
$\ell$; if this is not the case, then use Lemma~\ref{lem-uniform}.

Since $F^*$ is $\ell$-uniform, using Lemma~\ref{lem-induced-bond} one obtains a
cut-set $L$ of $G$ with size~$\ell$, then~$\ell \le k$, and thus $w(F^*) \le
k^h$.

Conversely, let $L$ be a cut-set of $G$ with size $k$. Using
Lemma~\ref{lem-induced-bond} for $L$, we obtain a bond~$F$ of~$H$ with weight
$k^h$, and thus $w(F^*) \ge k^h$. 
\end{proof}

\begin{lemma}\label{lem-weighted-hard}
If there exists a constant-factor approximation algorithm for {\sc Weighted
Largest Bond}, then $\PP = \NP$.
\end{lemma}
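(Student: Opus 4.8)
The plan is to leverage the machinery built up in Lemmas~\ref{lem-edge-cut} through~\ref{lem-opt-sol} to amplify the known inapproximability of \textsc{Maximum Cut}. Recall that the operator $\xig$ depends on a fixed graph $G$ whose maximum cut has size $k$, and by Lemma~\ref{lem-opt-sol} the maximum-weight bond of $H = \xig^h(K_2)$ has weight exactly $k^h$. The key observation is that the exponent $h$ acts as an amplification parameter: a multiplicative gap of factor $c$ in the \textsc{Maximum Cut} value becomes a gap of factor $c^h$ in the \textsc{Weighted Largest Bond} value, and $h$ can be chosen as large as we like (it is a constant for each reduction, but unbounded across the family of reductions).

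First I would invoke the known APX-hardness (or simply NP-hardness of distinguishing two cut values differing by a constant factor) of \textsc{Maximum Cut}. Concretely, there is a constant $\alpha < 1$ and a family of instances $G$ for which it is NP-hard to decide whether the maximum cut has size at least $k$ or strictly less than $\alpha k$. Suppose, for contradiction, that there is a polynomial-time $\rho$-approximation algorithm $A$ for \textsc{Weighted Largest Bond}, for some fixed constant $\rho \le 1$. Given a \textsc{Maximum Cut} instance $G$, I would fix a constant integer $h$ large enough that $\alpha^h < \rho$, construct $H = \xig^h(K_2)$ in polynomial time (recall $\xig^h(K_2)$ is built in $\Oh(|V(G)|^{h+1})$ time, which is polynomial for constant $h$), and run $A$ on $H$.

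The analysis then uses Lemma~\ref{lem-opt-sol} in both directions. On one side, if $\mathrm{mc}(G) \ge k$, then the optimum bond weight is $k^h$, so $A$ returns a bond of weight at least $\rho k^h$. On the other side, if $\mathrm{mc}(G) < \alpha k$, then by Lemma~\ref{lem-opt-sol} the optimum bond weight is $(\mathrm{mc}(G))^h < (\alpha k)^h = \alpha^h k^h < \rho k^h$, so \emph{any} bond---in particular the one $A$ returns---has weight strictly below $\rho k^h$. Thus the weight of $A$'s output lets us distinguish the two cases, solving an NP-hard promise problem in polynomial time, whence $\PP = \NP$.

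The main obstacle is making sure the amplification is rigorous: I must be careful that Lemma~\ref{lem-opt-sol} gives the \emph{exact} optimum $k^h$ (not merely a bound), so that the ``no'' instances genuinely have optimum below the threshold $\rho k^h$, and I must confirm that $h$ being constant keeps the construction polynomial-time. A secondary subtlety is that the hardness is stated for \textsc{Weighted Largest Bond}; the surrounding text already notes the unweighted hardness follows, but within this lemma the only real content is the gap-amplification argument, so I would keep the proof focused on choosing $h$ with $\alpha^h < \rho$ and deriving the contradiction from Lemma~\ref{lem-opt-sol}.
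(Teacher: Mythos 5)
Your proposal is correct and rests on the same amplification engine as the paper -- the $h$-fold embedding $\xig^h(K_2)$ together with Lemma~\ref{lem-opt-sol}'s exact characterization $w(F^*)=k^h$ -- but the final step is organized differently. The paper runs the $\alpha$-approximation on $H$, then uses Lemmas~\ref{lem-uniform} and~\ref{lem-induced-cut} to \emph{extract} from the returned bond an actual cut-set of $G$ of size $\ell \ge \alpha^{1/h}k$; choosing $h=\lceil\log_{1-\varepsilon}\alpha\rceil$ turns the hypothetical $\alpha$-approximation for the bond problem into a $(1-\varepsilon)$-approximation for \textsc{Maximum Cut}, contradicting APX-hardness. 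You instead never extract a cut: you phrase the hardness of \textsc{Maximum Cut} as a gap promise problem (distinguish $\mathrm{mc}(G)\ge k$ from $\mathrm{mc}(G)<\alpha k$), choose $h$ with $\alpha^h<\rho$, and decide the instance purely by comparing the \emph{weight} of the returned bond against the threshold $\rho k^h$, using Lemma~\ref{lem-opt-sol} in both directions. Your route is slightly leaner in that it dispenses with the uniformization and cut-recovery lemmas for this particular step (they are still needed elsewhere, e.g.\ to prove Lemma~\ref{lem-opt-sol} itself), at the cost of invoking the gap formulation with the threshold $k$ supplied as part of the instance -- a standard but additional piece of bookkeeping. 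Both arguments are sound; do make sure, as you note, that you use the exactness of the optimum $(\mathrm{mc}(G))^h$ rather than a one-sided bound, since the ``no'' side of your case analysis genuinely depends on the upper bound $w(F^*)\le(\mathrm{mc}(G))^h$.
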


\begin{proof}
Consider a graph $G$ whose maximum cut has size~$k$. Construct graph~$H$ and
obtain a bond $F$~of~$H$ using an $\alpha$-approximation, for some constant ${0
< \alpha < 1}$. \mbox{Using} the algorithm of Lemma~\ref{lem-uniform}, obtain an
$\ell$-uniform bond $F'$~of~$H$ such that $w(F') = \ell^h \ge w(F)$.
Using Lemma~\ref{lem-opt-sol} and the fact that $F'$ is an
$\alpha$-approximation, ${\ell^h \ge \alpha \times k^h}$.
Using Lemma~\ref{lem-induced-cut}, one can obtain a cut-set $L$ of $G$ with size
$\ell \ge \alpha^{\frac{1}{h}} k$.

For any constant $\varepsilon$, ${0 < \varepsilon < 1}$, we can set $h =
\lceil\log_{1 - \varepsilon} \alpha \rceil$, such that the cut-set~$L$ has size
at least $\ell \ge (1 - \varepsilon) k$. Since {\sc Maximum Cut} is APX-hard,
this implies $\PP = \NP$. 
\end{proof}

\begin{theorem}
If there exists a constant-factor approximation algorithm for {\sc Largest
Bond}, then $\PP = \NP$.
\end{theorem}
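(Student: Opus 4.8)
The plan is to reduce the binary-weighted version of the problem to the unweighted one, thereby transferring the inapproximability established in Lemma~\ref{lem-weighted-hard}. Given a connected graph $H$ with weights $w:E(H)\to\{0,1\}$, let $n_0$ and $n_1$ denote the number of edges of weight $0$ and $1$, and fix a large integer $N$ (say $N=n_0+1$). I would construct a simple unweighted graph $H'$ as follows: keep every weight-$0$ edge as an ordinary edge, and replace every weight-$1$ edge $\{u,v\}$ by a gadget consisting of $N$ fresh vertices $z_1,\dots,z_N$, each made adjacent to both $u$ and $v$. The point of the gadget is that whenever $u$ and $v$ lie on opposite sides of a cut, each $z_i$ is forced to contribute exactly one crossing edge, independently of which side it is placed on, so the gadget contributes exactly $N$ to the cut-set; and whenever $u,v$ lie on the same side, every $z_i$ must join that same side (otherwise it would be an isolated vertex on the other side, destroying connectivity), so the gadget contributes $0$.

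First I would establish the forward direction: a bond of $H$ with weighted value $p$ (i.e.\ crossing exactly $p$ weight-$1$ edges) extends to a bond of $H'$ of size between $Np$ and $Np+n_0$, by placing, for each separated gadget, all its $z_i$ on the side of $u$, and assigning the remaining $z_i$ to their unique side; connectivity of both sides is immediate since each $z_i$ is attached to a vertex already present on its side. Conversely, I would show that any bond $[S',T']$ of $H'$ of size larger than $2$ projects to a bond of $H$ by restricting to $V(H)$: because no two gadget vertices are adjacent, any side consisting solely of gadget vertices would be edgeless, so both $S'\cap V(H)$ and $T'\cap V(H)$ are nonempty; and any path of $H'[S']$ passing through a gadget vertex $z_i$ has the form $u-z_i-w$ with $\{u,w\}$ a weight-$1$ edge of $H$, so it can be shortcut to an edge of $H[S'\cap V(H)]$, proving both projected sides connected. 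Counting crossing edges then shows that this projected bond crosses $(|\partial(S')|-c)/N$ weight-$1$ edges, where $c\le n_0$ is the number of crossed weight-$0$ edges; hence its weighted value is at least $(|\partial(S')|-n_0)/N$.

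Combining the two directions yields $N\cdot\mathrm{OPT}_w(H)\le \mathrm{OPT}(H')\le N\cdot\mathrm{OPT}_w(H)+n_0$, where $\mathrm{OPT}_w$ and $\mathrm{OPT}$ denote the optimal weighted and unweighted bond values. Now suppose an $\alpha$-approximation for {\sc Largest Bond} exists for some constant $0<\alpha<1$. Running it on $H'$ returns a bond of size at least $\alpha\,\mathrm{OPT}(H')\ge \alpha N\cdot\mathrm{OPT}_w(H)$; projecting it back gives, in polynomial time, a bond of $H$ of weighted value at least $\alpha\,\mathrm{OPT}_w(H)-n_0/N>\alpha\,\mathrm{OPT}_w(H)-1$. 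Since $H=\xig^h(K_2)$ has $\mathrm{OPT}_w(H)=k^h$ by Lemma~\ref{lem-opt-sol}, which can be made arbitrarily large, the additive loss of $1$ is absorbed and we obtain a constant-factor approximation for {\sc Binary Weighted Largest Bond}, contradicting Lemma~\ref{lem-weighted-hard} unless $\PP=\NP$.

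The step I expect to be the main obstacle is the converse projection: one must argue carefully that restricting an arbitrary bond of $H'$ to the original vertices yields a genuine bond of $H$ — in particular that the connectivity of both sides survives the projection and that degenerate bonds (whose small side is a single gadget vertex) are ruled out by the size bound. The gadget is designed precisely so that separated endpoints always contribute the full weight $N$ while same-side endpoints contribute nothing and force their gadget vertices to follow, which is what makes the weighted value recoverable up to the negligible additive error $n_0/N$.
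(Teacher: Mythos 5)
Your construction is the same one the paper uses: replacing each weight-$1$ edge by $N$ degree-two vertices adjacent to both endpoints is exactly the paper's ``$m$ parallel edges, then subdivide'' gadget, and both arguments then transfer the inapproximability of the binary-weighted version (Lemma~\ref{lem-weighted-hard}) to the unweighted problem, absorbing the additive loss of at most $1$ by the largeness of the optimum (the paper via the explicit assumption $l\ge 2/\alpha$ with brute force otherwise, you via $\mathrm{OPT}_w=k^h$ being large). The proof is correct and essentially identical to the paper's, with your converse-projection analysis in fact spelling out details the paper leaves implicit.
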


\begin{proof}
We show that if there exists an $\alpha$-approximation algorithm for {\sc
Largest Bond}, for constant $0 < \alpha < 1$, then there is an
$\alpha/2$-approximation algorithm for the {\sc Binary Weighted Largest Bond}, so the
theorem will follow from Lemma~\ref{lem-weighted-hard}.

Let $H$ be a weighted graph whose edge weights are all $0$~or~$1$. Let $m$ be
the number of edges with weight~$0$, and let $l$ be the weight of a bond of
$H$ with maximum weight.
Assume $l\ge 2/\alpha$, as otherwise, one can find an optimal solution in
polynomial time by enumerating sets of up to $2/\alpha$ edges.

Construct an unweighted graph $G$ as follows. Start with a copy of $H$ and, for
each edge $\{u,v\} \in E(H)$ with weight $1$, replace $\{u,v\} \in E(G)$ by $m$
parallel edges. Finally, to obtain a simple graph, subdivide each edge of $G$.
If $F$ is a bond of~$G$, then one can construct a bond~$F'$ of~$H$ by undoing
the subdivision and removing the parallel edges. Each edge of $F'$ has weight
$1$, with exception of at most $m$ edges. Thus,
$
w(F') \ge (|F| - m) / m.
$

Observe that an optimal bond of $H$ induces a bond of $G$ with size at
least~$ml$. Thus, if~$F$ is an $\alpha$-approximation for $G$, then ${|F| \ge
\alpha m l}$ and therefore
\[
w(F') \ge \frac{\alpha m l - m}{m} = \alpha l - 1 \ge \alpha l - \alpha l / 2 = \alpha l / 2.
\]
We conclude that $F'$ is an $\alpha/2$-approximation for $H$. 
\end{proof}

\section{Parameterized algorithms}

In this section, we present parameterized algorithms for {\sc Largest Bond}, {\sc Largest $st$-Bond}, {\sc Maximum Connected Cut}, and {\sc Maximum Connected $st$-Cut}. 

\subsection{Algorithmic upper bounds for clique-width parameterization}

Lemma~\ref{lb_cw} shows that {\sc Largest Bond} on graphs of clique-width $w$ cannot be solved in time $f(w)\times n^{o(w)}$ unless the ETH fails. Now, we show that given an expression tree of width at most $w$, {\sc Largest Bond} can be solved in $f(w)\times n^{O(w)}$ time.


An expression tree $\mathcal{T}$ is irredundant if for any join node $\eta(i,j)$, the vertices labeled
by~$i$ and~$j$ are not adjacent in the graph associated with its child. It was shown by Courcelle and Olariu~\cite{courcelle2000upper} that every expression tree $\mathcal{T}$ of $G$ can be transformed into an irredundant expression tree $\mathcal{T}$ of the same width in time linear in the size of~$\mathcal{T}$. Therefore, without loss of generality, we can assume that $\mathcal{T}$ is irredundant.

Our algorithm is based on dynamic programming over the expression tree of the input graph. We first describe what we store in the tables corresponding to the nodes in the expression tree.

Given a $w$-labeled graph $G$, two connected components of $G$ has the same {\em type} if they have the same set of labels. Thus, a $w$-labeled graph $G$ has at most $2^w-1$ types of connected components.

Now, for every node $X_\ell$ of $\mathcal{T}$, denote by $G_{X_\ell}$ the $w$-labeled graph associated with this node, and let $L_1(X_\ell),\ldots,L_w(X_\ell)$ be the sets of vertices of $G_{X_\ell}$ labeled with $1,\ldots,w$, respectively. 
We define a table where each entry is of the form $$c[\ell,s_{1}, ..., s_{w}, r, e_{1}, ..., e_{2^w-1}, d_{1}, ..., d_{2^w-1}],$$ such that: $0\leq s_i \leq |L_i(X_\ell)|$ for $1\leq i\leq w$; $0\leq r\leq |E(G_{X_\ell})|$; $0\leq e_{i}\leq \min\{2,|L_i(X_\ell)|\}$ for $1\leq i\leq 2^w-1$; and $0\leq d_{i}\leq \min\{2,|L_i(X_\ell)|\}$ for $1\leq i\leq 2^w-1$.

Each entry of the table represents whether there is a partition $V_1,V_2$ of $V(G_{X_\ell})$ such that: $|V_1\cap L_i(G_{X_\ell})| = s_i$; the cut-set of $[V_1,V_2]$ has size at least $r$;  $G_{X_\ell}[V_1]$ has $e_i$ connected components of type $i$; $G_{X_\ell}[V_2]$ has $d_i$ connected components of type $i$, where $e_i=2$ means that $G_{X_\ell}[V_1]$ has {\em at least} two connected components of type $i$. The same holds for $d_i$.

Notice that this table contains $f(w)\times n^{\Oh(w)}$ entries.
If $X_\ell$ is the root node of $\mathcal{T}$ (that is, $G = G_{X_\ell})$, then the size of the largest bond of $G$ is equal to the maximum value of $r$ for which the table for $X_\ell$ contains a valid entry (true value), such that there are $j$ and $k$ such that $e_{i} = 0$, $e_{j} = 1$ for $1 \leq i, j \leq 2^w-1$, $i \neq j$; and $d_{i} = 0$, $d_{k} = 1$ for $1\leq i, k \leq 2^w-1$,~$i \neq k$. Similarly, the maximum connected cut of $G$ can be found in a valid entry for $r$ such that some $e_{j} = 1$, and $e_{i} = 0$ for every $i \neq j$.

It is easy to see that we store enough information to compute a largest bond. Note that a $w$-labeled graph is connected if and only if it has exactly one type of connected components and exactly one component of such a type.

Now we provide the details of how to construct and update such tables. The construction for introduce nodes of $\mathcal{T}$ is straightforward.

\smallskip

\textbf{Relabel node:} Suppose that $X_\ell$ is a relabel node $\rho(i,j)$, and let $X_{\ell'}$ be the child of $X_\ell$. 
The table for $X_\ell$ contains a valid entry $c[\ell,s_{1}, ..., s_{w}, r, e_{1}, ..., e_{2^w-1}, d_{1}, ..., d_{2^w-1}]$ if and only if  the table for $X_{\ell'}$ contains an entry 
$$c[\ell',s'_{1}, ..., s'_{w}, r, e'_{1}, ..., e'_{2^w-1}, d'_{1}, ..., d'_{2^w-1}]=true,$$
where:

$s_i = 0$;

$s_{j}=s'_{i}+s'_{j}$;

$s_{p} = s'_{p}$ for $1\leq p \leq w$, $p\neq i, j$;

$e_p=e'_p$ for any type that contain neither $i$ nor $j$;

$e_p=0$ for any type that contains $i$;

and for any type $e_p$ that contains $j$, it holds that $e_p=\min\{2, e'_p+e'_q+e'_r\}$ where

$e'_q$ represent the set of labels $(C_p\setminus \{j\})\cup \{i\}$,

$e'_r$ represent the set of labels $C_p\cup \{i\}$, and

$C_p$ is the set of labels associated to $p$.
The same holds for $d_{1}, ..., d_{2^w-1}$.

\smallskip

\textbf{Union node:} Suppose that $X_\ell$ is a union node with children $X_{\ell'}$ and $X_{\ell''}$.
It holds that $c[\ell,s_{1}, ..., s_{w}, r, e_{1}, ..., e_{2^w-1}, d_{1}, ..., d_{2^w-1}]$ equals true if and only if there are valid entries 
$$c[\ell', s'_{1}, ..., s'_{w}, r', e'_{1}, ..., e'_{2^w-1}, d'_{1}, ..., d'_{2^w-1}]$$ and $$c[\ell'',s''_{1}, ..., s''_{w}, r'', e''_{1}, ..., e''_{2^w-1}, d''_{1}, ..., d''_{2^w-1}],$$ 
having:

$s_i=s'_{i}+s''_{i}$ for $1 \leq i \leq w$;

$r' + r'' \geq r$;

$e_{k} = \min\{2, e'_{k}+e''_{k}\}$, and

$d_{k} = min\{2, d'_{k} + d''_{k}\}$ for $1\leq k \leq 2^w-1$.

\smallskip

\textbf{Join node:}
Finally, let $X_\ell$ be a join node $\eta(i,j)$ with the child $X_{\ell'}$. Remind that since the expression tree is irredundant then the vertices labeled by $i$ and $j$ are not adjacent in the graph $G_{X_{\ell'}}$. Therefore, the entry $c[\ell,s_{1}, ..., s_{w}, r, e_{1}, ..., e_{2^w-1}, d_{1}, ..., d_{2^w-1}]$ equals true if and only if there is a valid entry 
$$c[\ell',s_{1}, ..., s_{w}, r', e'_{1}, ..., e'_{2^w-1}, d'_{1}, ..., d'_{2^w-1}]$$ 
where
$$r' + s_{i}\times(|L_{j}(X_{\ell'})| - s_{j}) + s_{j}\times(|L_{i}(X_{\ell'})| - s_{i}) \geq r,$$
and
$e_p=e'_p$, case $p$ is associated to a type that contains neither $i$ nor $j$;
$e_{p} = 1$, case $p$ is associated to $C^{\ell'}_{i,j}\setminus \{i\}$, where $C^{\ell'}_{i,j}$ is the set of labels obtained by the union of the types of $G_{X_{\ell'}}$ with some connected component having either label $i$ or label $j$;
$e_{p} = 0$, otherwise.
The same holds for $d_{1}, ..., d_{2^w-1}$.

The correctness of the algorithm follows from the description of the procedure. Since for each $\ell$, there are $\Oh((n+1)^w\times m\times (3^{2^w-1})^2)$ entries, the running time of the algorithm is $f(w)\times n^{\Oh(w)}$. This algorithm  together with Lemma~\ref{lb_cw} concludes the proof of the Theorem~\ref{cliquewidth_up}.

\begin{theorem}\label{cliquewidth_up}
Given an expression tree of width at most $w$, {\sc Largest Bond} and {\sc Maximum Connected Cut} can be solved in time $f(w)\times n^{\Oh(w)}$, but they cannot be solved in time $f(w)\times n^{o(w)}$ unless ETH fails.
\end{theorem}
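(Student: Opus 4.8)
The plan is to prove the upper bound by dynamic programming over an irredundant $w$-expression tree $\mathcal{T}$ of $G$; the matching lower bound is exactly Lemma~\ref{lb_cw}, so it suffices to design an $f(w)\times n^{\Oh(w)}$-time algorithm. For each node $X_\ell$ I would process the associated $w$-labeled graph $G_{X_\ell}$ and maintain a boolean table whose entries are indexed, as in the statement, by the profile $(s_1,\dots,s_w)$ recording how many $i$-labeled vertices lie in the $S$-side of a tentative partition $[V_1,V_2]$, by a cut-size counter $r$, and by two vectors $(e_1,\dots,e_{2^w-1})$ and $(d_1,\dots,d_{2^w-1})$ that record, for each of the at most $2^w-1$ possible label-sets (\emph{types}), the number of connected components of that type in $G_{X_\ell}[V_1]$ and in $G_{X_\ell}[V_2]$, respectively. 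The only arithmetic fact needed to justify the counter $r$ is that a join $\eta(i,j)$ inserts exactly $s_i\cdot(|L_j(X_{\ell'})|-s_j)+s_j\cdot(|L_i(X_{\ell'})|-s_i)$ edges across the cut, which is computable from the profile alone.

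The crux is tracking connectivity correctly, and the key observation is that two components of $G_{X_\ell}$ behave identically under every future operation precisely when they carry the same set of labels; hence it is enough to count components \emph{per type}, and since the final feasibility test only asks whether each side is a single component, I may cap every count at $2$ (interpreting the value $2$ as ``at least two''). This bounds the number of type-vectors by $3^{2^w-1}$. I would then specify the transition at each node type. Leaves and relabel nodes only rename labels, so a relabel $\rho(i,j)$ moves the label $i$ into $j$ in the profile and replaces each type containing $i$ by the corresponding type with $i$ deleted, summing (with the cap) the component counts that collapse onto the same resulting type. A union node simply adds the two profiles and the two cut counters, and componentwise adds (with the cap) the two type-vectors on each side.

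The delicate case is the join $\eta(i,j)$: using irredundancy, the newly added edges never fall inside an existing component, and the merging rule is that whenever both an $i$-labeled and a $j$-labeled vertex are present, \emph{all} components whose type meets $\{i,j\}$ coalesce into a single new component whose type is the union of their types, while components avoiding both labels are untouched. I would argue that this rule depends only on whether each relevant type-count is zero or positive, so it is well defined on the capped vectors, and I would write the corresponding update for $(e_p)$ and $(d_p)$ separately on the two sides. At the root $G=G_{X_\ell}$ the answer for \textsc{Largest Bond} is the largest $r$ in a true entry for which both $(e_p)$ and $(d_p)$ have exactly one coordinate equal to $1$ and all others $0$ (one component on each side), and for \textsc{Maximum Connected Cut} the same condition is imposed only on $(e_p)$. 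Counting states gives $\prod_i(|L_i(X_\ell)|+1)\cdot(|E|+1)\cdot 3^{2(2^w-1)}\le (n+1)^w\cdot(m+1)\cdot 3^{2^{w+1}}=f(w)\times n^{\Oh(w)}$ entries per node, each filled in polynomial time, which yields the claimed running time and, together with Lemma~\ref{lb_cw}, the tightness. The main obstacle I anticipate is verifying that the capped per-type counting is a faithful invariant under the simultaneous multi-component merge caused by a join; this hinges on checking that no two genuinely distinct feasibility outcomes are conflated by the cap, which follows because only the ``exactly one component'' configuration is accepted at the root.
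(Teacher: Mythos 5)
Your proposal is correct and follows essentially the same route as the paper: the same table indexed by the label profile $(s_1,\dots,s_w)$, the cut counter $r$, and per-type component counts on each side capped at $2$, with the identical join edge-count formula and the same acceptance condition at the root. The only points worth double-checking in a full write-up are the per-side trigger for the join merge (merging in $V_1$ happens exactly when $s_i>0$ and $s_j>0$, read off the profile) and the reliance on irredundancy so that join edges never land inside an existing component — both of which you already flag.
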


In order to extend this result to {\sc Largest $st$-Bond} and {\sc Maximum Connected $st$-Cut}, it is enough to observe that given a tree expression $\mathcal{T}$ of $G$ with width $w$, it is easy to construct a tree expression $\mathcal{T}'$ with width equals $w+2$, where no vertex of $V(G)$ has the same label than either $s$ or $t$. Let $w+1$ be the label of $s$, and let $w+2$ be the label of $t$. By fixing, for each $\ell$, $s_{w+1}=|L_{w+1}(X_\ell)|$ and~$s_{w+2}=0$, one can solve {\sc Largest $st$-Bond} in time $f(w)\times n^{\Oh(w)}$.

In addition, we can improve the clique-width dependence for a clique-width parameterization of {\sc Largest Bond} and {\sc Maximum Connected Cut}, using another parameter, called "{\em module width}", proposed by \cite{Rao2008}. However, since the algorithms are complicated, for simplicity and readability, we present them only in Appendix.

\begin{theorem}\label{cliquewidth_up_faster}
Given an expression tree of width at most $w$, {\sc Largest Bond} and {\sc Maximum Connected Cut} can be solved in time $n^{\Oh(w)}$.
\end{theorem}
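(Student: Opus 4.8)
The plan is to replace the dynamic program of Theorem~\ref{cliquewidth_up}, which runs over the clique-width expression, by one that runs over a \emph{module-width} decomposition in the sense of Rao~\cite{Rao2008}. The reason this helps is that the $f(w)$ factor in Theorem~\ref{cliquewidth_up} is doubly exponential: tracking the connected-component \emph{types} required recording, for each of the $2^w-1$ possible label sets, a count in $\{0,1,2\}$, contributing $\Oh(3^{2^w})$ states per side. Over a module-width layout the relevant structure at a decomposition node is the set of at most $w$ \emph{neighborhood classes} (maximal sets of vertices below the node sharing the same neighborhood outside it), and connectivity can be tracked by a partition of these $\le w$ classes. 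Since the number of partitions of a $w$-element set is the Bell number $B_w \le w^{\Oh(w)}$ and $w \le n$, this factor is at most $n^{\Oh(w)}$ and is absorbed into the polynomial, leaving total running time $n^{\Oh(w)}$. First I would recall that $mw(G) \le cw(G)$ and that a rooted layout of module-width at most $w$ can be extracted from the given $w$-expression in polynomial time~\cite{Rao2008}.

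Next I would set up the table at a node $a$ of the layout, with $V_a$ the set of vertices below $a$ and $Q_1,\dots,Q_{w'}$, $w' \le w$, its neighborhood classes. For a partial cut $(S\cap V_a, T\cap V_a)$ the state records: (i) for each class the number of its vertices placed in $S$ and in $T$ --- these counts suffice to evaluate all future crossing edges, because every vertex of a class has the same adjacency to $V(G)\setminus V_a$; (ii) the current weight of the partial cut-set; (iii) a partition $\mathcal P_S$ of the classes meeting $S$ recording which lie in a common component of $G[S\cap V_a]$, together with the analogous partition $\mathcal P_T$; and (iv) Boolean flags indicating whether $G[S\cap V_a]$ (resp.\ $G[T\cap V_a]$) already contains a \emph{completed} component disjoint from every boundary class. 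As in Theorem~\ref{cliquewidth_up}, the answer is read off at the root: a valid cut must have exactly one $S$-component and no completed $S$-components; \textsc{Largest Bond} additionally demands exactly one $T$-component and no completed $T$-components, whereas \textsc{Maximum Connected Cut} constrains only the $S$-side.

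The transitions mirror the previous algorithm but over the binary layout. Leaves are immediate. At an internal node with children $b,c$ I would combine each pair of child states by adding the class counts, adding to the partial weight the number of newly crossing edges between the two sides of the two children (determined by the class counts and the fixed bipartite adjacency pattern between the classes of $b$ and of $c$), and merging $\mathcal P_S$ and $\mathcal P_T$ along the blocks whose classes become adjacent; the completed-component flags are updated whenever a block becomes disjoint from all surviving boundary classes. The $st$-variants follow exactly as before by forcing $s \in S$ and $t \in T$.

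The step I expect to be the main obstacle is proving that the connectivity invariant is \emph{sound and sufficient}: that two vertices in the same neighborhood class are interchangeable for all connections created at higher nodes, so that the component structure of the eventual $G[S]$ and $G[T]$ is determined exactly by the recorded partitions together with the count-based merging rule. Once this interchangeability is established, correctness of each transition is a routine case check. For the running time, the number of states at a node is bounded by the $n^{\Oh(w)}$ count-vectors, times the $\Oh(m)$ possible cut weights, times $B_w^2 \le w^{\Oh(w)} \le n^{\Oh(w)}$ choices of $(\mathcal P_S,\mathcal P_T)$, times $\Oh(1)$ flags, i.e.\ $n^{\Oh(w)}$ in total; combining the two children of each of the $\Oh(n)$ nodes costs $n^{\Oh(w)}$, giving the claimed $n^{\Oh(w)}$ bound.
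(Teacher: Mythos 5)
Your high-level plan coincides with the paper's: both pass from the $w$-expression to Rao's module-width decomposition \cite{Rao2008}, both exploit the fact that within a twin-class only the \emph{number} of vertices placed on each side matters for all future crossing edges, and both thereby replace the $3^{2^w}$ component-type bookkeeping of Theorem~\ref{cliquewidth_up} by something polynomial in $n$ per node. The genuine gap is exactly where you predicted it, in the connectivity invariant, and I do not think your proposed state survives scrutiny. A partition $\mathcal{P}_S$ of the boundary classes ``recording which lie in a common component'' is not well defined once a single class meets several components of $G[S\cap V_a]$: the relation ``$Q_i$ and $Q_j$ meet a common component'' is then not transitive, and even if you pass to its transitive closure you lose essential information. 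Concretely, if $S\cap V_a$ consists of two components each contained in the same class $Q_1$, your state is identical to the one for a single component inside $Q_1$; whether $G[S]$ ends up connected then depends on whether some later $S$-vertex is adjacent to $Q_1$, and at the root (a single class, trivial partition) the table cannot distinguish the connected from the disconnected outcome. The canonical repair --- recording, for each subset of classes, the number of components meeting exactly that subset, capped at $2$ --- is precisely the $3^{2^{w}}$-state scheme of Theorem~\ref{cliquewidth_up} and destroys the $n^{\Oh(w)}$ bound, so the ``routine case check'' you defer is actually the whole difficulty.

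The paper escapes this differently: it never carries cross-class component structure upward at all. Its state keeps, besides the counts $p_i,\overline{p}_i$, only a single bit $c_i$ per twin-class of the current node recording whether $G[S\cap X^v_i]$ is connected, and it resolves connectivity \emph{eagerly} at each combine step (conditions C3--C4 built on Observation~\ref{obs:connect}): when twin-classes of the two children merge into one twin-class of the parent, the adjacency between a child-$a$ class and a child-$b$ class is complete bipartite or empty, so any two nonempty intersected classes that are adjacent collapse into one component on the spot --- including multiple components inside a single class --- and connectedness of the merged class is decided from the bits and the nonemptiness pattern alone, giving $4^w n^w$ states. You should be aware that even this route requires care (the bipartite adjacency graph in Observation~\ref{obs:connect} does not account for edges between two twin-classes of the \emph{same} child, which may already connect them inside $G[L_a]$), so the eager-resolution argument is the lemma you would actually need to prove; as written, your proposal neither states it nor supplies a substitute that keeps the state space at $n^{\Oh(w)}$.
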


\subsection{Bounding the treewidth of G to a linear function of $k$}

Now, we deal with the problems parameterized by the size of the solution ($k$).
We consider the strategy of preprocessing the input in order to bound the treewidth of the resulting instance. We start our analysis with {\sc Largest Bond}.

\begin{definition}
A graph $H$ is called a minor of a graph $G$ if $H$ can be formed from $G$ by deleting edges, deleting vertices, and by contracting edges. For each vertex $v$ of $H$, the set of vertices of $G$ that are contracted into $v$ is called a branch set of $H$.
\end{definition}

\begin{lemma}\label{k2kimpliesbond}
Let $G$ be a simple connected undirected graph, and $k$ be a positive integer. If $G$ contains $K_{2,k}$ as a minor then $G$ has a bond of size at least $k$.
\end{lemma}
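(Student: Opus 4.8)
The plan is to read a concrete branch decomposition of the $K_{2,k}$ minor and turn it into a bond by a single, carefully chosen cut. Fix a minor model: let $A$ and $B$ be the (disjoint, connected) branch sets corresponding to the two vertices of the size-$2$ side of $K_{2,k}$, and let $C_1,\dots,C_k$ be the branch sets of the size-$k$ side. By the definition of a minor, for each $i$ there is at least one edge of $G$ joining $A$ to $C_i$, and at least one joining $B$ to $C_i$; since the $C_i$ are pairwise disjoint, the first family already exhibits $k$ distinct edges, and these are exactly the edges I will force to cross the cut. The role of $B$ is purely to glue the $C_i$ together on the far side: because $B$ is connected and each $C_i$ is connected and adjacent to $B$, the set $T_0 := B \cup C_1 \cup \dots \cup C_k$ induces a connected subgraph of $G$.

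First I would place $A$ on one side of the cut and place everything still reachable from $T_0$ after deleting $A$ on the other side. Formally, consider the components of $G - A$. Since $T_0$ is connected and disjoint from $A$, it lies inside a single component $M$ of $G-A$; set $T := M$ and $S := V(G)\setminus M$, so that $A \subseteq S$ and $T_0 \subseteq T$. Because $A \ne \emptyset$ and $B \subseteq T$, both sides are nonempty and $S$ is a proper subset, as required.

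Next I would verify the two connectivity conditions and the size bound. The side $T = M$ is connected by construction. For $S$, observe that $S$ is $A$ together with all components of $G-A$ other than $M$; each such component $D$ is connected, has no edge to $M$ (distinct components of $G-A$ cannot be joined by an edge of $G-A$), and, since $G$ is connected, must send at least one edge into $A$. Hence $A$ together with every such $D$ induces a connected graph, so $G[S]$ is connected. The same observation bounds the cut from below: every edge of $\partial(S)$ joins $A$ to $M$, because the other components of $G-A$ contribute no edge to $M$; and this edge set contains, for each $i$, an edge joining $A$ to $C_i \subseteq T$, giving $|\partial(S)| \ge k$. By the bond characterization recalled in the introduction (a cut is a bond iff both induced sides are connected), $\partial(S)$ is therefore a bond of size at least $k$.

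The main obstacle, and the only subtle point, is the connectivity of the $A$-side: the vertices of $G$ lying in no branch set can behave badly, so naively taking $S = A$ may leave $T = V(G)\setminus A$ disconnected. Absorbing exactly the stray components of $G-A$ into $S$ is what repairs this, and the decisive sanity check is that this absorption is free: those components have no edges to $T$, so they do not erase any crossing edge and cannot push $|\partial(S)|$ below $k$.
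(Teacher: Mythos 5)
Your proof is correct and follows the same construction the paper only sketches: place the branch set $A$ on one side, let the other side be the component of $G-A$ containing $B\cup C_1\cup\dots\cup C_k$, and use the connectivity of $G$ to absorb the remaining components of $G-A$ into the $A$-side without losing any of the $k$ disjoint $A$--$C_i$ edges. The paper dismisses this as ``easy to construct''; your write-up supplies exactly the missing details, including the one genuinely subtle point (handling vertices outside all branch sets).
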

\begin{proof}
Let $H$ be a minor of $G$ isomorphic to $K_{2,k}$. Since $G$ is connected and each branch set of $H$ induces a connected subgraph of $G$, from $H$ it is easy to construct a bond of $G$ of size at least $k$.
\end{proof}

Combined with Lemma~\ref{k2kimpliesbond}, the following results show that, without loss of generality, our study on $k$-bonds can be reduced to graphs of treewidth $\Oh(k)$.

\begin{lemma}\cite{bodlaender1997interval}
Every graph $G=(V, E)$ contains $K_{2,k}$ as a minor or has treewidth at most $2k-2$.
\end{lemma}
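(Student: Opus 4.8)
The statement is phrased as a dichotomy, so the plan is to prove the contrapositive implication: if $\mathrm{tw}(G) \ge 2k-1$, then $G$ contains $K_{2,k}$ as a minor. The first step I would take is to record the topological reformulation of the conclusion that is dual to Lemma~\ref{k2kimpliesbond}: a connected graph $G$ admits a $K_{2,k}$ minor if and only if there exist two disjoint connected vertex sets $A,B$ together with $k$ internally vertex-disjoint $A$–$B$ paths whose interiors avoid $A\cup B$. Indeed, contracting $A$ and $B$ to the two ``hub'' vertices and using each path interior as a ``spoke'' branch set produces the $K_{2,k}$ model, and conversely. This turns the goal into a purely connectivity-theoretic routing statement, which is what makes the treewidth hypothesis usable.

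Second, I would reduce to the $2$-connected case by a minimal counterexample argument. Suppose $G$ is $K_{2,k}$-minor-free with $\mathrm{tw}(G)\ge 2k-1$ and has the fewest vertices among such graphs. Since treewidth equals the maximum of the treewidths of the connected components, and also of the blocks when these are glued along cut vertices, and since $K_{2,k}$ is itself $2$-connected (hence a $K_{2,k}$ minor must live inside a single block), a smallest counterexample is forced to be $2$-connected; otherwise one of its blocks would be a strictly smaller counterexample.

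Third, and at the heart of the argument, I would invoke the duality between treewidth and highly linked structure. By the bramble/treewidth duality, $\mathrm{tw}(G)\ge 2k-1$ yields a bramble of order at least $2k$, or equivalently a well-linked set $W$ of order roughly $2k$. The appearance of the sharp bound $2k-2$ strongly suggests a factor-two split: partition $W$ into two halves $W_A,W_B$ of size about $k$, grow each half into a single connected branch set $A$ and $B$ inside the $2$-connected graph, and then apply Menger's theorem (via well-linkedness) to route $k$ internally vertex-disjoint $A$–$B$ paths. By the reformulation of the first step this is exactly a $K_{2,k}$ minor, contradicting minor-freeness and completing the induction.

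The step I expect to be the main obstacle is precisely this final consolidation. An abstract order-$2k$ bramble or well-linked set certifies many crossing separations but does \emph{not} immediately hand over two \emph{connected} hubs with $k$ internally disjoint connecting paths; the fan graph (treewidth $2$, arbitrarily many attachments, yet $K_{2,k}$-free for $k\ge 3$) shows that naive counting of attachment points is not enough, so one must argue that the linkedness genuinely survives the operation of contracting each half of $W$ into a connected hub. Controlling this without losing more than a factor of two, so as to land on the tight $2k-2$ rather than a weaker $O(k)$ bound, is where the delicate case analysis lives. I would follow Bodlaender's original treatment here, and if the duality black box proves unwieldy I would instead build the width-$(2k-2)$ tree decomposition directly by processing a DFS tree, charging every large separator against a family of disjoint ancestor-to-subtree paths and extracting a $K_{2,k}$ minor exactly when the charge reaches $k$.
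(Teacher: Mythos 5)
First, a point of comparison: the paper does not prove this lemma at all --- it is imported verbatim from \cite{bodlaender1997interval} --- so there is no in-paper argument to measure your proposal against. Judged on its own terms, your plan is sensibly framed (contrapositive, reformulation of a $K_{2,k}$ minor as two connected hubs joined by $k$ disjoint connections, reduction to a single $2$-connected block), but it stops short of a proof exactly where the content of the lemma lies.

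The genuine gap is your third step. A bramble of order $2k$ (equivalently, a well-linked set $W$ of size $2k$) together with Menger's theorem yields $k$ vertex-disjoint $W_A$--$W_B$ paths, but this is not yet a $K_{2,k}$ model: (i) some of those paths may have empty interiors --- a direct $W_A$--$W_B$ edge contributes no spoke branch set, and your step-one reformulation in fact needs ``paths with at least one internal vertex,'' not merely ``interiors avoiding $A\cup B$,'' a condition that is vacuously true for an edge; (ii) growing $W_A$ and $W_B$ into connected hubs $A$ and $B$ can absorb or sever the routed paths, and keeping $A$, $B$, and the $k$ interiors pairwise disjoint while preserving the required adjacencies is precisely the consolidation you yourself flag as the obstacle; and (iii) nothing in the sketch explains why this costs only a factor of two, i.e.\ why an order-$2k$ bramble (rather than $3k$ or $ck$) suffices, which is the entire point of the sharp bound $2k-2$. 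You acknowledge all of this and then write that you ``would follow Bodlaender's original treatment here'' --- that is, the decisive step is deferred to the very reference whose result is to be proved. As it stands the proposal is a correct framing plus an honest IOU rather than a proof; to close it you would have to actually execute one of the two routes you name, for instance by maintaining a connected set $S$ and showing that either $|N(S)|\le 2k-2$ throughout the construction of the decomposition or a $K_{2,k}$ minor can be extracted the moment this fails.
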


\begin{lemma}\cite{bodlaender1997interval}\label{algoK2k}
There is a polynomial-time algorithm that either concludes that the input graph $G$ contains $K_{2,k}$ as a minor, or outputs a tree-decomposition of $G$ of width at most $2k-2$.
\end{lemma}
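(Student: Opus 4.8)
The plan is to turn the non-constructive dichotomy of the preceding lemma into an algorithm by making its proof constructive and observing that every step can be executed, and its outcome decided, in polynomial time. Concretely, I would run a recursive decomposition procedure that attempts to build a tree decomposition of width at most $2k-2$ bottom-up; whenever the procedure gets stuck, the reason for getting stuck is itself a certificate from which a $K_{2,k}$ minor can be read off. The two possible outputs---a tree decomposition of width at most $2k-2$, or a $K_{2,k}$ minor given as $2+k$ pairwise-disjoint connected branch sets---are both polynomially checkable, so it suffices to show that each branch of the construction is reachable in polynomial time.

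The recursion maintains a piece of $G$ together with a boundary set $W$ with $|W|\le 2k-1$, and tries to split the piece by a separator that partitions $W$ into balanced parts while staying inside the piece. Each such separator, together with $W$, forms a bag of size at most $2k-1$, and gluing these bags along the recursion tree yields a valid tree decomposition of width at most $2k-2$ by the standard separator-to-bag construction (coverage of every vertex and edge, together with coherence, follows directly from the nesting of the pieces). Every candidate separator is computed by a single maximum-flow (Menger) computation, hence in polynomial time, and since each recursive call operates on a strictly smaller piece, the recursion has polynomial depth and polynomially many nodes; the total running time is therefore polynomial.

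If at some node no admissible separator of size at most $2k-1$ exists, then Menger's theorem, applied to the flow instance that failed, returns at least $k$ internally vertex-disjoint paths between the two connected sides $A$ and $B$ that we were trying to separate. Contracting $A$ and $B$ each to a single vertex, and contracting the interior of each of the $k$ paths to a single vertex, produces---provided $A$ and $B$ are non-adjacent, which we can enforce by taking the two sides at distance at least two, or by extracting $k+1$ paths to absorb one possible direct edge---exactly the branch-set structure of $K_{2,k}$: two hubs and $k$ disjoint connected spokes, each adjacent to both hubs. This extraction is clearly polynomial.

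The main obstacle is matching the exact width bound $2k-2$ rather than some weaker $\Oh(k)$ bound: this forces the separator-size threshold to be set precisely at $2k-1$ and requires proving that the recursion never needs a larger bag, which is the tight combinatorial core of Bodlaender's argument. The subtlety here is that one must separate connected \emph{sets}, not individual vertices---the $n\times n$ grid already has all \emph{pairwise} vertex connectivities bounded by a constant while its treewidth is $n$, so a dichotomy phrased in terms of disjoint paths between single vertices would be false, and it is precisely the contraction of $A$ and $B$ into connected hubs that repairs this and yields a genuine \emph{minor}. A secondary but routine point is ensuring that the balanced-separation choice keeps the number of recursive calls polynomial.
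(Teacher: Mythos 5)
This lemma is not proved in the paper at all: it is imported verbatim from \cite{bodlaender1997interval}, so there is no in-paper argument to compare yours against, and you are in effect reproving an external black-box result. Judged on its own terms, your reconstruction has a genuine gap in the minor-extraction step. When the recursion fails to find an admissible separator, Menger's theorem gives you a family of internally vertex-disjoint $A$--$B$ paths, but any such path that is a single edge between $A$ and $B$ has empty interior and contributes no spoke branch set; since there can be many such direct edges (not just one), reserving one extra path does not repair this, and you cannot in general force the two sides produced by the recursion to be at distance at least two. Worse, in the standard bottom-up construction the object being separated is the boundary set $W$, which need not consist of two \emph{connected} sides at all: the obstruction you actually obtain from a failed balanced separation is a well-linked set (a bramble), and converting that into two connected hubs joined by $k$ disjoint, nonempty, connected spokes is precisely the nontrivial combinatorial content of Bodlaender's lemma, which your sketch assumes rather than supplies.

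Second, you explicitly concede that your argument yields only some $\Oh(k)$ width bound and that matching $2k-2$ is ``the tight combinatorial core'' left to the source; as a proof of the lemma as stated this is an acknowledged incompleteness, not a detail. It is worth noting that for everything the paper does with this lemma (Corollary~\ref{cor:bounding_tw}, Corollary~\ref{st_bounding_treewidth}, and the $2^{O(k)}$-time algorithms) a bound of $\Oh(k)$ would suffice, so a corrected version of your weaker argument would still support the downstream results --- but it would not establish the statement you were asked to prove.
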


From Lemma~\ref{k2kimpliesbond} and Lemma~\ref{algoK2k}, the following holds.

\begin{corollary}\label{cor:bounding_tw}
There is a polynomial-time algorithm that either concludes that the input graph $G$ contains a bond of size $k$, or outputs a tree-decomposition of $G$ of width at most $2k-2$.
\end{corollary}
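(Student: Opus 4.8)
The plan is to combine Lemma~\ref{k2kimpliesbond} with the algorithmic dichotomy of Lemma~\ref{algoK2k} directly, turning the structural statement into an algorithmic one with essentially no extra work.

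First I would invoke the polynomial-time algorithm of Lemma~\ref{algoK2k} on the input graph $G$. By that lemma, the algorithm terminates in polynomial time and produces exactly one of two outcomes: either it certifies that $G$ contains $K_{2,k}$ as a minor, or it returns a tree-decomposition of $G$ of width at most $2k-2$. The second outcome is already the desired output of the corollary, so in that case nothing further is needed.

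In the first outcome, I would appeal to Lemma~\ref{k2kimpliesbond}: since $G$ is connected and contains $K_{2,k}$ as a minor, it has a bond of size at least $k$. Hence the procedure reports that $G$ contains a bond of size (at least) $k$, which is precisely a YES answer for the underlying {\sc Largest Bond} instance. Since the only computation performed is the single call to the algorithm of Lemma~\ref{algoK2k}, and the conclusion drawn in the first case requires no additional work beyond citing Lemma~\ref{k2kimpliesbond}, the overall running time is polynomial.

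There is no genuine obstacle here, as both ingredients are supplied by the preceding results; the only point requiring care is the reading of ``bond of size $k$''. Lemma~\ref{k2kimpliesbond} guarantees a bond of size \emph{at least} $k$, which is exactly what is needed to decide the $\geq k$ threshold in {\sc Largest Bond}, so I would state the corollary with this ``at least $k$'' meaning understood. If an exact size were demanded, one would additionally have to argue that a bond of size at least $k$ can be shrunk to one of size exactly $k$, but this is not required for the intended application, where the corollary serves only to justify restricting attention to graphs of treewidth $\Oh(k)$.
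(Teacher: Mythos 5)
Your proposal is correct and matches the paper's own (implicit) proof exactly: the paper derives the corollary by directly combining Lemma~\ref{k2kimpliesbond} with Lemma~\ref{algoK2k}, just as you do. Your remark about reading ``size $k$'' as ``size at least $k$'' is a reasonable clarification but does not change the argument.
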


Since $k$-bonds are also connected cuts, it holds that there is a polynomial-time algorithm that either concludes that the input graph $G$ contains a bond of size $k$, or outputs a tree-decomposition of $G$ of width at most $2k-2$. Such a bound can be improved to $k-1$ by replace $K_{2,k}$ with $K_{1,k}$ (see~\cite{bodlaender1997interval}) as a minor for {\sc Maximum Connected Cut}.


\subsubsection{The st-bond case}

Let $S\subseteq V(G)$ and let $\partial(S)$ be a bond of a connected graph $G$. Recall
that a block is a $2$-vertex-connected subgraph of $G$ which is inclusion-wise
maximal, and a block-cut tree of $G$ is a tree whose vertices represent the blocks and the cut vertices of $G$, and there is an edge in the block-cut tree for each pair of a block and a cut vertex that belongs to that block.
Then, $\partial(S)$ intersects at most one block of~$G$. More
precisely, for any two distinct blocks $B_1$ and $B_2$ of~$G$, if $S \cap V(B_1)
\neq \emptyset$ and $S \cap V(B_1) \neq V(B_1)$, then either $V(B_2) \subseteq
S$, or $V(B_2) \subseteq V \setminus S$. Indeed, if this is not the case, then
either $G[S]$ or $G[V\setminus S]$ would be disconnected.
Thus, to solve {\sc Largest $st$-Bond}, it is enough to consider, individually, each block on the path between $s$ and $t$ in the block-cut tree of $G$. Also, if a block is composed of a single edge, then it is a bridge~in~$G$, which is not a solution for the problem unless $k = 1$.
Thus, we may assume without loss of generality that $G$ is
$2$-vertex-connected.

\begin{lemma} \label{twopaths}
Let $G$ be a $2$-vertex-connected graph. For all $v \in V(G)\setminus\{s,t\}$, there is an $sv$-path and a $tv$-path which are internally disjoint.
\end{lemma}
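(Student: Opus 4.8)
The plan is to deduce the statement from Menger's theorem by a standard auxiliary-vertex construction, which turns the two-endpoint requirement into a question about two internally disjoint paths between a single pair of vertices. First I would build a graph $G'$ from $G$ by adding one new vertex $w$ whose only neighbors are $s$ and $t$. The goal is then to find two internally disjoint $vw$-paths in $G'$; since $\deg_{G'}(w) = 2$, these two paths must leave $w$ along the two distinct edges $ws$ and $wt$, so deleting $w$ will split them into the desired $sv$-path and $tv$-path. The whole argument reduces to verifying that $v$ and $w$ satisfy the hypothesis of Menger's theorem in $G'$.

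The key step is to show that no single vertex separates $v$ from $w$ in $G'$. I would argue by contradiction: suppose some vertex $x \notin \{v,w\}$ has $v$ and $w$ in different components of $G' - x$. If $x \notin \{s,t\}$, then in $G' - x$ the vertex $w$ is still adjacent to both $s$ and $t$, and $G - x$ is connected because $G$ is $2$-vertex-connected, so $v$ reaches $s$ (hence $w$), a contradiction. If $x = s$, then $w$ is still adjacent to $t$ in $G' - x$, and $G - s$ is connected, so $v$ reaches $t$ and therefore $w$; the case $x = t$ is symmetric. Hence $G'$ has no $vw$-separator of size one, and by Menger's theorem there exist two internally disjoint $vw$-paths $Q_1$ and $Q_2$, meeting only at $v$ and $w$.

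It remains to convert $Q_1, Q_2$ into the claimed pair of paths. Because $Q_1$ and $Q_2$ are internally disjoint and $w$ has only the neighbors $s$ and $t$, exactly one of them—say $Q_1$—uses the edge $ws$ and the other, $Q_2$, uses $wt$ (they cannot both use $ws$, as that would make $s$ an internal vertex common to both). Deleting $w$ turns $Q_1$ into an $sv$-path $P_1$ and $Q_2$ into a $tv$-path $P_2$. Since $t$ is internal to $Q_2$ and $Q_1, Q_2$ share only $v$ and $w$, the vertex $t$ does not lie on $P_1$, and symmetrically $s$ does not lie on $P_2$; moreover $P_1$ and $P_2$ share no internal vertex and meet only at $v$. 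This gives the desired internally disjoint $sv$-path and $tv$-path.

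I expect the only real care to be in the separation analysis of the previous paragraph, since everything else is routine bookkeeping; the three-case check that $G'$ inherits the $2$-connectivity needed for Menger's theorem is where the hypothesis that $G$ is $2$-vertex-connected is actually used. As an alternative that avoids the auxiliary construction entirely, one could instead invoke the fan version of Menger's theorem directly with the terminal set $\{s,t\}$ and the vertex $v$, which yields a $v$–$\{s,t\}$ fan of size two, i.e., the two required paths sharing only $v$.
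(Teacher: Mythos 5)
Your proof is correct, but it takes a different route from the paper's. You reduce the statement to the single-pair version of Menger's theorem via the standard auxiliary-vertex trick (adding $w$ with $N(w)=\{s,t\}$ and checking that no single vertex separates $v$ from $w$), which is exactly the textbook derivation of the Fan Lemma; your three-case separator analysis and the degree-two argument at $w$ are both sound, and your closing remark that one could instead invoke the fan version of Menger's theorem directly is apt. The paper instead does explicit path surgery: it takes two internally disjoint $sv$-paths $P_s, P_s'$ (again from $2$-connectivity) plus a $tv$-path $P_t'$ avoiding $s$, lets $x$ be the first vertex of $P_t'$ lying on $P_s \cup P_s'$, and reroutes $P_t'$ along the tail of whichever path contains $x$ to obtain a $tv$-path meeting $P_s$ only at $v$. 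The paper's argument is shorter and self-contained once the two disjoint $sv$-paths are in hand, but it quietly relies on a correct choice of ``first'' intersection point and a w.l.o.g.\ step; your reduction is slightly longer to set up but is more mechanical to verify and generalizes immediately to a $v$--$\{s,t\}$ fan of any size in a $k$-connected graph. Either proof is acceptable here.
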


\begin{proof}
Since $G$ is $2$-vertex-connected, there are two disjoint $sv$-paths $P_s$ and $P'_s$ and there is a $tv$-path $P_t'$ which does not include $s$. Let $x$ be the first vertex of $P_t'$ which belongs to $V(P_s \cup P_s')$ and assume, w.l.o.g., that $x \in P_s'$. Let $P_t''$ be the sub-path of $P_t'$ from $t$ to $x$ and $P_s''$ the sub-path of $P_s'$ from $x$ to $v$. Now define $P_t$ as $tP_t''xP_s''v$ and notice that $P_t$ is a $tv$-path disjoint from $P_s$. 
\end{proof}


\begin{lemma}\label{st-minor}
Let $G$ be a $2$-vertex-connected graph. If $G$ contains $K_{2,2k}$ as a minor,
then there exists $S \subseteq V(G)$ such that $\partial(S)$ is a bond of size
at least $k$.
\end{lemma}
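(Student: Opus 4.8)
The plan is to obtain the statement directly from Lemma~\ref{k2kimpliesbond}. That lemma already asserts that every simple connected graph containing $K_{2,k}$ as a minor has a bond of size at least $k$, and it holds for every positive integer $k$. Since $k$ is a positive integer, so is $2k$, and I would apply Lemma~\ref{k2kimpliesbond} verbatim with its parameter set to $2k$: every simple connected graph with a $K_{2,2k}$ minor has a bond of size at least $2k$. A $2$-vertex-connected graph is in particular connected, and $G$ is simple by the standing assumption of the paper, so the hypothesis that $G$ contains $K_{2,2k}$ as a minor immediately yields a bond $\partial(S)$ of size at least $2k \ge k$. Thus the set $S$ inducing one side of that bond is the required set, and in fact the $2$-vertex-connectivity hypothesis is not needed for this weaker conclusion.

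For a self-contained argument that does not quote Lemma~\ref{k2kimpliesbond}, I would replay its (elided) construction. Fix a model of the $K_{2,2k}$ minor in $G$: connected, pairwise disjoint branch sets $A$ and $B$ for the size-$2$ side and $C_1,\dots,C_{2k}$ for the size-$2k$ side, where each $C_i$ sends at least one edge to $A$ and at least one edge to $B$. First I would set $T := B \cup \bigcup_{i=1}^{2k} C_i$; this induces a connected subgraph, since $B$ is connected and each connected $C_i$ is joined to $B$ by an edge. The complementary side will contain the connected set $A$, and the at least $2k$ edges running from $A$ to the various $C_i$ all cross the resulting cut, so the cut-set has size at least $2k \ge k$.

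The only bookkeeping is to absorb the vertices $W := V(G) \setminus \bigl(A \cup B \cup \bigcup_i C_i\bigr)$ lying outside the branch sets so that both sides of the final partition stay connected. Here I would process the connected components of $G[W]$ one at a time; by maximality each such component $K$ has all its external neighbours in $A \cup B \cup \bigcup_i C_i$, and since $G$ is connected it has at least one. If $K$ has a neighbour in $T$ I keep it on the $T$-side, and otherwise ($K$ attaches only to $A$) I place it on the $A$-side; in either case the side it joins remains connected because it attaches to an already connected set, and the cut-set can only grow. The result is a cut $[S,V(G)\setminus S]$ with both sides connected and at least $2k$ crossing edges, i.e.\ a bond of size at least $k$. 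For the statement exactly as worded there is no genuine obstacle: this step is routine, and the reduction in the first paragraph already settles it. The factor of $2$ between the minor size $K_{2,2k}$ and the guaranteed bond size $k$ is therefore slack for the unqualified existence claimed here; it is only the finer $st$-separating refinement treated in this subsection that consumes it.
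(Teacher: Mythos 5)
Your proof is correct for the statement exactly as worded, and your diagnosis is accurate: taken literally, the lemma is an immediate consequence of Lemma~\ref{k2kimpliesbond} applied with parameter $2k$, the $2$-vertex-connectivity hypothesis is unused, and your self-contained replay of that construction (taking $T = B \cup \bigcup_i C_i$, keeping $A$ on the other side, and absorbing each component of $W$ onto a side it attaches to) is sound --- both sides stay connected and the $2k$ disjoint $A$--$C_i$ edges all cross.

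However, this is a genuinely different route from the paper's, and the difference is the whole point: the printed statement evidently drops an $st$-condition that the paper's own proof establishes. The paper fixes two vertices $s,t$, assumes WLOG that the branch sets partition $V(G)$, and produces a bond of size at least $k$ that \emph{separates} $s$ from $t$. If $s$ and $t$ lie in distinct branch sets, a cut of the form $[A \cup X_a,\, V(G)\setminus(A\cup X_a)]$ with $s \in A\cup X_a$ and $t \in B\cup X_b$ is already an $st$-bond with at least $2k-1 \ge k$ crossing edges. Otherwise $s,t \in U = A \cup X_{2k}$, and this is where $2$-connectivity enters: Lemma~\ref{twopaths} yields internally disjoint $sv$- and $tv$-paths for $v \in Q = V(G)\setminus U$, a spanning tree of $G[U]$ directed toward $s$ splits $U$ into connected pieces $U_s \ni s$ and $U_t \ni t$, and pigeonholing the at least $2k-1$ edges between $U$ and $Q$ shows that one of $[U_s,\, U_t \cup Q]$ or $[U_t,\, U_s \cup Q]$ is an $st$-bond of size at least $k$. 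That stronger $st$-form --- not the literal statement --- is what Corollary~\ref{st_bounding_treewidth} consumes (per block, either an $s't'$-bond of size $k$ or a width-$\Oh(k)$ decomposition), and it is exactly where the doubling $2k \to k$ and the connectivity hypothesis are spent, as your closing remark correctly anticipates. So your reduction is a valid and far more economical proof of the statement as printed, but it cannot substitute for the paper's argument, whose actual content is the $st$-separating refinement.
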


\begin{proof}
Let $G$ be a graph containing a $K_{2,2k}$ as a minor. If $k = 1$, the
statement holds trivially, thus assume $k \ge 2$. Also, since $G$ is connected,
one can assume that this minor was obtained by contracting or removing edges
only, and thus its branch sets contain all vertices of~$G$.
Let $A$ and $B$ be the branch sets corresponding to first side of $K_{2,2k}$,
and let $X_1, X_2, \dots, X_{2k}$ be the remaining branch sets.

First, suppose that $s$ and $t$ are in distinct branch sets. If this is the
case, then there exist distinct indices $a, b \in \{1, \dots, 2k\}$ such that $s
\in A \cup X_a$ and $t \in B \cup X_b$. Now observe that $G[A \cup X_a]$ and
$G[B \cup X_b]$ are connected, which implies an $st$-bond with at least $2k - 1 \ge
k$ edges. Now, suppose that $s$ and $t$ are in the same branch set. In this
case, one can assume without loss of generality that $s, t \in A \cup X_{2k}$.

Define $U = A \cup X_{2k}$ and $Q = V(G) \setminus U$. Observe that $G[U]$ and
$G[Q]$ are connected. Consider an arbitrary vertex $v$ in the set $Q$. Since $G$
is $2$-vertex-connected, Lemma~\ref{twopaths} implies that there exist an
$sv$-path $P_s$ and a $tv$-path $P_t$ which are internally disjoint. Let $P'_s$
and $P'_t$ be maximal prefixes of $P_s$ and $P_t$, respectively, whose vertices
are contained in $U$.

We partition the set $U$ into parts $U_s$ and $U_t$ such that $G[U_s]$ and
$G[U_t]$ are connected.
Since $G[U]$ is connected, there exists a tree $T$ spanning $U$.
Direct all edges of $T$ towards $s$ and
partition $U$ as follows.
Every vertex in $P'_s$ belongs to $U_s$ and every vertex in $P'_t$  belongs to
$U_t$. For a vertex $u \notin V(P'_s \cup P'_t)$, let $w$ be the first
ancestor of $u$ (accordingly to $T$) which is in $P'_s \cup P'_t$. Notice that
$w$ is well-defined since $u \in V(T)$ and the root of $T$ is $s \in V(P'_s \cup
P'_t)$.
Then $u$ belongs to $U_s$ if $w \in V(P'_s)$, and $u$ belongs to $U_t$ if $w \in
V(P'_t)$.

Observe that that there are at least $2k-1$ edges between $U$ and $Q$, and thus
there are at least $k$ edges between $U_s$ and $Q$, or between $U_t$ and $Q$.
Assume the former holds, as the other case is analogous.
It follows that $G[U_s]$ and $G[U_t \cup Q]$ are connected and induce a bond
of~$G$ with at least $k$ edges. 
\end{proof}

Lemma~\ref{algoK2k} and Lemma~\ref{st-minor} imply that there is an algorithm
that either concludes that the input graph $G$ has a bond of size at least $k$,
or outputs a tree-decomposition of an equivalent instance $G'$ of width
$\Oh(k)$.

\begin{corollary}\label{st_bounding_treewidth}
Given a graph $G$, vertices $s,t\in V(G)$, and an integer $k$, there exists a polynomial-time algorithm that either concludes that $G$ has an $st$-bond of size at least $k$ or outputs a subgraph $G'$ of $G$ together with a tree decomposition of $G'$ of width equals $\Oh(k)$, such that $G'$ has an $st$-bond of size at least $k$ if and only if $G$ has an $st$-bond of size at least~$k$.
\end{corollary}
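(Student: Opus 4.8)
The plan is to combine the block-cut tree reduction sketched just above the statement with the minor-versus-treewidth dichotomy of Lemma~\ref{algoK2k} and the bond-extraction guaranteed by Lemma~\ref{st-minor}. First I would compute the block-cut tree of $G$ in polynomial time and isolate the blocks lying on the unique $s$--$t$ path in this tree; since any $st$-bond intersects at most one block and that block must separate $s$ from $t$, it suffices to examine these blocks one at a time. For each such block $B$ I would designate surrogate terminals $s_B$ and $t_B$ as the vertices of $B$ closest to $s$ and to $t$ along the path (taking $s_B=s$ when $s\in V(B)$, and likewise for $t$), so that an $st$-bond of $G$ whose cut-set lies in $B$ corresponds exactly to an $s_Bt_B$-bond of $B$; blocks consisting of a single bridge are discarded unless $k=1$. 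This leaves a collection of $2$-vertex-connected instances to which the earlier machinery applies.

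Next I would run the algorithm of Lemma~\ref{algoK2k} with parameter $2k$ on each $2$-connected block $B$. In the first outcome it reports a $K_{2,2k}$ minor of $B$; since $B$ is $2$-vertex-connected, Lemma~\ref{st-minor} then produces an $s_Bt_B$-bond of $B$ of size at least $k$, which lifts to an $st$-bond of $G$ of size at least $k$, and the algorithm halts with the positive conclusion. In the second outcome it returns a tree decomposition of $B$ of width at most $2(2k)-2=4k-2=\Oh(k)$.

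If no block triggers the positive conclusion, I would form $G'$ as the union of the relevant blocks (equivalently, $G$ with the irrelevant pendant blocks on either side of the $s$--$t$ path collapsed into $s$ or $t$), which preserves the existence of an $st$-bond of size at least $k$ by the correspondence of the first step. A global tree decomposition of $G'$ is then obtained by gluing the per-block decompositions along the block-cut tree; because distinct blocks share only cut vertices, inserting each cut vertex into the bags of its incident blocks raises the width by only a constant, so the total width remains $\Oh(k)$. The one step demanding care is the first: verifying that every $st$-bond truly confines its cut-set to a single block and that the material of $G$ outside the active block is forced entirely onto one side of the cut, so that the surrogate-terminal correspondence is exact in both directions. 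Everything after this reduction is a direct invocation of Lemma~\ref{algoK2k} and Lemma~\ref{st-minor}.
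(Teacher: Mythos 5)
Your proposal is correct and follows essentially the same route as the paper: restrict to the blocks on the $s$--$t$ path of the block-cut tree, apply Lemma~\ref{algoK2k} with Lemma~\ref{st-minor} to each block using the surrogate terminals nearest to $s$ and $t$, and glue the per-block tree decompositions along cut vertices. The one caveat you flag (that an $st$-bond confines its cut-set to a single block and the surrogate-terminal correspondence is exact) is exactly the point the paper also relies on, justified by the block-intersection observation stated just before the corollary.
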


\begin{proof}
Find a block-cut tree of $G$ in linear time~\cite{CormenLRS01}, and let $B_s$
and $B_t$ be the blocks of $G$ that contain $s$ and $t$, respectively. Remove
each block that is not in the path from $B_s$ to $B_t$ in the block-cut tree
of~$G$. Let $G'$ be the remaining graph.
For each block $B$ of $G'$, consider the vertices $s'$ and $t'$ of $B$ which are
nearest to $s$ and $t$, respectively. Using
Lemmas~\ref{algoK2k}~and~\ref{st-minor} one can in polynomial time either
conclude that $B$ has an \mbox{$s't'$-bond}, in which case $G$ is a
yes-instance, or compute a tree decomposition of $B$ with width at most
$\Oh(k)$.

Now, construct a tree decomposition of $G'$ as follows. Start with the union of
the tree decompositions of all blocks of $G'$. Next, create a bag $\{u\}$ for
each cut vertex $u$ of $G'$.
Finally, for each cut vertex $u$ and any bag corresponding to a block $B$
connected through $u$, add an edge between~$\{u\}$ and one bag of the tree
decomposition of $B$ containing $u$.
Note that this defines a tree decomposition of $G'$ and that each bag has at
most $\Oh(k)$ vertices. 
\end{proof}


Since $st$-bonds are solutions for {\sc Maximum Connected $st$-Cut}, the results presented in Corollary~\ref{st_bounding_treewidth} naturally apply to such a problem as well.

\subsection{Taking the treewidth as parameter}

In the following, given a tree decomposition $\mathcal{T}$, we denote by $\ell$ one node of~$\mathcal{T}$ and by $X_\ell$ the vertices contained in the \emph{bag} of $\ell$. We assume w.l.o.g that $\mathcal{T}$ is a extended version of a \emph{nice} tree decomposition (see~\cite{cygan2015parameterized}), that is, we assume that there is a special root node~$r$ such that $X_\ell = \emptyset$ and all edges of the tree are directed towards $r$ and each node~$\ell$ has one of the following five types: {\em Leaf}\,; {\em Introduce vertex}; {\em Introduce edge}; {\em Forget vertex}; and {\em Join}.
%
%
%
%
%
%
Moreover, define $G_\ell$ to be the subgraph of $G$ which contains only vertices and edges that have been introduced in $\ell$ or in a descendant of $\ell$.

The number of partitions of a set of $k$ elements is the \textit{$k$-th Bell number}, which we denote by $B(k)$ ($B(k)\leq k!$~\cite{bell}).

\begin{theorem}\label{dynamic}
Given a nice tree decomposition of~$G$ with width $tw$, one can find a bond of maximum size in time
$2^{\Oh(tw\log{tw})}\times n$
where $n$ is the
number of vertices of~$G$.
\end{theorem}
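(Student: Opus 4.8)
The plan is to perform dynamic programming over the nice tree decomposition, tracking enough information in each bag to certify that both sides of the cut will eventually induce connected subgraphs. For a node $\ell$, a partial solution is a partition $[A,B]$ of $V(G_\ell)$ together with its cut-set $\partial(A)$ in $G_\ell$. The difficulty, exactly as in classical connectivity problems parameterized by treewidth, is that connectivity is a global property that cannot be read off from the bag alone. To cope with this I would store, besides the side ($A$ or $B$) assigned to each vertex of the bag $X_\ell$, a partition of $A\cap X_\ell$ whose blocks are the traces on the bag of the connected components of $G_\ell[A]$, and likewise a partition of $B\cap X_\ell$ for $G_\ell[B]$. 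Since a bond requires each side to induce \emph{exactly one} component, I would additionally keep, for each side, a single bit recording whether a component has already been \emph{finalized}, i.e.\ entirely forgotten with no representative left in the bag; any state in which two components of the same side are finalized is discarded, since forgotten vertices can never be reconnected. Each state records the maximum value of $|\partial(A)|$ achievable subject to these constraints.

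The transitions follow the five node types. At an introduce-vertex node the new vertex is placed on either side as a singleton block, leaving the cut size unchanged. At an introduce-edge node $\{u,v\}$, if $u$ and $v$ lie on opposite sides the cut value increases by one, while if they lie on the same side their blocks are merged. At a forget-vertex node $v$, the vertex is deleted from its block; if this empties the block, the corresponding component has just been finalized, so that side's bit is set (and the state is invalidated if the bit was already set). The join node is the technical crux: two child states with identical colorings of the shared bag are combined by taking the transitive closure of the union of their two block-partitions on each side, summing the cut values (each edge is introduced at a unique introduce-edge node, so no double counting occurs), and adding the finalized-component bits while rejecting any total exceeding one. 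At the empty root, a valid bond corresponds to a state in which each side has its finalized bit set, and the answer is the maximum stored cut value over such states.

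Correctness would rest on the inductive invariant that the stored partitions faithfully record the component structure of $G_\ell[A]$ and $G_\ell[B]$ restricted to the bag, maintained through each transition; the finalized bits guarantee that no side is split into two components that are both closed off before the root, so surviving states correspond precisely to partitions into two connected parts. For the running time, the number of states at a node is at most $2^{tw+1}\cdot B(tw+1)^2\cdot 4$, and since $B(tw+1)\le (tw+1)! = 2^{\Oh(tw\log tw)}$ this is $2^{\Oh(tw\log tw)}$; storing only the best cut value per state avoids an extra factor of $m$. The dominant cost is the join node, where we range over pairs of compatible child states and compute partition unions, costing $\left(2^{\Oh(tw\log tw)}\right)^2 = 2^{\Oh(tw\log tw)}$ per node, so multiplying by the $\Oh(n)$ nodes yields the claimed $2^{\Oh(tw\log tw)}\times n$ bound. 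The main obstacle I anticipate is the bookkeeping at the join and forget nodes, where two simultaneous connectivity structures must be tracked and each side must be forced to collapse to a single component exactly once.
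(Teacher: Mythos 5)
Your proposal is correct and follows essentially the same route as the paper's proof: a dynamic program over the nice tree decomposition whose states record the side of each bag vertex, the two partitions of the bag induced by the connected components on each side, and whether a component has already been closed off (the paper encodes your ``finalized'' bit as the special part $\{\emptyset\}$ in $\rho_i$ and forbids $\{\emptyset\}\subsetneq\rho_i$), with the same transitions and the same $B(tw+1)\cdot 2^{\Oh(tw)}$ state count giving $2^{\Oh(tw\log tw)}\times n$. The only cosmetic difference is your (standard, and cleaner) convention that each edge is introduced exactly once so join values simply add, whereas the paper subtracts the bag-internal cut edges counted twice.
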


\begin{proof}
Let $\partial_{G}(U)$ be a bond of $G$, and $[U, V\setminus U]$ be the cut defined by such a bond.
Set $S^\ell_U=U\cap X_\ell$. The removal of $\partial_{G}(U)$ partitions $G_\ell[U]$ into a set $C^\ell_{U}$ of connected components, and $G_\ell[V\setminus U]$ into a set $C^\ell_{V\setminus U}$ of connected components. Note that~$C^\ell_{U}$ and~$C^\ell_{V\setminus U}$ define partitions of $S^\ell_U$ and $X_\ell\setminus S^\ell_U$, denoted by $\rho^\ell_1$~and $\rho^\ell_2$ respectively, where the intersection of each connected component of $C^\ell_{U}$ with $S^\ell_U$ corresponds to one part of $\rho^\ell_1$. The same holds for~$C^\ell_{V\setminus U}$ with respect to $X_\ell \setminus S^\ell_U$ and $\rho^\ell_2$.

We define a table for which an entry $c[\ell, S, \rho_1, \rho_2]$ is the size of a largest cut-set (partial solution) of the subgraph~$G_\ell$, where~$S$ is the subset of $X_\ell$ to the left part of the bond, $ X_\ell \setminus S$ is the subset to the right part, and
$\rho_1$, $\rho_2$ are the partitions of $S$ and  $X_\ell \setminus S$ representing, after the removal of the partial solution, the intersection with the connected components to the left and to the right, respectively.
If there is no such a partial solution then $c[\ell, S, \rho_1, \rho_2] = -\infty$.

For the case that $S$ is empty, two special cases may occur: either $U\cap V(G_\ell) = \emptyset$,  in which case there are no connected components in $C^\ell_U$, and thus ${\rho_1 = \emptyset}$; or $C^\ell_U$ has only one connected component which does not intersect $X_\ell$, i.e., $\rho_1=\{\emptyset\}$, this case means that the connected component in $C^\ell_U$ was completely forgotten.
Analogously, we may have $\rho_2 = \emptyset$ and $\rho_2=\{\emptyset\}$.
%
%
Note that we do not need to consider the case $\{\emptyset\}\subsetneq \rho_i$ since it would imply in a disconnected solution. The largest bond of a connected graph $G$ corresponds to the root entry $c[r, \emptyset, \{\emptyset\}, \{\emptyset\}]$.

%
To describe a dynamic programming algorithm, we only need to present the recurrence relation for each node type.


\medskip

\textbf{Leaf:} In this case, $X_\ell = \emptyset$. There are a few combinations for $\rho_1$ and $\rho_2$: either $\rho_1 = \emptyset$, or $\rho_1 = \{\emptyset\}$, and either $\rho_2 = \emptyset$, or $\rho_2 = \{\emptyset\}$.
Since for this case  $G_\ell$ is empty, there can be no connected components, so having $\rho_1 = \emptyset$ and $\rho_2 = \emptyset$ is the only feasible choice.
\[
c[\ell, S, \rho_1, \rho_2] =
\begin{cases}
0  & \text{if $S = \emptyset$, $\rho_1 = \emptyset$ and $\rho_2 = \emptyset$},\\
-\infty & \text{otherwise}.
\end{cases}
\]

\smallskip

\textbf{Introduce vertex:} 
We have only two possibilities in this case, either $v$ is an isolated vertex to the left ($v \in S$) or it is an isolated vertex to the right ($v \notin S$). Thus, a partial solution on $\ell$ induces a partial solution on $\ell'$, excluding $v$ from its part. 
\[
c[\ell, S, \rho_1, \rho_2] =
\begin{cases}
c[\ell', S\setminus\{v\}, \rho_1\setminus\{\{v\}\}, \rho_2] & \text{if $\{v\} \in \rho_1$}, \\
c[\ell', S, \rho_1, \rho_2\setminus\{\{v\}\}] & \text{if $\{v\} \in \rho_2$}, \\
- \infty & \text{if $\{v\} \notin \rho_1 \cup \rho_2$}. \\
\end{cases}
\]

\textbf{Introduce edge:} In this case, either the edge $\{u,v\}$ that is being inserted is incident with one vertex of each side, or the two endpoints are at the same side. In the former case, a solution on $\ell$ corresponds to a solution on $\ell'$ with the same partitions, but with value increased. In the latter case, edge~$\{u,v\}$ may connect two connected components of a partial solution on $\ell'$.
\[
c[\ell, S, \rho_1, \rho_2] =
\begin{cases}
c[\ell', S, \rho_1, \rho_2] + 1 & \text{if $u \in S$ and $v \notin S$ or $u \notin S$ and $v \in S$} ,\\
max_{ \rho_1'}\{c[\ell', S, \rho_1', \rho_2]\} & \text{if $u \in S$ and $v \in S$}, \\
max_{ \rho_2'}\{c[\ell', S, \rho_1, \rho_2']\} & \text{if $u \notin S$ and $v \notin S$}.
\end{cases}
\]
Here, $\rho_1'$ spans over all refinements of $\rho_1$ such that the union of the parts containing $u$ and $v$ results in the partition $\rho_1$. The same holds for $\rho_2'$.


\smallskip

\textbf{Forget vertex:} In this case, either the forgotten vertex $v$ is in the left side of the partial solution induced on $\ell$, or is in the right side. Thus, $v$ must be in the connected component which contains some part of $\rho_1$, or some part of $\rho_2$. We select the possibility that maximizes the value
\[
c[\ell, S, \rho_1, \rho_2] = max_{ \rho_1',\rho_2'}\{ c[\ell', S \cup \{v\}, \rho_1', \rho_2], c[\ell', S, \rho_1, \rho_2'] \}.
\]
Here, $\rho_1'$ spans over all partitions obtained from~$\rho_1$ by adding $v$ in some part of~$\rho_1$ (if $\rho_1=\{\emptyset\}$ then $\rho_1'=\{v\}$). The same holds for $\rho_2'$.


\smallskip

\textbf{Join:} This node represents the join of $G_{\ell'}$ and $G_{\ell''}$ where ${X_\ell = X_{\ell'} = X_{\ell''}}$.

By counting the bond edges contained in $G_{\ell'}$ and in
$G_{\ell''}$, each edge is counted at least once, but edges in $X_{\ell}$ are
counted twice. Thus
\[
c[\ell, S, \rho_1, \rho_2] = max\{c[\ell', S, \rho_1', \rho_2'] + c[\ell', S, \rho_1'', \rho_2'']\} - |\{\{u,v\} \in E, u \in S, v \in X_{\ell} \setminus S\}|.
\]
In this case, we must find the best combination between the two children. Namely, for $i\in\{1,2\}$, we consider combinations of $\rho_i'$ with $\rho_i''$ which merge into $\rho_i$. If $\rho_i=\{\emptyset\}$ then either $\rho_i'=\{\emptyset\}$ and $\rho_i''=\emptyset$; or $\rho_i'=\emptyset$ and $\rho_i''=\{\emptyset\}$. Also, if $\rho_i=\emptyset$ then $\rho_i'=\emptyset$ and $\rho_i''=\emptyset$.

\bigskip

The running time of the dynamic programming algorithm can be estimated as follows. The number of nodes in the decomposition is $\Oh(tw\times n)$~\cite{cygan2015parameterized}.
For each node~$\ell$, the parameters $\rho_1$ and $\rho_2$ induce a partition of $X_\ell$; the number of partitions of $X_\ell$ is given by the corresponding Bell number, $B(|X_\ell|) \le B(tw+1)$.
Each such a partition $\rho$ corresponds to a number of choice of parameter $S$ that corresponds to a subset of the parts of $\rho$; thus the number of choices for $S$ is not larger than ${2^{|\rho|} \le 2^{|X_\ell|} \le 2^{tw+1}}$.
Therefore, we conclude that the table size is at most $\Oh(B(tw+1) \times 2^{tw} \times tw \times n)$.
Since each entry can be computed in $2^{\Oh(tw\log tw)}$ time, the total complexity is $2^{\Oh(tw\log{tw})}\times n$. The correctness of the recursive formulas is straightforward. 
\end{proof}

The reason for the $2^{\Oh(tw \log tw)}$ dependence on treewidth is because we
enumerate all partitions of a bag to check connectivity.
However, one can obtain single exponential-time dependence by modifying the
presented algorithm using rank-based approach as described in Section~\ref{rank}.

\begin{theorem}\label{th:st_tw_log_tw}
{\sc Largest $st$-Bond}, {\sc Maximum Connected Cut}, and {\sc Maximum Connected $st$-Cut} can be solved in time $2^{\Oh(tw\log{tw})} \times n$.
\end{theorem}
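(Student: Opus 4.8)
The plan is to adapt the dynamic programming algorithm of Theorem~\ref{dynamic} to each of the three problems, keeping the same state space (or a strictly smaller one), so that the running time bound is inherited directly. Throughout I would work over the same extended nice tree decomposition and reuse the table $c[\ell, S, \rho_1, \rho_2]$, in which $\rho_1$ and $\rho_2$ record how the surviving bag vertices on each side of the cut split among the connected components of the partial solution.

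For {\sc Largest $st$-Bond}, the only extra requirement over {\sc Largest Bond} is that $s$ and $t$ lie on opposite sides of the cut. I would enforce this purely at the \emph{Introduce vertex} step: when the introduced vertex is $s$, I keep only the branch placing $s$ as a singleton part of $\rho_1$ (the $S$-side), and symmetrically force $t$ into a singleton part of $\rho_2$ when it is introduced. No other recurrence changes, and the accepting entry remains $c[r, \emptyset, \{\emptyset\}, \{\emptyset\}]$. Correctness is immediate: the root condition already guarantees that $G[S]$ and $G[V\setminus S]$ are each a single connected component, while the forced placement guarantees $s\in S$ and $t\in V\setminus S$; such a cut is by definition an $st$-bond, and conversely every $st$-bond arises from some run of the algorithm.

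For {\sc Maximum Connected Cut} I would drop the right-hand partition entirely, replacing the table by $c[\ell, S, \rho_1]$, since connectivity is demanded only of $G[S]$. The recurrences for \emph{Leaf}, \emph{Introduce vertex}, \emph{Forget}, and \emph{Join} specialize to the single partition $\rho_1$ exactly as before, and any vertex placed outside $S$ is simply discarded with no bookkeeping. The point to verify carefully is the \emph{Introduce edge} step: the contribution to the objective is still $+1$ precisely when the edge crosses the cut and $0$ otherwise, so the cut-size accounting is untouched, whereas an edge with both endpoints in $V\setminus S$ now triggers no merge because no connectivity is tracked there. The accepting entries are those with $\rho_1=\{\emptyset\}$, certifying that $G[S]$ is a single nonempty connected component, together with $S\neq V(G)$ (automatic whenever $k\ge 1$, since $|\partial(S)|\ge 1$ forces $V\setminus S\neq\emptyset$). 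Finally, {\sc Maximum Connected $st$-Cut} is handled by combining the two modifications: use the one-sided table $c[\ell, S, \rho_1]$ and force $s$ into $S$ and $t$ out of $S$ at their introduction.

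In every case the set of table states is no larger than in Theorem~\ref{dynamic}: per node the number of choices of $S$ together with the partition(s) is bounded by $B(tw+1)\cdot 2^{tw+1}$, and each transition is computed in $2^{\Oh(tw\log tw)}$ time, so all three problems are solved in $2^{\Oh(tw\log tw)}\times n$ time. The step requiring the most care is the correctness argument for {\sc Maximum Connected Cut}: one must check that abandoning the right-side connectivity information is sound, i.e.\ that no feasible solution is lost and that the objective value is still computed correctly despite never tracking how $V\setminus S$ decomposes.
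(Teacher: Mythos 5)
Your proposal is correct and follows essentially the same route as the paper: reuse the dynamic program of Theorem~\ref{dynamic}, pin $s$ to the $S$-side and $t$ to the other side for the $st$-variants, and drop the partition $\rho_2$ when connectivity of $V\setminus S$ is not required. The only cosmetic difference is that the paper enforces $s\in S$, $t\notin S$ by keeping $s$ and $t$ in every bag, whereas you enforce it at the introduce-vertex nodes and let the $-\infty$ entries propagate; both mechanisms are sound and yield the same $2^{\Oh(tw\log tw)}\times n$ bound.
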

\begin{proof}
The solution of {\sc Largest $st$-Bond} can be found by a dynamic programming as presented in Theorem~\ref{dynamic} where we add $s$ and $t$ in all the nodes and we fix~$s\in S$ and~$t\notin S$. To obtain similar algorithms for \textsc{Maximum Connected Cut} and \textsc{Maximum Connected $st$-Cut} we just do not have to take care with the connectivity information for $S_2$ and simply drop it in the above computation. 
\end{proof}

The dynamic programming algorithms in Theorems~\ref{dynamic} and ~\ref{th:st_tw_log_tw} can be seen as ones for {\em connectivity problems} such as finding a Hamiltonian cycle, a feedback vertex set, and a Steiner tree. For such problems, we can improve the running time $2^{O(tw\log tw)}$ to $2^{O(tw)}$ using two techniques called the {\em rank-based approach} due to Bodlaender et al.~\cite{Bodlaender2015} and the {\em cut \& count technique} due to Cygan et al.~\cite{Cygan2011}. In the next two subsections, we improve the running time of the algorithms described in this section using these techniques.




\subsubsection{Rank-based approach}\label{rank}

In this subsection, we provide  faster $2^{O(tw)}$-time deterministic algorithms parameterized by tree-width. To show this, we use the rank-based approach proposed by Bodlaender et al. \cite{Bodlaender2015}. 
The key idea of the rank-based approach is to keep track of {\em small} representative sets of size $2^{O(tw)}$ that capture partial solutions of an optimal solution instead of $2^{O(tw \log tw)}$ partitions.
Indeed, we can compute small representative sets within the claimed running time using {\sf reduce} algorithm \cite{Bodlaender2015}.

We begin with some definition used in the Rank-based approach.
\begin{definition}[Set of weighted partitions \cite{Bodlaender2015}]
Let $\Pi(U)$ be the set of all partitions of some set $U$. A set of weighted partitions is ${\mathcal A} \subseteq \Pi(U)\times {\mathbb N}$, i.e., a family of pairs, each consisting of a partition of $U$ and a non-negative integer weight. 
\end{definition}

The weight of a partition corresponds to the size of a partial solution.
For $p,q\in \Pi(U)$, let $J(p, q)$ denote the join of the partition. We say that a set of weighted partitions ${\mathcal A}' \subseteq \Pi(U)\times {\mathbb N}$ {\em represents} another set ${\mathcal A} \subseteq \Pi(U)\times {\mathbb N}$, if for all $q\in \Pi(U)$ it holds that  $\max\{w\mid (p,w)\in {\mathcal A}' \land J(p, q) = \{U\}\}=\max\{w\mid (p,w)\in {\mathcal A} \land J(p, q)=\{U\}\}$.
Then Bodlaender et al. \cite{Bodlaender2015} provided {\sf reduce} algorithm that computes a {\em small} representative set of weighted partitions.

\begin{theorem}[{\sf reduce} \cite{Bodlaender2015}]\label{thm:tw:reduce}
There exists an algorithm {\sf reduce} that given a set of weigh\-ted partitions ${\mathcal A} \subseteq \Pi(U)\times {\mathbb N}$, outputs in time $|{\mathcal A}|2^{(\omega-1)|U|}|U|^{O(1)}$ a set of weigh\-ted partitions ${\mathcal A}'\subseteq {\mathcal A}$ such that ${\mathcal A}'$ represents ${\mathcal A}$ and $|{\mathcal A'}| \le 2^{|U|-1}$, where $\omega < 2.3727$ denotes the matrix multiplication exponent.
\end{theorem}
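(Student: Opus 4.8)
The plan is to prove the statement through the linear-algebraic heart of the rank-based method: encode each weighted partition as a row of a matrix over $\mathrm{GF}(2)$, show that the matrix governing the representation relation has rank at most $2^{|U|-1}$, and then extract a maximum-weight independent set of rows by greedy Gaussian elimination. Concretely, I would fix an arbitrary element $u^\ast\in U$ and introduce two matrices. The \emph{cut matrix} $\mathcal{C}$ has rows indexed by $p\in\Pi(U)$ and columns indexed by the subsets $X\subseteq U$ with $u^\ast\in X$; set $\mathcal{C}[p,X]=1$ iff $X$ is a union of blocks of $p$. Note $\mathcal{C}$ has exactly $2^{|U|-1}$ columns. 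The \emph{join matrix} $\mathcal{M}$ is defined by $\mathcal{M}[p,q]=1$ iff $J(p,q)=\{U\}$. The key lemma I would establish is the factorization $\mathcal{M}=\mathcal{C}\,\mathcal{C}^{\mathsf{T}}$ over $\mathrm{GF}(2)$: the $(p,q)$ entry of $\mathcal{C}\,\mathcal{C}^{\mathsf{T}}$ counts, modulo $2$, the subsets $X\ni u^\ast$ that are unions of blocks of both $p$ and $q$, and these are exactly the unions of blocks of $J(p,q)$ containing $u^\ast$, of which there are $2^{b-1}$ when $J(p,q)$ has $b$ blocks. Since $2^{b-1}$ is odd iff $b=1$, this parity equals $\mathcal{M}[p,q]$, and hence $\mathrm{rank}_{\mathrm{GF}(2)}(\mathcal{M})\le 2^{|U|-1}$.

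With the rank bound in hand, the algorithm is the natural greedy one. Sort the pairs of $\mathcal{A}$ by non-increasing weight and process them in that order, maintaining a set $\mathcal{A}'$ of kept partitions whose rows $\mathcal{C}[p,\cdot]$ form a linearly independent family over $\mathrm{GF}(2)$; a newly arriving $(p,w)$ is kept iff $\mathcal{C}[p,\cdot]$ lies outside the span of the rows kept so far, and discarded otherwise. Because the rows all come from the column space of dimension at most $2^{|U|-1}$, at most $2^{|U|-1}$ rows are ever kept, so $\mathcal{A}'\subseteq\mathcal{A}$ and $|\mathcal{A}'|\le 2^{|U|-1}$, as required.

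For correctness I would argue that $\mathcal{A}'$ represents $\mathcal{A}$. Fix any $q\in\Pi(U)$ and let $(p,w)\in\mathcal{A}$ attain the maximum weight subject to $J(p,q)=\{U\}$. If $p$ was kept, the witness is already in $\mathcal{A}'$. Otherwise $\mathcal{C}[p,\cdot]=\sum_{p'\in B}\mathcal{C}[p',\cdot]$ for some subset $B$ of previously kept rows, all of weight at least $w$ since they were processed before $p$. Multiplying on the right by $\mathcal{C}[q,\cdot]^{\mathsf{T}}$ and invoking the factorization yields $\mathcal{M}[p,q]=\sum_{p'\in B}\mathcal{M}[p',q]$ over $\mathrm{GF}(2)$; as the left-hand side is $1$, some $p'\in B$ satisfies $\mathcal{M}[p',q]=1$, i.e.\ $J(p',q)=\{U\}$, and its weight is at least $w$. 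Since $\mathcal{A}'\subseteq\mathcal{A}$ gives the reverse inequality for free, the two maxima coincide, which is precisely the definition of representation.

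Finally, for the running time the bottleneck is the span maintenance, which amounts to row-reducing an $|\mathcal{A}|\times 2^{|U|-1}$ matrix over $\mathrm{GF}(2)$. I would process the incoming rows in blocks of size $2^{|U|-1}$ and perform the elimination against the current basis via fast matrix multiplication, so each block costs $2^{(\omega-1)|U|}|U|^{O(1)}$ and the $|\mathcal{A}|/2^{|U|-1}$ blocks give the claimed total of $|\mathcal{A}|\,2^{(\omega-1)|U|}|U|^{O(1)}$. I expect the main obstacle to be twofold. The conceptual crux is getting the parity argument in the factorization exactly right, in particular the role of the pinned element $u^\ast$, which is exactly what converts the raw count $2^{b}$ into an odd/even test isolating $b=1$; without pinning, the counts are always even and the matrix degenerates. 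The technical crux is engineering the greedy basis extraction so that maintaining the span runs in $2^{(\omega-1)|U|}$ per block rather than the naive $2^{2|U|}$ per step, which is where fast matrix multiplication, rather than elementary row operations, becomes essential.
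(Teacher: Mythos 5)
This statement is imported by the paper from Bodlaender et al.~\cite{Bodlaender2015} without proof, so there is no in-paper argument to compare against; your proposal reconstructs the original proof from that reference, and it is correct. The three ingredients — the factorization $\mathcal{M}=\mathcal{C}\,\mathcal{C}^{\mathsf{T}}$ over $\mathrm{GF}(2)$ with the pinned element $u^\ast$ turning the count $2^{b-1}$ into a parity test for $b=1$, the resulting rank bound of $2^{|U|-1}$, and the weight-ordered greedy extraction of an independent subset of rows — are exactly the argument in the source. The only point worth flagging is the running time: the block-wise elimination must compute the \emph{first} maximal independent subset of rows in the sorted order (not an arbitrary row basis), since the correctness argument needs every discarded row to be spanned by kept rows of no smaller weight; your sketch of processing sorted rows in blocks of size $2^{|U|-1}$ and reducing each block against the accumulated basis via fast matrix multiplication does achieve this, and matches the order-preserving basis computation used in~\cite{Bodlaender2015} to get the $|\mathcal{A}|\,2^{(\omega-1)|U|}|U|^{O(1)}$ bound.
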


The {\sf reduce} algorithm allows us to compute an optimal solution without keeping all weighted partitions.
We apply {\sf reduce} algorithm to the set of partitions at each node in the $O^*({2}^{O(tw\log tw)})$-time algorithm of the previous section.

\begin{theorem}\label{thm:rank-based:concut}
Given a tree decomposition of width $tw$, there are $O^*((1+2^{\omega+1})^{tw})$-time deterministic algorithms for \textsc{Maximum Connected Cut} and \textsc{Maximum Connected $st$-Cut}.
\end{theorem}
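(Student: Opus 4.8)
The plan is to take the $2^{\Oh(tw\log tw)}$-time dynamic program of Theorem~\ref{dynamic}, specialize it to the connected-cut setting in which only the side $S$ must induce a connected subgraph, and then compress the table at every node of the nice tree decomposition using the \textsf{reduce} algorithm of Theorem~\ref{thm:tw:reduce}. Since \textsc{Maximum Connected Cut} imposes no connectivity requirement on $T=V\setminus S$, I would discard the partition $\rho_2$ that tracks the components of the $T$-side and keep only a single partition $\rho$ of $S\cap X_\ell$ recording how the bag vertices on the connected side are distributed among the components of $G_\ell[S]$. Instead of storing, for each subset $S\subseteq X_\ell$ and each partition $\rho$, a single value $c[\ell,S,\rho]$, I would store for every $S\subseteq X_\ell$ a family of weighted partitions $\mathcal{A}_\ell[S]\subseteq\Pi(S)\times\mathbb{N}$, the weight of $(\rho,w)$ being the size of a cut-set of $G_\ell$ whose connected-side trace on the bag is $S$ and whose component structure on the bag is $\rho$.

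First I would rewrite each recurrence of Theorem~\ref{dynamic} as an operation on these families. At an introduce-vertex node the new vertex is either a fresh singleton block of $\rho$ (when placed in $S$) or simply enters $T$; at a dedicated introduce-edge node $\{u,v\}$ the weight is incremented by one when $u,v$ lie on opposite sides, and the blocks of $\rho$ containing $u$ and $v$ are glued together when both lie in $S$; at a forget-vertex node the forgotten vertex is projected out of every partition, with the usual handling of the fully-forgotten-component marker $\{\emptyset\}$ as in Theorem~\ref{dynamic}; and at a join node with children $\ell',\ell''$ sharing the bag I would, for each common $S$, form the join family $\{(\,J(\rho',\rho''),w'+w'')\mid(\rho',w')\in\mathcal{A}_{\ell'}[S],\ (\rho'',w'')\in\mathcal{A}_{\ell''}[S]\}$. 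Because each edge is introduced at its own introduce-edge node, weights add at join nodes without any correction term. Immediately after every operation I would invoke \textsf{reduce} to replace $\mathcal{A}_\ell[S]$ by a representative subfamily of size at most $2^{|S|-1}$.

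Correctness follows from the representation property: \textsf{reduce} preserves, for every $q\in\Pi(S)$, the maximum weight of a stored partition $p$ with $J(p,q)=\{S\}$, and since every transition is expressed through joins of partitions, an induction on the decomposition shows that the reduced families still capture the best achievable cut size for each attainable connectivity pattern on the bag; enforcing that the connected side eventually collapse to the single block $\{S\}$ is exactly the requirement that $G[S]$ be connected. For the running time, the join nodes dominate: for a fixed $S$ with $|S|=s$ the two reduced children have size at most $2^{s-1}$, so their join produces at most $2^{2s-2}$ weighted partitions, and applying \textsf{reduce} costs $2^{2s-2}\cdot 2^{(\omega-1)s}\cdot s^{\Oh(1)}=2^{(\omega+1)s}\cdot s^{\Oh(1)}$ time. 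Summing over all subsets of a bag of size $tw+1$ gives
\[
\sum_{s=0}^{tw+1}\binom{tw+1}{s}2^{(\omega+1)s}=(1+2^{\omega+1})^{tw+1},
\]
and since there are $\Oh(tw\cdot n)$ nodes and the other node types are no more expensive, the total running time is $O^*\big((1+2^{\omega+1})^{tw}\big)$.

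For \textsc{Maximum Connected $st$-Cut} I would add both $s$ and $t$ to every bag (raising the width by at most two, hence only a constant factor in the base) and restrict the program to indices with $s\in S$ and $t\notin S$, forcing the cut to separate $s$ from $t$ while leaving the analysis unchanged. The step I expect to be the main obstacle is justifying the one-sided simplification together with the connectivity bookkeeping: because the $T$-side carries no partition, I must argue that dropping $\rho_2$ never legitimizes a disconnected $S$, and that the distinction between $\emptyset$ and $\{\emptyset\}$ from Theorem~\ref{dynamic}, together with the rule allowing at most one already-completed component, is faithfully encoded inside the weighted-partition families, so that the representation guarantee of \textsf{reduce} continues to hold after every transition rather than only at the join step for which it was originally stated.
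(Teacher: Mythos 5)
Your proposal is correct and follows essentially the same route as the paper: drop the $T$-side partition $\rho_2$, maintain for each $S\subseteq X_\ell$ a family of weighted partitions of $S$, compress with \textsf{reduce} after every transition, and observe that the join nodes dominate with cost $\sum_{S\subseteq X_\ell}2^{2|S|}2^{(\omega-1)|S|}=O^*\bigl((1+2^{\omega+1})^{tw}\bigr)$. The paper's own proof is in fact terser than yours and does not elaborate on the connectivity bookkeeping you flag as a potential obstacle, so no gap relative to the paper's argument.
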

\begin{proof}
For a bag $X_\ell$, we compute the value $c[\ell,S_i, \rho_1]$ for each $S_i\subseteq X_\ell$ and a partition $\rho_1$ of $S_i$.
For each $S_i$ and $T_i$,
we apply the {\sf reduce} algorithm to a set of weighted partitions $(\rho_1, c[\ell, S_i, \rho_1])$ that are obtained by recursive formulas as described in the previous section.
At each node $i$, the {\sf reduce} algorithm outputs only $2^{|S_i| - 1}$ weighted partitions for each $S_i$.
Thus, at each node except join nodes, the running time of evaluating the recursive formula is $\sum_{S_i \subseteq X_i}O^*(2^{|S_i|}) = O^*(3^{tw})$ and of the {\sf reduce} algorithm is $\sum_{S_i \subseteq X_i}O^*(2^{|S_i|}2^{(\omega - 1)|S_i|}) = \sum_{S_i \subseteq X_i}O^*(2^{\omega|S_i|}) = O^*((1 + 2^{\omega})^tw)$.
At each join node, since the output of evaluating the recursive formula may contain $O^*(2^{2|S_i|})$ weighted partitions for each $S_i$. 
Thus, the total running time at join node $i$ is $$\sum_{S_i \subseteq X_i}O^*(2^{2|S_i|}2^{(\omega - 1)|S_i|}) = O^*((1 + 2^{\omega + 1})^{tw}).$$
Hence, the theorem follows. 
\end{proof}

Note that if a tree decomposition has no join nodes, namely a path decomposition, the overall running time is $O^*((1 + 2^{\omega})^pw)$.

\begin{theorem}\label{thm:rank-based:minimalcut}
Given a tree decomposition of width $tw$, there are $O^*(2^{(\omega +2)tw})$-time deterministic algorithms for \textsc{Largest Bond} and \textsc{Largest $st$-Bond}.
\end{theorem}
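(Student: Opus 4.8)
The plan is to mirror the argument of Theorem~\ref{thm:rank-based:concut}, but now I must track connectivity information for \emph{both} sides of the cut simultaneously, since a bond requires $G[S]$ \emph{and} $G[V \setminus S]$ to be connected. In the $2^{\Oh(tw\log tw)}$-time algorithm of Theorem~\ref{dynamic}, each table entry $c[\ell, S, \rho_1, \rho_2]$ already carries two partitions $\rho_1$ and $\rho_2$, one for each side of the induced cut. The idea is to apply the \textsf{reduce} algorithm of Theorem~\ref{thm:tw:reduce} to collapse the partition data into a small representative set, exactly as in the connected-cut case; the only difference is that here we cannot discard $\rho_2$, so the representative set must be indexed by the \emph{pair} $(\rho_1, \rho_2)$.

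First I would fix the subset $S \subseteq X_\ell$ and treat the remaining data as a set of weighted \emph{pairs} of partitions, one partition on $S$ and one on $X_\ell \setminus S$. The key observation is that connectivity of the two sides is an \emph{independent} requirement, so the representative set can be obtained by applying \textsf{reduce} to the two coordinates separately. Concretely, for a fixed $S$ one first reduces over $\rho_1$ (treating $\rho_2$ as part of the weight/index), obtaining at most $2^{|S|-1}$ surviving partitions of $S$, and then reduces over $\rho_2$, obtaining at most $2^{|X_\ell \setminus S|-1}$ surviving partitions of $X_\ell \setminus S$. The resulting doubly-reduced family faithfully represents all partial solutions, because the join-to-$\{U\}$ criterion that governs correctness of \textsf{reduce} is checked coordinatewise. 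After reduction, the number of stored pairs for a fixed $S$ is at most $2^{|S|-1} \cdot 2^{|X_\ell \setminus S|-1} = 2^{|X_\ell|-2}$.

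Next I would carry out the running-time bookkeeping, which dominates the proof. At a join node, before reduction the recurrence may combine $\Oh(2^{2|S|})$ partitions on the left side with $\Oh(2^{2(|X_\ell|-|S|)})$ on the right, and the cost of the two \textsf{reduce} calls contributes factors of $2^{(\omega-1)|S|}$ and $2^{(\omega-1)(|X_\ell|-|S|)}$ respectively. Summing over all $S \subseteq X_\ell$ via the binomial identity $\sum_{S}\prod = (1+2^{c})^{|X_\ell|}$ with the appropriate exponent $c$, the per-coordinate contribution behaves like the $\concut$ bound $(1+2^{\omega+1})$, but now both coordinates are paid for, yielding an overall factor of the form $2^{(\omega+2)tw}$ after absorbing the base-size factors and taking $|X_\ell| \le tw+1$. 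Since there are $\Oh(tw \cdot n)$ nodes and each non-join node is cheaper, the total running time is $O^*(2^{(\omega+2)tw})$ as claimed.

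The main obstacle I anticipate is justifying that reducing the two partition coordinates \emph{separately} is correct — i.e., that independently shrinking the $\rho_1$-family and the $\rho_2$-family still preserves an optimal \emph{bond}, where both connectivity constraints must hold together at the root. The subtlety is that \textsf{reduce} preserves, for every fixed query partition $q$, the best weight whose join with the stored partition equals the full set; one has to argue that the relevant query for a bond factors into a left-query and a right-query, so that coordinatewise representation suffices. This follows because at the root we demand $\rho_1 = \{\emptyset\}$ \emph{and} $\rho_2 = \{\emptyset\}$, i.e.\ each side collapses to a single component independently, so the global optimality criterion decomposes as a conjunction of two independent join conditions. Establishing this decomposition cleanly, and confirming it is maintained through the forget and join recurrences, is the crux; the remaining arithmetic is routine.
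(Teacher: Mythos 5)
Your proposal matches the paper's proof essentially step for step: fix $S\subseteq X_\ell$, apply \textsf{reduce} twice—once to the $\rho_1$-coordinate and then to the $\rho_2$-coordinate for each surviving $\rho_1$—keep at most $2^{|X_\ell|-2}$ representative pairs per $S$, and bound the total cost by $\sum_{S\subseteq X_\ell}O^*\bigl(2^{2(|X_\ell|-2)}2^{(\omega-1)|X_\ell|}\bigr)=O^*\bigl(2^{(\omega+2)tw}\bigr)$. The only difference is that you explicitly flag and argue the correctness of coordinatewise reduction (the root query factoring into independent left and right join conditions), a point the paper leaves implicit.
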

\begin{proof}
For a bag $X_\ell$, we compute the value $c[\ell, S_i, \rho_1, \rho_2]$ for each $S_i \subseteq X_\ell$ and $\rho_1, \rho_2$ being partitions of $S_i$ and $X_\ell\setminus S_i$, respectively.
Similar to Theorem~\ref{thm:rank-based:concut}, for each $S_i$,
we apply the {\sf reduce} algorithm to a set of weighted partitions $(\rho_1, c[\ell, S_i, \rho_1, \rho_2])$ and then apply it again to weighted partitions $(\rho_2, c[\ell,S_i, T_i, \rho_1', \rho_2])$ for each $S_i, T_i$, and for each remaining $\rho_1$ of the first application.
Since there are at most $2^{|S_i|-1}2^{|T_i|-1} = 2^{|X_i|-2}$ weighted partitions in the representative set for each $S_i \subseteq X_i$,
the total running time is $\sum_{S_i \subseteq X_i}O^*(2^{2(|X_i|-2)}2^{(\omega-1)|X_i|}) = O^*(2^{(\omega + 2)tw})$. 
\end{proof}

\begin{corollary}\label{thm:singleexpo}
\textsc{Largest Bond}, \textsc{Largest $st$-Bond}, \textsc{Maximum Connected Cut} and \textsc{Maximum Connected $st$-Cut} can be solved in $2^{O(k)}\times n^{O(1)}$ time.
\end{corollary}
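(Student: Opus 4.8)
The plan is to establish Corollary~\ref{thm:singleexpo} by a standard win/win argument that couples the treewidth-bounding preprocessing developed earlier with the single-exponential treewidth algorithms of the rank-based subsection. The key observation is that all four problems have already been reduced, in polynomial time, to instances of treewidth $\Oh(k)$ whenever the instance is not trivially a yes-instance, and that the rank-based algorithms solve them in time $2^{\Oh(tw)}\times n^{\Oh(1)}$. Combining these two facts for each problem is essentially all that is needed.

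First, for \textsc{Largest Bond} I would run the polynomial-time procedure of Corollary~\ref{cor:bounding_tw}. It either certifies that $G$ contains a bond of size at least $k$, in which case we immediately answer yes, or it returns a tree decomposition of $G$ of width at most $2k-2$. In the latter case I would feed this decomposition to the algorithm of Theorem~\ref{thm:rank-based:minimalcut}, whose running time is $O^*(2^{(\omega+2)tw})$; substituting $tw\le 2k-2$ yields a bound of $2^{\Oh(k)}\times n^{\Oh(1)}$. The same two-step recipe handles \textsc{Maximum Connected Cut}, now using the variant of Corollary~\ref{cor:bounding_tw} based on the $K_{1,k}$ minor (which gives width at most $k-1$) together with Theorem~\ref{thm:rank-based:concut}, whose running time $O^*((1+2^{\omega+1})^{tw})$ is again $2^{\Oh(k)}\times n^{\Oh(1)}$ for $tw=\Oh(k)$.

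For the $st$-variants the argument is identical, except that the preprocessing is supplied by Corollary~\ref{st_bounding_treewidth}: in polynomial time it either certifies an $st$-bond of size at least $k$ or returns an equivalent subgraph $G'$ together with a tree decomposition of width $\Oh(k)$. Since Corollary~\ref{st_bounding_treewidth} guarantees that $G'$ has an $st$-bond of size at least $k$ if and only if $G$ does, and since $st$-bonds are feasible solutions for \textsc{Maximum Connected $st$-Cut}, running Theorem~\ref{thm:rank-based:minimalcut} (resp.\ Theorem~\ref{thm:rank-based:concut}) on $G'$ and its decomposition settles both $st$-problems within the claimed time.

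The argument is essentially routine once the ingredients are in place, so there is no deep obstacle; the only points requiring care are bookkeeping ones. I would make sure that the ``certify a large solution'' branch of each corollary really lets us answer the decision problem correctly, which for a yes/no question is immediate since a certified bond or connected cut of size at least $k$ is itself the witness. I would also verify that the width bound $tw=\Oh(k)$ is substituted into the correct exponent of each rank-based running time, so that a constant base raised to $\Oh(k)$ collapses to a clean $2^{\Oh(k)}$ factor. Finally, for the $st$-cases I would double-check that replacing $G$ by the equivalent instance $G'$ preserves the answer, which is exactly the content of Corollary~\ref{st_bounding_treewidth}.
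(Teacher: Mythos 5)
Your proposal is correct and follows exactly the same route as the paper: the paper's own proof simply cites Corollary~\ref{cor:bounding_tw}, Corollary~\ref{st_bounding_treewidth}, Theorem~\ref{thm:rank-based:concut} and Theorem~\ref{thm:rank-based:minimalcut}, which is precisely the win/win combination you spell out. Your additional bookkeeping remarks (substituting $tw=\Oh(k)$ into the constant-base exponents, and using the $K_{1,k}$ variant for connected cuts) are consistent with the surrounding text of the paper.
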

\begin{proof}
It follows directed form Corollary~\ref{cor:bounding_tw}, Corollary~\ref{st_bounding_treewidth}, Theorem~\ref{thm:rank-based:concut} and Theorem~\ref{thm:rank-based:minimalcut}. 
\end{proof}

\subsubsection{{Cut \& Count}}

In this subsection, we design much faster randomized algorithms by using Cut \& Count, which is the framework for solving the connectivity problems faster~\cite{Cygan2011}. 
In Cut \& Count,  we count the number of {\em relaxed} solutions modulo 2 on a tree decomposition and determine whether there exists a connected solution by cancellation tricks.

\begin{definition}[\cite{Cygan2011}]
A cut $[V_1,  V_2]$ of $V' \subseteq V$ such that $V_1\cup V_2 = V'$ and $V_1\cap V_2 = \emptyset$ is {\em consistent} if $v_1 \in V_1$ and $v_2 \in V_2$ implies $(v_1, v_2) \notin E$. 
\end{definition}

In other words, a cut $[V_1, V_2]$ of $V'$ is consistent if there are no edge between $V_1$ and $V_2$. 
 
Fix an arbitrary vertex $v$ in $V_1$. If $G[V]$ has $k$ components, then there exist $2^{k-1}$ consistent cuts of $V$. Thus, when $G[V]$ is connected, there only exists one consistent cut $[V_1, V_2]=[V, \emptyset]$. From this observation, $G[V]$ is connected if and only if the number of consistent cuts is odd. 
Therefore, in order to compute ``connected solutions'', it seems to suffice to count the number of consistent cuts modulo two at first glance.
However, this computation may fail to count the number of ``connected solutions'' since there can be even number of such solutions. 
To overcome this obstacle, Cygan et al.~\cite{Cygan2011} used the Isolation Lemma~\cite{MVV1987}, which ensures with high probability that the problem has a unique minimum solution.
For the detail of the Isolation Lemma, see~\cite{cygan2015parameterized,MVV1987}.



We follow the Cut \& Count framework in \cite{cygan2015parameterized,Cygan2011}: We apply it to determining whether there exists a minimal $st$-cut, a cut that separates $s$ and $t$, of size $k$, namely $st$-bond.
Recall that $[S, V \setminus S]$ is a $st$-bond of a connected graph $G = (V, E)$ if both $G[S]$ and $G[V\setminus S]$ are connected, $s \in S$, and $t \in V \setminus S$.

Let $i$ be the index of a node of a nice tree decomposition of $T$ of $G$.

\begin{definition}
Let $r \in \{1, 2, \ldots, |E|\}$. Let $S^l_i, S^r_i, T^l_i, T^r_i$ be pairwise disjoint (possibly) subsets of $X_i$ such that $S^l_i \cup S^r_i \cup T^l_i \cup T^r_i = X_i$.
A \emph{partial solution} for $(S^l_i, S^r_i, T^l_i, T^r_i, r)$ is a cut $[S, V_i \setminus S]$ of $G_i$ such that:
\begin{itemize}
	\item $S \cap X_i = S^l_i \cup S^r_i$ and $(V_i \setminus S) \cap X_i = T^l_i \cup T^r_i$,
	\item $[S_i^l,S_i^r]$ and $[T^l_i, T^r_i]$ are consistent cuts of $S^l_i \cup S^r_i$ and $T^l_i \cup T^r_i$, respectively, 
	\item there are exactly $r$ cut edges between $S$ and $V_i \setminus S$ in $G_i$, and 
	\item $s\in V_i \implies s\in S_i^l$ and $t\in V_i \implies t\in T_i^l$.
\end{itemize}
\end{definition}

Before proceeding to our dynamic programming, we assign a weight $w_v$ to each vertex $v \in V$ by choosing an integer from $\{1, \ldots, 2n\}$  independently and uniformly at random.
We also use the following preprocessing: add $s$ and $t$ to each node of $T$ and remove the bags introduce $s$ or $t$ from $T$.
In our dynamic programming algorithm, for each node $i$ and for $0 \le w \le 2n^2$ and $0\le r \le |E|$, 
we count the number of partial solutions $[S, V_i\setminus S]$ for $(S^l_i, S^r_i, T^l_i, T^r_i, r)$ such that the total weight of $S$ is exactly $w$, which we denote by $c[i,S^l_i, S^r_i, T^l_i, T^r_i, r, w]$.
By the Isolation Lemma, with high probability, there is a minimal $st$-cut of $G$ of size exactly $k$ if and only if $c(i,\{s\}, \emptyset, \{t\}, \emptyset, k, w)$ is odd 
for some $0 \le w \le 2n^2$ in the root node $r(T)=i$.
In the following, we describe the recursive formula for our dynamic programming.

\paragraph*{Leaf node:}  In a leaf node $i$, since $X_i= \{s, t\}$, we have $c[i,S^l_i, S^r_i, T^l_i, T^r_i, r, w]=1$  if $S_i^l=\{s\}$, $T^l_i=\{t\}$, $S_{1}^r=T^r_i=\emptyset$, $r=0$, and $w=0$. 
Otherwise, $c[i,S^l_i, S^r_i, T^l_i, T^r_i, r, w]=0$.

\medskip

\paragraph*{Introduce vertex  $v$ node:} 
In an introduce vertex node $i$, we consider the following four cases:
\begin{eqnarray*}
c[i,S^l_i, S^r_i, T^l_i, T^r_i, r, w]
& = \begin{cases}
c[j,S^l_i\setminus \{v\}, S^r_i, T^l_i, T^r_i, r, w-w(v)]& \mbox{if  $v\in S^l_i$,} \\
c[j,S^l_i, S^r_i\setminus \{v\},, T^l_i, T^r_i, r, w-w(v)]& \mbox{if  $v\in S^r_i$,} \\
c[j,S^l_i, S^r_i, T^l_i\setminus \{v\},, T^r_i, r, w]& \mbox{if  $v\in T^l_i$,} \\
c[j,S^l_i, S^r_i, T^l_i, T^r_i\setminus \{v\}, r, w]& \mbox{if $v \in T^r_i$}. \\
  \end{cases}
\end{eqnarray*}
As $v \in X_i$, exactly one of the above cases is applied. 

\medskip

\paragraph*{Introduce edge $(u, v)$ node:} 
Let $i$ be an introduce node of $T$.
Let $S^l_i, S^r_i, T^l_i, T^r_i$ be disjoint subsets of $X_i$ whose union covers $X_i$.
If exactly one of $u$ and $v$ belongs to $S^l_i\cup S^r_i$ (i.e. the other one belongs to $T^l_i\cup T^r_i$),
the edge is included in the cutset.
Suppose otherwise, that is, either $u, v \in S^l_i \cup S^r_i$ or $u, v \in T^l_i \cup T^r_i$. 
If $u$ and $v$ belong to different sets, say $u \in S^l_i$ and $v \in S^r_i$, then
$[S^l_i, S^r_i]$ is not consistent.
Therefore, there is no partial solutions in this case.
To summarize these facts, we have the following:
\begin{eqnarray*}
c[i,S^l_i, S^r_i, T^l_i, T^r_i, r, w]
& = \begin{cases}
	c[j,S^l_i, S^r_i, T^l_i, T^r_i, r-1, w]& \mbox{if $|(S^l_i\cup S^r_i) \cap \{u, v\}| = 1$,} \\
	c[j,S^l_i, S^r_i, T^l_i, T^r_i, r, w] & \mbox{if  $u,v$ are in the same set,}\\
	0 & \mbox{otherwise}.
  \end{cases}
\end{eqnarray*}

\medskip

\paragraph*{Forget $v$ node:}
In a forget node $i$, we just sum up the number of partial solutions:
\begin{eqnarray*} 
c[i,S^l_i, S^r_i, T^l_i, T^r_i, r, w] = &c[j,S^l_i\cup \{v\}, S^r_i, T^l_i, T^r_i, r, w] + c[j,S^l_i, S^r_i\cup \{v\}, T^l_i, T^r_i, r, w]\\
&+c[j,S^l_i, S^r_i, T^l_i\cup \{v\}, T^r_i, r, w] +c[j,S^l_i, S^r_i, T^l_i, T^r_i\cup \{v\}, r, w].
\end{eqnarray*}

\medskip

\paragraph*{Join node:}
Let $i$ be a join node and $j_1$ and $j_2$ its children.
As $X_i = X_{j_1} = X_{j_2}$, it should hold that $S^l_i = S^l_{j_1} = S^l_{j_2}$, $S^r_i = S^r_{j_1} = S^r_{j_2}$, $T^l_i = T^l_{j_1} = T^l_{j_2}$, and $T^r_i = T^r_{j_1} = T^r_{j_2}$.

The size of a partial solution $S_i$ at $i$ is the sum of the size of partial solutions $S_{j_1}$ and $S_{j_2}$ at its children, minus the number of edges from $S_i\cap X_i$ to $X_i\setminus S_i$, since such edges are in both $S_{j_1}$ and $S_{j_2}$. Also, the total weight of $S_i$ is the the sum of the weight of $S_{j_1}$ and $S_{j_2}$ minus the the total weight of $S_i\cap X_i$.

Thus, we have $$c[i,S^l_i, S^r_i, T^l_i, T^r_i, r, w]$$ equals to:
\begin{eqnarray*} 
 \sum_{r_{j_1}+r_{j_2}-\alpha~=r~}   \sum_{w_{j_1}+w_{j_2}-\beta~=~w} c[j_1,S^l_i, S^r_i, T^l_i, T^r_i, r_{j_1}, w_{j_1}] c[{j_2},S^l_i, S^r_i, T^l_i, T^r_i, r_{j_2}, w_{j_2}].
\end{eqnarray*}
where $\alpha$ is the set of edges from $S^l_i\cup S^r_i$ to $T^l_i\cup T^r_i$, and $\beta$ is the sum of the weights of vertices in $S^l_i\cup S^r_i$.

The running time of evaluating the recursive formulas is $O^*(4^{|X_i|})$ for each node $i$.
Therefore, the total running is $O^*(4^{tw})$.
We can also solve  \textsc{Largest Bond} in time $O^*(4^{tw})$ by applying the algorithm for \textsc{Largest $st$-Bond} for all combinations of $s$ and $t$.

\begin{theorem}\label{thm:treewidth_single:Maxmin}
Given a tree decomposition of width $tw$, there is a Monte-Carlo algorithm that solves \textsc{Largest Bond} and \textsc{Largest $st$-Bond} in time $O^*(4^{tw})$. It cannot give false positives and may give false negatives with probability at most 1/2.
\end{theorem}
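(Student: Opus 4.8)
The plan is to prove correctness of the dynamic program set up above and then bound its running time and its error probability. First I would isolate the \emph{counting principle} that makes Cut \& Count work here. Fix any cut $[S, V\setminus S]$ of $G$ that separates $s$ from $t$ and has exactly $k$ cut edges. The table entries count \emph{partial solutions}, each of which decorates such a cut with a consistent cut $[S^l,S^r]$ of $S$ and a consistent cut $[T^l,T^r]$ of $V\setminus S$. Since a consistent cut crosses no edge, every connected component of $G[S]$ lies wholly in $S^l$ or wholly in $S^r$; pinning $s\in S^l$ then leaves $2^{c_S-1}$ choices, where $c_S$ is the number of components of $G[S]$, and symmetrically $2^{c_T-1}$ choices on the other side with $t\in T^l$. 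Thus a fixed cut contributes $2^{(c_S-1)+(c_T-1)}$ partial solutions, an odd number exactly when $c_S=c_T=1$, i.e.\ exactly when the cut is an $st$-bond. Working modulo $2$ therefore cancels every cut with a disconnected side, so the parity of $\sum_{w} c[\text{root},\{s\},\emptyset,\{t\},\emptyset,k,w]$ reflects only genuine $st$-bonds of size $k$.

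The main obstacle I anticipate is that the number of $st$-bonds of size $k$ may itself be even, in which case a raw parity test would wrongly cancel all of them; overcoming this is exactly what the \emph{Isolation Lemma} is for. Having drawn the vertex weights $w_v$ uniformly and independently from $\{1,\dots,2n\}$ and tracked the total weight $w$ of the $S$-side, with probability at least $1/2$ there is a unique minimum-weight $st$-bond, and for the weight $w$ attaining that minimum the entry $c[\text{root},\{s\},\emptyset,\{t\},\emptyset,k,w]$ is odd. Scanning all $0\le w\le 2n^2$ and answering ``yes'' as soon as some entry is odd then gives a Monte-Carlo algorithm with one-sided error: an odd entry certifies an $st$-bond (no false positives), while cancellation can only conceal a solution (false negatives with probability at most $1/2$). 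Making this implication watertight --- in particular the no-false-positive direction together with the probability bound --- is the step I would write most carefully.

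It remains to check the recurrences and the cost. I would verify node by node that $c[i,\cdot]$ counts precisely the partial solutions of $G_i$ compatible with the prescribed partition of $X_i$ into $(S^l_i,S^r_i,T^l_i,T^r_i)$, the cut size $r$, and the weight $w$; the leaf, introduce-vertex and forget cases are routine, and the introduce-edge case is what enforces consistency, zeroing out any placement in which an edge inside $S^l_i\cup S^r_i$ or inside $T^l_i\cup T^r_i$ would cross its two parts. The join node combines the children through a double convolution over $r=r_{j_1}+r_{j_2}-\alpha$ and $w=w_{j_1}+w_{j_2}-\beta$, correcting for the edges and weights inside $X_i$ that are otherwise counted twice. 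For the running time, each bag admits $4^{|X_i|}\le 4^{tw+1}$ distributions of its vertices among the four sets and the convolutions add only polynomial factors, so the join node dominates and the whole table is filled in $O^*(4^{tw})$ time. Finally, \textsc{Largest Bond} is obtained by running the \textsc{Largest $st$-Bond} procedure over all $O(n^2)$ pairs $(s,t)$, which is absorbed into the $O^*(4^{tw})$ bound.
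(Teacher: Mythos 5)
Your proposal is correct and follows essentially the same route as the paper: the same Cut \& Count setup with partial solutions decorated by consistent cuts of both sides, the parity argument via $2^{(c_S-1)+(c_T-1)}$, the Isolation Lemma over vertex weights in $\{1,\dots,2n\}$ to avoid cancellation among genuine bonds, the same node-by-node recurrences with $4^{|X_i|}$ states dominating the cost, and the reduction of \textsc{Largest Bond} to all pairs $(s,t)$. Your explicit justification of the one-sided error (an odd sum forces at least one odd term, hence at least one cut with $c_S=c_T=1$) is in fact spelled out more carefully than in the paper's own text.
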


We can also solve  \textsc{Maximum Connected Cut} and \textsc{Maximum Connected $st$-Cut}. Since it suffices to keep track of consistent cuts of $S$, the running time is $O^*(3^{tw})$.

\begin{theorem}\label{thm:treewidth_single:Connected}
Given a tree decomposition of width $tw$, there is a Monte-Carlo algorithm that solves  \textsc{Maximum Connected Cut} and \textsc{Maximum Connected $st$-Cut}  in time $O^*(3^{tw})$. It cannot give false positives and may give false negatives with probability at most 1/2.
\end{theorem}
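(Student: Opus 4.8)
The plan is to reuse, essentially verbatim, the Cut \& Count dynamic program constructed for \textsc{Largest $st$-Bond} in the proof of Theorem~\ref{thm:treewidth_single:Maxmin}, and to exploit the single structural simplification afforded by our problem: a feasible solution to \textsc{Maximum Connected $st$-Cut} only requires $G[S]$ to be connected, with $s\in S$ and $t\notin S$, and imposes no connectivity constraint on $G[V\setminus S]$. Consequently, the cancellation-by-consistent-cuts mechanism of Cut \& Count, together with the Isolation Lemma weighting $w_v\in\{1,\dots,2n\}$, needs to certify connectivity of $S$ only. The first step is therefore to retain the consistent-cut bookkeeping on the $S$-side exactly as before, while collapsing all the machinery that previously tracked $V\setminus S$.

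Concretely, first I would replace the four-way partition $(S^l_i,S^r_i,T^l_i,T^r_i)$ of a bag $X_i$ by a three-way partition in which each vertex of $X_i$ lies in $S^l_i$, in $S^r_i$, or in $T_i:=(V\setminus S)\cap X_i$; here $[S^l_i,S^r_i]$ is the consistent cut used to count connectivity of $S$ modulo two, and $T_i$ is a single undifferentiated block, since no consistent cut of $V\setminus S$ is needed. Each partial-solution count becomes $c[i,S^l_i,S^r_i,T_i,r,w]$, recording cut-set size $r$ and total $S$-weight $w$. The recurrences are the natural restrictions of those in Theorem~\ref{thm:treewidth_single:Maxmin}: leaf, introduce-vertex and forget nodes are immediate; at an introduce-edge $(u,v)$ node a crossing edge (exactly one endpoint in $S^l_i\cup S^r_i$) increments $r$, an edge with both endpoints in $T_i$ is always allowed and leaves the state unchanged, and an edge inside $S$ is admissible only when both endpoints lie in the same side of the consistent cut (otherwise the count is $0$, exactly as in the bond case); the join node multiplies the children's counts over compatible splits of $r$ and $w$, subtracting the edges and the weights inside $X_i$ that would otherwise be double-counted. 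As before, by the Isolation Lemma, with probability at least $1/2$ there is a minimal such cut of size $k$ if and only if $c[r(T),\{s\},\emptyset,\{t\},k,w]$ is odd for some $0\le w\le 2n^2$, yielding a Monte-Carlo test with one-sided error.

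Since each bag vertex now has three states rather than four, evaluating every recurrence — in particular the join — costs $O^*(3^{|X_i|})=O^*(3^{tw})$, so the whole algorithm runs in $O^*(3^{tw})$ time; solving \textsc{Maximum Connected Cut} then follows by running the $st$-procedure over all choices of $(s,t)$, as in Theorem~\ref{thm:treewidth_single:Maxmin}. The main point requiring care, rather than a genuine obstacle, is verifying that discarding the $V\setminus S$ partition does not corrupt the parity argument: the Isolation Lemma and the consistent-cut cancellation are here applied solely to enforce connectivity of $S$, so one must confirm that the relaxed solutions counted are exactly the cuts $[S,V\setminus S]$ with $s\in S$, $t\notin S$ and $G[S]$ connected, and that the $3^{tw}$ states still correctly propagate $r$ and $w$ across the join. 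Granting the correctness of the template from Theorem~\ref{thm:treewidth_single:Maxmin}, this verification is routine.
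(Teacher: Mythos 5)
Your proposal is correct and matches the paper's approach exactly: the paper likewise obtains the $O^*(3^{tw})$ bound by reusing the Cut \& Count dynamic program for \textsc{Largest $st$-Bond} while maintaining a consistent cut only on the $S$-side, so each bag vertex has three states instead of four. The adaptation of the recurrences and the use of the Isolation Lemma are as in the paper, so no further comparison is needed.
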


By combining Corollary \ref{cor:bounding_tw} and Theorems \ref{thm:treewidth_single:Maxmin} and \ref{thm:treewidth_single:Connected}, we have the following theorem.
\begin{theorem}\label{thm:solution_size:Connected}
There are Monte-Carlo algorithms that solve  \textsc{Largest Bond} and \textsc{Maximum Connected Cut} in time $O^*(16^{k})$ and $O^*(3^{k})$, respectively. It cannot give false positives and may give false negatives with probability at most 1/2.
\end{theorem}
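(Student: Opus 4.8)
The plan is to combine the win/win preprocessing of Corollary~\ref{cor:bounding_tw} with the single-exponential treewidth algorithms of Theorems~\ref{thm:treewidth_single:Maxmin} and~\ref{thm:treewidth_single:Connected}; no new combinatorial idea is required, only a careful composition. For \textsc{Largest Bond} I would first run the polynomial-time procedure of Corollary~\ref{cor:bounding_tw}. In the branch where it certifies that $G$ already contains a bond of size at least $k$, we immediately return \textsc{yes}; this branch is deterministic and always correct. In the complementary branch we are handed a tree decomposition of $G$ of width at most $2k-2$, which we feed to the Monte-Carlo algorithm of Theorem~\ref{thm:treewidth_single:Maxmin}. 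Its running time is $O^*(4^{tw})$, and substituting $tw \le 2k-2$ gives $O^*(4^{2k-2}) = O^*(16^{k-1}) = O^*(16^{k})$, as claimed.

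For \textsc{Maximum Connected Cut} the same strategy applies, except that I would invoke the sharper form of the preprocessing noted just after Corollary~\ref{cor:bounding_tw}: replacing $K_{2,k}$ by $K_{1,k}$ in Lemma~\ref{k2kimpliesbond} (using that a connected graph containing $K_{1,k}$ as a minor has a connected cut of size at least $k$) lets us either conclude that a connected cut of size at least $k$ exists or produce a tree decomposition of width at most $k-1$. Running the algorithm of Theorem~\ref{thm:treewidth_single:Connected} on the latter costs $O^*(3^{tw}) = O^*(3^{k-1}) = O^*(3^{k})$. The only arithmetic to verify is the exponent conversion $4^{2k-2}=16^{k-1}$ and the fact that $3^{k-1}=O^*(3^{k})$, both of which are absorbed into the $O^*$ notation.

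The one genuinely non-routine point, and hence the step I would treat most carefully, is the propagation of the error guarantee through the composition. The preprocessing step is deterministic and its \textsc{yes} verdict is a true certificate, so it introduces no error. All randomness lives in the Cut \& Count dynamic program underlying Theorems~\ref{thm:treewidth_single:Maxmin} and~\ref{thm:treewidth_single:Connected}, which by construction never reports a false positive and, via the Isolation Lemma, fails with a false negative with probability at most $1/2$. Consequently the composed algorithm inherits exactly the same one-sided error: it cannot give false positives, and its false-negative probability is at most $1/2$ (and can be driven below any constant by independent repetition). This yields the stated bounds and completes the argument.
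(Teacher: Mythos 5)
Your proposal is correct and follows essentially the same route as the paper: Corollary~\ref{cor:bounding_tw} (and its $K_{1,k}$ refinement for connected cuts) to bound the treewidth by $2k-2$ resp.\ $k-1$, followed by the Cut \& Count algorithms of Theorems~\ref{thm:treewidth_single:Maxmin} and~\ref{thm:treewidth_single:Connected}, with the exponent arithmetic $4^{2k-2}=16^{k-1}$ and $3^{k-1}$ absorbed into $O^*$. Your explicit check that the deterministic preprocessing preserves the one-sided error is a welcome addition the paper leaves implicit.
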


\subsection{Twin-cover}\label{sec:twincover}

Two vertices $u,v$ are called {\em twins} if both $u$ and $v$ have the same closed/open neighbourhood. Moreover, if twins $u,v$ have edge $\{u,v\}$, they are called {\em true twins} and the edge is called a {\em twin edge}. Then a {\em twin-cover} of $G$ is defined as follows.

\begin{definition}[\cite{Ganian2015}]\label{def:twincover}
A set of vertices $X$ is a {\em twin-cover} of $G$ if every edge $\{u,v\}\in E$ satisfies either
\begin{itemize}
\item $u\in X$ or $v\in X$, or
\item $u,v$ are true twins.
\end{itemize}  
The {\em twin-cover number} of $G$, denoted by $tc(G)$, is defined as the size of minimum twin-cover in $G$. 
\end{definition}

An important observation is that the complement of a twin-cover $X$ induces disjoint cliques.
Moreover, for each clique $Z$ of $G[V \setminus X]$, $N(u) \cap X = N(v) \cap X$ for every $u, v \in Z$~\cite{Ganian2015}.

\textsc{Maximum Cut} is FPT when parameterized by  twin-cover number~\cite{Ganian2015}. In this section, we show that \textsc{Maximum Connected Cut} and \textsc{Largest Bond} are also FPT when parameterized by the twin-cover number.

\begin{theorem}\label{thm:twincover:concut}
\textsc{Maximum Connected Cut} can be solved in time $O^*(2^{2^{tc}+tc})$.
\end{theorem}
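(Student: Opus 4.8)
The plan is to exploit the structure guaranteed by a twin-cover $X$ of size $tc$: as recalled above, $G[V\setminus X]$ is a disjoint union of cliques, and every clique $Z$ has a single \emph{type} $\tau(Z)=N(u)\cap X$, identical for all $u\in Z$. Since there are at most $2^{tc}$ distinct types and $2^{tc}$ choices for the set $A:=S\cap X$, the budget $O^*(2^{2^{tc}+tc})$ is exactly large enough to enumerate $A$ together with one bit per type. First I would compute a minimum twin-cover $X$ (FPT in $tc$) and partition $V\setminus X$ into its cliques, grouped by type.

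For a fixed guess of $A=S\cap X$, the cut contribution of a single clique $Z$ is easy to express in closed form: writing $p=|\tau(Z)\cap A|$, $q=|\tau(Z)|-p$, and $a=|S\cap Z|$, the number of cut edges incident to $Z$ equals $aq+(|Z|-a)p+a(|Z|-a)$, a concave quadratic in $a$ whose optimum over $a$ can be found independently for each clique. Thus, were connectivity not a constraint, the overall optimum would decompose clique by clique once $A$ is fixed.

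The crux, and the step I expect to be the main obstacle, is enforcing that $G[S]$ is connected. The key observation is that, for connectivity, cliques of the same type are interchangeable: any clique $Z$ with $S\cap Z\neq\emptyset$ merges all components of $G[A]$ meeting $\tau(Z)\cap A$ into one component of $G[S]$, and if $\tau(Z)\cap A=\emptyset$ it would instead create an isolated component (forbidden unless $A=\emptyset$). Hence the only clique information on which connectivity depends is, per type $\tau$, the single bit ``does some clique of type $\tau$ place a vertex in $S$?''. I would therefore guess the set $\mathcal{T}\subseteq 2^{X}$ of \emph{active} types, giving $2^{2^{tc}}$ choices, and verify connectivity by forming the auxiliary graph whose vertices are the components of $G[A]$, contracting for each active $\tau$ the components touched by $\tau\cap A$, and accepting only when this produces a single component covering all of $A$.

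Finally, given a connectivity-consistent pair $(A,\mathcal{T})$, I would maximise the total cut as follows: for each type $\tau\notin\mathcal{T}$ force $S\cap Z=\emptyset$ on all its cliques (contributing $|Z|p$ each); for each $\tau\in\mathcal{T}$ optimise every clique independently via the quadratic above, additionally ensuring at least one clique of that type is nonempty by forcing the least-costly clique to take $a\ge 1$ whenever all unconstrained optima would be empty. After dispatching the degenerate cases $A=\emptyset$ and ``$S$ lies within a single clique'' directly, taking the maximum over all $2^{tc}\cdot 2^{2^{tc}}$ guesses yields the claimed $O^*(2^{2^{tc}+tc})$ running time. Correctness follows because the true optimal solution induces exactly one enumerated pair $(A,\mathcal{T})$; the delicate part of the argument is justifying that the single-bit activity encoding captures connectivity faithfully and that the per-clique optimisation under the ``at least one nonempty'' constraint never sacrifices either feasibility or optimality.
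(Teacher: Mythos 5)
Your proposal is correct and follows essentially the same route as the paper: guess $S\cap X$ ($2^{tc}$ choices) and the set of active clique types ($2^{2^{tc}}$ choices), check connectivity from this information alone, and optimize each clique independently via the same closed-form count of incident cut edges. Your treatment is in fact slightly more explicit than the paper's on two details it glosses over, namely the auxiliary contraction graph used to verify connectivity and the ``force at least one clique of each active type to be nonempty'' adjustment.
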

\begin{proof}
First compute a minimum twin-cover $X$ of $G$ in time $O^*(1.2738^{tc})$~\cite{Ganian2015}.
Now, we have a twin-cover $X$ of size $tc$. Recall that $G[V\setminus X]$ consists of vertex disjoint cliques and for each $u, v \in Z$ in a clique $Z$ of $G[V \setminus X]$, $N(u) \cap X = N(v) \cap X$.

We iterate over all possible subsets $X'$ of $X$ and compute the size of a  maximum cut $[S, V\setminus S]$ of $G$ with $S \cap X = X'$.

If $X' = \emptyset$, exactly one of the cliques of $G[V \setminus X]$ intersects $S$ as $G[S]$ is connected.
Thus, we can compute a maximum cut by finding the best partition for each clique of $G[V \setminus X]$, which can be done in polynomial time.

Suppose otherwise that $X' \neq \emptyset$.
We define a {\em type} of each clique $Z$ of $G[V \setminus X]$.
The type of $Z$, denoted by $T(Z)$, is $N(Z) \cap X$. Note that there are at most $2^{tc}-1$ types of cliques in $G[V \setminus X]$.

For each type of cliques, we guess that $S$ has an intersection with this type of cliques. There are at most $2^{2^tc - 1}$ possible combinations of types of cliques.
Let $\mathcal T$ be the set of types in $G[V \setminus X]$. 
For each guess $\mathcal T' \subseteq \mathcal T$, we try to find a maximum cut $[S, V \setminus S]$ such that $G[S]$ is connected, $S \cap X = X'$, for each $T \in \mathcal T'$, at least one of the cliques of type $T$ has an intersection with $S$, and for each $T \notin \mathcal T'$, every clique of type $T$ has no intersection with $S$.
We can easily check if $G[S]$ will be connected as $S$ contains a vertex of a clique of type $T \in \mathcal T'$.
Consider a clique $Z$ of type $T(Z) = X'' \subseteq X$. 
Since every vertex in $Z$ has the same neighborhood in $X$, we can determine the number of cut edges incident to $Z$ from the cardinality of $S \cap Z$.
More specifically, if $|S \cap Z| = p$, the number of cut edges incident to $Z$ is equal to $p(|Z|-p)+p|X''\cap (X \setminus X')| + (|Z|-p)|X'' \cap X'|$.
Moreover, we can independently maximize the number of cut edges incident to $Z$ for each clique $Z$ of $G[V \setminus X]$.

Overall, for each $X' \subseteq X$ and for each set of types $\mathcal T'$, we can compute a maximum connected cut with respect to $X'$ and $\mathcal T'$ in polynomial time.
Therefore, the total running time is bounded by $O^*(2^{2^tc+tc})$. 
\end{proof}

\begin{theorem}\label{thm:twincover:maxmin}
\textsc{Largest Bond} can be solved in time $O^*(2^{tc}3^{2^{tc}})$.
\end{theorem}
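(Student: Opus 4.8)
The plan is to adapt the twin-cover algorithm of Theorem~\ref{thm:twincover:concut} for \textsc{Maximum Connected Cut}, the essential new difficulty being that a bond requires \emph{both} $G[S]$ and $G[V\setminus S]$ to be connected rather than only $G[S]$. As before, I would first compute a minimum twin-cover $X$ with $|X| = tc$ in time $O^*(1.2738^{tc})$, recalling that $G[V\setminus X]$ is a disjoint union of cliques and that all vertices of a clique $Z$ share the same neighbourhood $T(Z) := N(Z)\cap X$ in the cover, so that there are at most $2^{tc}-1$ distinct \emph{types}. I would then enumerate the trace $X' := S\cap X$ of the solution over the cover, which contributes the factor $2^{tc}$.

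To capture the connectivity of both sides simultaneously I would assign to each type $T$ one of three \emph{states}: all cliques of type $T$ lie entirely in $S$; all lie entirely in $V\setminus S$; or type $T$ has a representative on each side (realised by a single split clique, or by two cliques placed oppositely). Since every clique must be placed somewhere, the combination ``present on neither side'' never occurs, so exactly three states remain and there are at most $3^{2^{tc}-1} < 3^{2^{tc}}$ state assignments; together with the enumeration of $X'$ this yields the claimed $O^*(2^{tc}3^{2^{tc}})$ bound, provided each (trace, assignment) pair can be evaluated in polynomial time.

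For a fixed pair I would first test feasibility, that is, whether the prescribed states can yield connected $G[S]$ and $G[V\setminus S]$. Observe that a clique clump of type $T$ placed in $S$ forms a connected set that attaches to $X'$ precisely when $T\cap X'\neq\emptyset$, and that such a clump makes all of $T\cap X'$ mutually reachable in $G[S]$. Hence I would build an auxiliary graph on vertex set $X'$ whose edges are the edges of $G$ within $X'$ together with a clique on $T\cap X'$ for every type present in $S$; then $G[S]$ is connected iff every type present in $S$ satisfies $T\cap X'\neq\emptyset$ and this auxiliary graph is connected. The symmetric test with $X\setminus X'$ decides connectivity of $G[V\setminus S]$. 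The boundary cases $X'=\emptyset$ and $X'=X$ require separate treatment, exactly as in Theorem~\ref{thm:twincover:concut}: the side containing no cover vertex must lie inside a single clique, so its only connected options can be handled directly.

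For each feasible pair I would maximise $|\partial(S)|$. The edges inside $X$ crossing between $X'$ and $X\setminus X'$ are fixed once $X'$ is chosen, and for every clique $Z$ of type $T$ the number of incident cut edges is $p(|Z|-p)+p\,|T\cap(X\setminus X')| + (|Z|-p)\,|T\cap X'|$ with $p=|S\cap Z|$, exactly as in the connected-cut proof. Given the type's state these per-clique contributions are almost independent: the first two states force $p=|Z|$ respectively $p=0$ on all cliques of the type, while the split state only imposes the \emph{existential} requirement that some clique have $p\ge 1$ and some have $p\le|Z|-1$. I would therefore maximise each clique independently and, in the split state, repair the unconstrained optimum by a minimum-loss adjustment whenever it accidentally leaves the type entirely on one side. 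The step I expect to be the main obstacle is precisely this coupling in the split state: arguing that the existential constraints can always be enforced by a polynomial, optimality-preserving correction (and recognising when the state is simply infeasible, for instance a lone size-one clique that cannot be split), together with the careful bookkeeping of the $X'=\emptyset$ and $X'=X$ boundary cases so that no connectivity requirement is silently violated.
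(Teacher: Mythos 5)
Your proposal follows essentially the same route as the paper's proof: enumerate $X'=S\cap X$ ($2^{tc}$ choices) and then, for each of the at most $2^{tc}$ clique types, one of three placements (entirely in $S$, entirely in $V\setminus S$, or split across both sides), checking connectivity of both sides from the type states and maximizing the per-clique contribution $p(|Z|-p)+p|T\cap(X\setminus X')|+(|Z|-p)|T\cap X'|$ independently. The extra care you devote to the existential constraints in the split state (a minimum-loss repair of the unconstrained per-clique optimum, plus detecting infeasible states) addresses a detail the paper leaves implicit behind ``as in Theorem~\ref{thm:twincover:concut}''; you identify and resolve it correctly, so there is no gap.
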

\begin{proof}
We design an $O^*(2^{tc}3^{2^{tc}})$-time algorithm for \textsc{Largest Bond}, where $tc$ is the size of a minimum twin-cover of $G = (V, E)$.
This is quite similar to the one for \textsc{Maximum Connected Cut} developed in this section.
As with an algorithm for \textsc{Maximum Connected Cut}, we first compute a minimum twin-cover $X$ in time $O^*(1.2738^{tc})$~\cite{Ganian2015}.
Then we guess all $2^{tc}$ possible subsets $X'\subseteq X$ and compute the size of maximum cut $(S,V\setminus S)$ of $G$ with $S\cap X=X'$.

If $X'=\emptyset$, exactly one of the cliques of $G[V\setminus X]$ intersects $S$ due to the connectivity of $G[S]$. Thus, we can compute a maximum cut in polynomial time.
Note that $G[V\setminus S]$ is also connected because $X\subseteq  V\setminus S$.
We are also done for the case where $X' = X$ by a symmetric argument.
Thus, in the following, we assume that our guess $X'$ is non-empty and a proper subset of $X$.

For each guess $X' \subseteq X$, we further guess each type of cliques in $G[V \setminus X]$ has an intersection with only $S$, with only $V \setminus S$, or with both $S$ and $V \setminus S$.
For each guess, we can easily check $S$ and $V \setminus S$ will be connected and maximize the size of a cut in polynomial time as in Theorem~\ref{thm:twincover:concut}.
Since there are at most $2^{tc}$ types of cliques in $G[V \setminus X]$, the total running time is $O^*(2^{tc}3^{2^{tc}})$. 
\end{proof}

\section{Infeasibility of polynomial kernels for solution size parameterization}

It is not hard to see that \textsc{Maximum Connected Cut} do not admit a polynomial kernel unless NP $\subseteq$ coNP/poly, since it is trivially or-compositional; at least one of graphs $G_1, G_2, \ldots G_t$ have a connected cut of size at least $k$ if and only if their disjoint union $G_1 \cup G_2 \cup \cdots \cup G_t$ also has a connected cut of size at least $k$.

Regarding to bonds, as seen previously, any bond $\partial(S)$ of a graph $G$ intersects at most one of its block. Thus, an or-composition for {\sc Largest Bond} parameterized by $k$ can be done from the disjoint union of~$\ell$ inputs, by selecting exactly one vertex of each input graph and contracting them into a single vertex.
Now, let $(G_1,k,s_1,t_1),(G_2,k,s_2,t_2),\ldots,(G_\ell,k,s_\ell,t_\ell)$ be $\ell$ instances of {\sc Largest $st$-Bond} parameterized by $k$.
An or-composition for {\sc Largest $st$-Bond} parameterized by $k$ can be done from the disjoint union of $G_1,G_2,\ldots,G_\ell$, by contracting $t_i,s_{i+1}$ into a single vertex, $1\leq i\leq \ell-1$, and setting $s=s_1$ and $t=t_\ell$.

Therefore, the following holds.

\begin{theorem}\label{nokernel}
{\sc Largest Bond}, {\sc Largest $st$-Bond} and {\sc Maximum Connected Cut} do not admit polynomial kernel, unless NP $\subseteq$ coNP/poly.
\end{theorem}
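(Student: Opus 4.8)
The plan is to rule out polynomial kernels through \emph{OR-compositions} together with the standard framework: a problem that lies in \NP, is \NP-hard (which we have already shown for all three problems), and admits an OR-composition---an algorithm that, given $t$ instances sharing a common parameter $k$, produces in polynomial time a single instance whose parameter is bounded by a polynomial in $k$ (independent of $t$) and which is a yes-instance precisely when at least one of the inputs is---cannot have a polynomial kernel unless $\NP \subseteq \mbox{coNP}/\mbox{poly}$. In every case below the output parameter will simply equal $k$, so the only substantive task is to design the merge and verify the equivalence. For \textsc{Maximum Connected Cut} the merged instance is just the disjoint union $G_1 \cup \cdots \cup G_t$ with the same $k$: since $G[S]$ connected forces $S$ to lie inside a single component $G_i$, the cut-set $\partial(S)$ coincides with a cut-set of $G_i$, and hence the union has a connected cut of size at least $k$ if and only if some $G_i$ does.

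For \textsc{Largest Bond} the disjoint union does not work, because the complement side of a bond must also be connected; instead I would link the inputs by picking one vertex from each $G_i$ and contracting all of them into a single vertex $c$. Correctness rests on the structural fact established earlier in the $st$-bond discussion that a bond meets at most one block of a graph. The blocks of the merged graph are exactly the blocks of the individual $G_i$, so every bond of the merged graph is confined to a single $G_i$ and restricts to a bond of that $G_i$ of the same size. Conversely, from a bond $\partial_{G_i}(S_i)$ of an input one uses the symmetry of bonds to place $c$ on whichever side of the partition one wishes, and then assigns every other input component---each attached to the rest only through $c$---to that same side; both sides stay connected and the cut-set is unchanged, so the merged graph is a yes-instance.

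For \textsc{Largest $st$-Bond} I would chain the inputs rather than bundle them: take the disjoint union of $G_1, \dots, G_\ell$, contract $t_i$ with $s_{i+1}$ for $1 \le i \le \ell-1$, and set $s := s_1$ and $t := t_\ell$. An $st$-bond separates $s$ from $t$ and, again meeting at most one block, must cut inside a single $G_i$; there the two identified cut vertices play the roles of $s_i$ and $t_i$, so the cut is exactly an $s_i t_i$-bond of $G_i$. Conversely any $s_i t_i$-bond of an input extends to an $st$-bond of the chain by sending the prefix $G_1, \dots, G_{i-1}$ to the $s$-side and the suffix $G_{i+1}, \dots, G_\ell$ to the $t$-side.

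I expect the main obstacle to be the backward direction of the two bond variants: verifying that a bond of the merged instance really does restrict to a bond (respectively an $st$-bond) of one input graph with \emph{both} induced sides connected. This is precisely where the ``a bond meets at most one block'' property is indispensable, and one must check that the connectivity available in the merged graph---possibly routed through $c$ or through wholly-included components---descends to connectivity inside the single relevant $G_i$. The remaining ingredients (membership in \NP, the previously proven \NP-hardness, the polynomial running time of the merge, and the preservation of the parameter) are routine, so once these block-based equivalences are in hand the kernel lower bound follows directly from the OR-composition framework.
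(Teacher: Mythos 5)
Your proposal is correct and follows essentially the same route as the paper: an OR-composition via disjoint union for \textsc{Maximum Connected Cut}, via identifying one chosen vertex per input into a single vertex for \textsc{Largest Bond}, and via chaining ($t_i$ contracted with $s_{i+1}$) for \textsc{Largest $st$-Bond}, with correctness in the bond cases resting on the fact that a bond meets at most one block. The paper states these compositions more tersely, while you additionally spell out the backward-direction connectivity checks, which is a welcome elaboration rather than a deviation.
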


\section{Conclusions}\label{sec:conclusions}

In this work, we present a multivariate analysis on the complexity of computing the largest bond and the maximum connected cut of a graph. Some of our contributions is summarized in Table~\ref{table:contribution}.

Also, we present general reductions that allows us to observe that {\sc Largest Bond} and {\sc Maximum Connected Cut} are \NP-hard for several graph classes for which {\sc Maximum Cut} is \NP-hard. Using this frameworks, we are able to show that {\sc Largest Bond} and {\sc Maximum Connected Cut} on graphs of clique-width $w$ cannot be solved in time $f(w)\times n^{o(w)}$ unless the ETH fails.
Moreover, we show that {\sc Largest Bond} does not admit a constant-factor approximation algorithm, unless $\PP = \NP$, and thus is asymptotically harder to approximate than {\sc Maximum Cut}.

\begin{table}[h]
  \centering
    \caption{The summary of the computational complexity of \textsc{Maximum Cut} and its variants.}
  \begin{tabular}{|c| c c |c c c c| c|} \hline
 & \multicolumn{2}{|c|}{Graph Class} &  \multicolumn{4}{|c|}{Parameter} &poly kernel\cr \hline
 & Split &Planar Bipartite & $cw$ & $tw$ & $tc$ & $k$ & $k$ \cr \hline \hline
 \textsc{Maximum} & NP-c & P & $n^{O(cw)}$ & $2^{tw}$ & $2^{tc}$  & $1.2418^k$ & $O(k)$\cr 
 \textsc{Cut} &   \cite{bodlaender2000}&[trivial] & \cite{fomin2014almost} & \cite{bodlaender2000} &\cite{Ganian2015} &\cite{Raman2007} & \cite{Haglin,mahajan1999parameterizing} \cr \hline
  \textsc{Connected} &  NP-c & NP-c & $n^{O(cw)}$ & $3^{tw}$& $2^{2^{tc}+tc}$ & $2^{O(k)}$ & No\cr 
\textsc{Cut} & [Th.~\ref{thm:split:connected}]   & [Th.~\ref{thm:bipartite:Connected}] & [Th.~\ref{cliquewidth_up_faster}] &[Th.~\ref{thm:treewidth_single:Connected}]  &[Th.~\ref{thm:twincover:concut}] & [Cor.~\ref{thm:singleexpo}]  &  [Th.~\ref{nokernel}]  \cr \hline
     \textsc{Largest} & NP-c & NP-c & $n^{O(cw)}$ & $4^{tw}$& $2^{tc}3^{2^{tc}}$  & $2^{O(k)}$ &No \cr 
\textsc{Bond}   & [Th.~\ref{thm:split:maxmin}] & [Th.~\ref{thm:bipartite:maxmin}] & [Th.~\ref{cliquewidth_up_faster}]& [Th.~\ref{thm:treewidth_single:Maxmin}] &[Th.~\ref{thm:twincover:maxmin}]  & [Cor.~\ref{thm:singleexpo}]  & [Th.~\ref{nokernel}] \cr \hline
  \end{tabular}
  \label{table:contribution}
\end{table}

%
%

\bibliographystyle{spmpsci}      
\bibliography{bond}  

\newpage

\section*{Appendix}
\appendix

\subsection*{Faster algorithms parameterized by clique-width (Theorem \ref{cliquewidth_up_faster})}
In this section, we design faster XP algorithms for both \textsc{Maximum Connected Cut} and \textsc{Largest Bond} when parameterized by clique-width, which run in time $n^{O(w)}$.

Here, we rather use a different graph parameter and its associated decomposition closely related to clique-width.
We believe that this decomposition is more suitable to describe our dynamic programming.

\begin{definition}
    Let $X \subseteq V(G)$. We say that $M \subseteq X$ is a {\em twin-set} of $X$ if for any $v \in V(G) \setminus X$, either $M \subseteq N(v)$ or $M \cap N(v) = \emptyset$ holds. A twin-set $M$ is called a {\em twin-class} of $X$ if it is maximal subject to being a twin-set of $X$. $X$ can be partitioned into twin-classes of $X$.
\end{definition}

\begin{definition}
    Let $w$ be an integer. We say that $X \subseteq V(G)$ is a {\em $w$-module} of $G$ if $X$ can be partitioned into $w$ twin-classes $\{X_1, X_2, \ldots, X_{w}\}$.
    A {\em decomposition tree} of $G$ is a pair of a rooted binary tree $T$ and a bijection $\phi$ from the set of leaves of $T$ to $V(G)$.
    For each node $v$ of $T$, we denote by $L_v$ the set of leaves, each of which is either $v$ or a descendant of $v$.
    The {\em width} of a decomposition tree $(T, \phi)$ of $G$ is the minimum $w$ such that for every node $v$ in $T$, the set $\bigcup_{l \in L_v}\phi(l)$ is a $w_v$-module of $G$ with $w_v \le w$.
    The {\em module-width} of $G$ is the minimum $t$ such that there is a decomposition tree of $G$ of width $w$.
\end{definition}

Rao \cite{Rao2008} proved that clique-width and module-width are linearly related to each other.
Let $cw$ and $mw$ be the clique-width and the module-width of $G$, respectively.
We note that a similar terminology ``modular-width'' has been used in the literature, but module-width used in this paper is different from it.

\begin{theorem}[\cite{Rao2008}]\label{thm:mw-cw}
    For every graph $G$,
    $mw \le cw \le 2mw$.
\end{theorem}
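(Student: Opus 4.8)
The plan is to prove the two inequalities separately, each by an explicit translation between a clique-width expression and a decomposition tree.

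For the lower bound $mw \le cw$, I would start from an optimal $cw$-expression of $G$ and read a decomposition tree off its syntactic structure. Contracting every unary node (vertex creation, relabel $\rho$, join $\eta$) into the nearest disjoint-union node $\oplus$ below it turns the expression into a rooted binary tree $T$ whose leaves are exactly the single-vertex creations; letting $\phi$ map each leaf to the vertex it creates gives a candidate decomposition tree. The key claim is that for every node $v$ the set $V_v$ is a $cw$-module. This holds because in any clique-width expression, every edge between $V_v$ and $V(G)\setminus V_v$ is added by a join performed \emph{after} the subexpression for $G[V_v]$ has been evaluated, and joins (as well as later relabels) act identically on all vertices currently sharing a label. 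Hence two vertices of $V_v$ carrying the same label at node $v$ keep the same label throughout the remainder of the expression and therefore receive identical neighbours outside $V_v$; each of the at most $cw$ label classes is thus contained in a single twin-class of $V_v$. Consequently $V_v$ has at most $cw$ twin-classes, so the width of $(T,\phi)$ is at most $cw$ and $mw \le cw$.

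For the upper bound $cw \le 2mw$, I would build a $cw$-expression by processing an optimal decomposition tree bottom-up, maintaining the invariant that after node $v$ is handled we have an expression for $G[V_v]$ in which every vertex is labelled by the index of its twin-class in $V_v$, using at most $mw$ labels. A leaf needs a single label. At an internal node $v$ with children $v_1,v_2$, I would relabel the twin-classes of $V_{v_2}$ into a disjoint block of labels $mw+1,\dots,2mw$, take $\oplus$ of the two child expressions, apply the join $\eta(i,j)$ for every pair consisting of a twin-class of $V_{v_1}$ and a twin-class of $V_{v_2}$ that are completely adjacent, and finally relabel all $2mw$ labels down to the twin-class indices of $V_v$. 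This uses at most $2mw$ labels, giving $cw \le 2mw$.

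The correctness of the upper-bound construction rests on two structural facts, which I expect to be the main point to verify. First, between a twin-class $M_1$ of $V_{v_1}$ and a twin-class $M_2$ of $V_{v_2}$ the bipartite graph is complete or empty: since $M_1$ is a twin-set of $V_{v_1}$ and $M_2 \subseteq V(G)\setminus V_{v_1}$, each vertex of $M_2$ is fully adjacent or fully non-adjacent to $M_1$, and applying the dual property of $M_2$ as a twin-set of $V_{v_2}$ yields uniformity across $M_2$ as well; thus a single join realises all edges between $M_1$ and $M_2$ while touching no edge internal to a child. Second, the twin-classes of each child refine those of $V_v$: vertices sharing a twin-class of $V_{v_1}$ have equal adjacency to all of $V(G)\setminus V_{v_1}$, hence in particular to all of $V(G)\setminus V_v$, so they lie in a common twin-class of $V_v$, which makes the final relabelling well defined and restores the invariant. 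With these two facts the induction goes through, and combining the two inequalities gives $mw \le cw \le 2mw$, which is exactly Theorem~\ref{thm:mw-cw}.
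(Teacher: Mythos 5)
The paper does not prove this statement at all: it is imported as a black-box citation to Rao~\cite{Rao2008}, so there is no in-paper argument to compare against. Your proposal supplies a correct, self-contained proof, and it is essentially the canonical one. For $mw \le cw$ the decisive observation --- that two vertices of $V_v$ sharing a label when the subexpression for $v$ is complete share a label forever after, hence receive identical neighbours outside $V_v$, so the at most $cw$ label classes refine the twin-classes of $V_v$ --- is exactly right, and it also explains why twin-classes number at most $cw$ (any partition into twin-sets is refined by the twin-class partition). For $cw \le 2mw$ the two structural facts you isolate are precisely the ones this paper itself uses later in its appendix: that the bipartite graph between a twin-class of $V_{v_1}$ and one of $V_{v_2}$ is complete or empty (so a single $\eta(i,j)$ realises all cross edges without disturbing either side), and that the twin-classes of a child refine those of the parent (the paper's Observation on coarseness), which makes the final relabelling well defined and restores the $mw$-label invariant. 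One cosmetic slip: vertex creations are leaves of the expression tree, not unary nodes to be contracted downward, but the intended binary tree (union nodes as internal nodes, creations as leaves) is clear and the argument is unaffected. The proof is complete and consistent with the machinery the paper builds on top of this theorem.
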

Moreover, given a $w$-expression tree of $G$, we can in time $O(n^2)$ compute a decomposition tree $(T, \phi)$ of $G$ of width at most $w$ and $w_v \le w$ twin-classes of $\bigcup_{l \in L_v}\phi(l)$ for each node $v$ in $T$ \cite{Bui-Xuan2013}.

Fix a decomposition tree $(T, f)$ of $G$ whose width is $w$.
Our dynamic programming algorithm runs over the nodes of the decomposition tree in a bottom-up manner.
For each node $v$ in $T$, we let $\{X^v_1, X^v_2, \ldots, X^v_{w_v}\}$ be the twin-classes of $\bigcup_{l \in L_v}\phi(l)$. From now on, we abuse the notation to denote $\bigcup_{l \in L_v}\phi(l)$ simply by $L_v$.
A tuple of $4w_v$ integers $t= (p_1, \overline{p}_1, p_2, \overline{p}_2, \ldots, p_{w_v}, \overline{p}_{w_v}, c_1, \overline{c}_1, c_2, \overline{c}_2, \ldots, c_{w_v}, \overline{c}_{w_v})$ is {\em valid} for $v$ if it holds that $0 \le p_i, \overline{p}_i \le |X^v_i|$ with $p_i + \overline{p}_i = |X^v_i|$ and $c_i, \overline{c}_i \in \{0, 1\}$ for each $1 \le i \le w_v$.
For a valid tuple $t$ for $v$, we say that a cut $(S, L_v \setminus S)$ of $G[L_v]$ is {\em $t$-legitimate} if for each $1 \le i \le w_v$, it satisfies the following conditions:
\begin{itemize}
    \item $p_i = |S \cap X^v_i|$,
    \item $\overline{p}_i = |(L_v \setminus S) \cap X^v_i|$,
    \item $G[S \cap X^v_i]$ is connected if $c_i = 1$, and
    \item $G[(L_v \setminus S) \cap X^v_i]$ is connected if $\overline{c}_i = 1$.
\end{itemize}
The size of a $t$-legitimate cut is defined accordingly.
In this section, we allow each side of a cut to be empty and the empty graph is considered to be connected.
Our algorithm computes the value $\mc(v, t)$ that is the maximum size of a $t$-legitimate cut for each valid tuple $t$ and each node $v$ in the decomposition tree. 

\medskip
\noindent \textbf{Leaves (Base step):} 
For each valid tuple $t$ for a leaf $v$, $\mc(v, t) = 0$. 
Note that there is only one twin-class $X^v_1 = \{v\}$ for $v$ in this case. 

\medskip
\noindent \textbf{Internal nodes (Induction step):} Let $v$ be an internal node of $T$ and let $a$ and $b$ be the children of $v$ in $T$.
Consider twin-classes $\CX^v = \{X^v_1, X^v_2, \ldots, X^v_{w_v}\}$, $\CX^a = \{X^a_1, X^a_2, \ldots, X^a_{w_a}\}$, and $\CX^b = \{X^b_1, X^b_2, \ldots, X^b_{w_b}\}$ of $L_v$, $L_a$, and $L_b$, respectively. Note that $\CX^a \cup \CX^b$ is a partition of $L_v$.

\begin{observation}\label{obs:coarse}
    $\CX^v$ is a partition of $L_v$ coarser than $\CX^a \cup \CX^b$.
\end{observation}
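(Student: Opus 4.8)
The plan is to recast the twin-class structure as the equivalence classes of a natural relation on the ground set and then exploit a simple monotonicity of this relation as the module $L_a$ (or $L_b$) grows into $L_v$. First I would observe that, for any $X \subseteq V(G)$, a subset $M \subseteq X$ is a twin-set of $X$ precisely when all vertices of $M$ have the same neighbourhood outside $X$. Concretely, define for $u,v \in X$ the relation $u \sim_X v$ to hold whenever $N(u) \setminus X = N(v) \setminus X$. This is clearly an equivalence relation, and its classes are exactly the twin-classes of $X$: each class is a twin-set (for any external $w$, either the whole class lies in $N(w)$ or is disjoint from it), and a short argument shows each class is maximal among twin-sets, so maximal twin-sets coincide with $\sim_X$-classes. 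This reformulation simultaneously justifies the sentence in the definition that $X$ partitions into twin-classes and reduces the claim to a statement about refinement of equivalence relations.

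Next I would use that $T$ is a \emph{binary} tree, so the leaves below $v$ split into those below $a$ and those below $b$; hence $L_v = L_a \cup L_b$ is a disjoint union and $\CX^a \cup \CX^b$ is genuinely a partition of $L_v$, namely the common refinement of the twin-class partitions of $L_a$ and of $L_b$. The core step is then the monotonicity: since $L_a \subseteq L_v$ we have $V(G) \setminus L_v \subseteq V(G) \setminus L_a$, and therefore for $u,v \in L_a$,
\[
N(u) \setminus L_a = N(v) \setminus L_a \ \Longrightarrow\ N(u) \setminus L_v = N(v) \setminus L_v .
\]
In other words $u \sim_{L_a} v$ implies $u \sim_{L_v} v$, so the restriction of $\sim_{L_a}$ to $L_a$ refines $\sim_{L_v}$. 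Consequently every twin-class of $L_a$ is contained in a single twin-class of $L_v$, and by the identical argument every twin-class of $L_b$ is contained in a single twin-class of $L_v$.

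Combining these, every part of $\CX^a \cup \CX^b$ lies inside some part of $\CX^v$, which is exactly the assertion that $\CX^v$ is coarser than $\CX^a \cup \CX^b$. I expect the only genuinely delicate point to be the first step, namely verifying that twin-classes coincide with the equivalence classes of $\sim_X$ (in particular the maximality direction); once that reformulation is in place the refinement argument is immediate, as it rests solely on the set containment $V(G) \setminus L_v \subseteq V(G) \setminus L_a$ and involves no computation.
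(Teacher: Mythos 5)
Your proof is correct and follows essentially the same route as the paper: both arguments hinge on the containment $V(G)\setminus L_v \subseteq V(G)\setminus L_a$ to conclude that each twin-class of $L_a$ (resp.\ $L_b$) remains a twin-set of $L_v$ and hence lies inside a single twin-class of $L_v$. Your equivalence-relation reformulation merely makes explicit the step the paper leaves implicit (that every twin-set is contained in a maximal one), so it is a slightly more detailed rendering of the same argument.
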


To see this, consider an arbitrary twin-class $X^a_i$ of $L_a$. By the definition of twin-sets, for every $z \in V(G) \setminus L_a$, either $X^a_i \subseteq N(z)$ or $X^a_i \cap N(z) = \emptyset$ holds. Since $V(G) \setminus L_v \subseteq V(G) \setminus L_a$, $X^a_i$ is also a twin-set of $L_v$, which implies $X^a_i$ is included in some twin-class $X^v_j$ of $L_v$. This argument indeed holds for twin-classes of $L_b$. Therefore, we have the above observation.

The intuition of our recurrence is as follows.
By Observation~\ref{obs:coarse}, every twin-class of $L_v$ can be obtained by merging some twin-classes of $L_a$ and of $L_b$.
This means that every $t_v$-legitimate cut of $G[L_v]$ for a valid tuple $t_v$ for $v$ can be obtained from some $t_a$-legitimate cut and $t_b$-legitimate cut for valid tuples for $a$ and $b$, respectively.
Moreover, for every pair of twin-classes $X^a_i$ of $L_a$ and $X^b_j$ of $L_b$, either there are no edges between them or every vertex in $X^a_i$ is adjacent to every vertex in $X^b_j$ as $X^a_i$ is a twin-set of $L_v$.
Therefore, the number of edges in the cutset of a cut $(S, L_v \setminus S)$ between $X^a_i$ and $X^b_j$ depends only on the cardinality of $X^a_i \cap S$ and $X^b_j \cap S$ rather than actual cuts $(S \cap X^a_i, (L_a \setminus S) \cap X^a_i)$ and $(S \cap X^b_i, (L_b \setminus S) \cap X^b_i)$.

Now, we formally describe this idea. Let $X^v$ be a twin-class of $L_v$.
We denote by $I_a(X^v)$ (resp. $I_b(X^v)$) the set of indices $i$ such that $X^a_i$ (resp. $X^b_i$) is included in $X^v$ and by $\CX^a(X^v)$ (resp. $\CX^b(X^v)$) the set $\{X^a_i : i \in I_a(X^v)\}$ (resp. $\{X^b_i : i \in I_b(X^v)\}$).
For $X^a \in \CX^a(X^v)$ and $X^b \in \CX^a(X^v)$, we say that $X^a$ is adjacent to $X^b$ if every vertex in $X^a$ is adjacent to every vertex in $X^b$ and otherwise $X^a$ is not adjacent to $X^b$. 
This adjacency relation naturally defines a bipartite graph whose vertex set is $\CX^a(X^v) \cup \CX^b(X^v)$.
We say that a subset of twin-classes of $\CX^a(X^v) \cup \CX^b(X^v)$ is {\em non-trivially connected} if it induces a connected bipartite graph with at least twin-classes.
Let $S \subseteq X^v$.
To make $G[S]$ (and $G[X^v \setminus S]$) connected, the following observation is useful.
\begin{observation}\label{obs:connect}
    Suppose $S \subseteq X^v$ has a non-empty intersection with at least two twin-classes of $\CX^a(X^v) \cup \CX^b(X^v)$.
    Then, $G[S]$ is connected if and only if the twin-classes having a non-empty intersection with $S$ are non-trivially connected.
\end{observation}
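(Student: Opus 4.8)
The plan is to prove both implications by transferring the connectivity of $G[S]$ to connectivity in the bipartite adjacency graph $B$ on $\CX^a(X^v) \cup \CX^b(X^v)$, the engine being the \emph{completeness} property already recorded just before the observation: for $X^a \in \CX^a(X^v)$ and $X^b \in \CX^b(X^v)$, either there is no edge of $G$ between them, or every vertex of $X^a$ is joined to every vertex of $X^b$. Thus ``$X^a$ adjacent to $X^b$ in $B$'' is equivalent to ``$G$ has at least one edge between $X^a$ and $X^b$.'' I would fix notation by letting $\mathcal Y$ be the set of twin-classes meeting $S$; by hypothesis $|\mathcal Y|\ge 2$, so that ``non-trivially connected'' reduces exactly to ``$\mathcal Y$ induces a connected subgraph of $B$.''

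For the \emph{if} direction, suppose $\mathcal Y$ induces a connected subgraph of $B$. I claim $G[S]$ is then connected. Since every class in $\mathcal Y$ has a $B$-neighbour in $\mathcal Y$, and $B$-adjacent classes are completely joined in $G$, each set $S \cap Z$ with $Z \in \mathcal Y$ is entirely adjacent to $S \cap Z'$ for some neighbour $Z' \in \mathcal Y$; chaining this along a path of $B$ between any two classes shows that any two vertices of $S$ lie in one component of $G[S]$. I would stress that internal connectivity of the pieces $G[S \cap Z]$ is never invoked here: the complete bipartite edges to a neighbouring class already glue $S \cap Z$ together, so the per-class flags play no role in this direction.

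For the \emph{only if} direction I would argue contrapositively. If $\mathcal Y$ induces a disconnected subgraph of $B$, split it into non-empty parts $\mathcal Y_1, \mathcal Y_2$ with no $B$-edge across, and set $S_r = S \cap \bigcup \mathcal Y_r$. By completeness, the absence of $B$-edges between $\mathcal Y_1$ and $\mathcal Y_2$ rules out any $G$-edge joining a class of $\CX^a(X^v)$ on one side to a class of $\CX^b(X^v)$ on the other. The remaining, and delicate, possibility is a $G$-edge joining two classes \emph{of the same child} (two classes of $\CX^a(X^v)$, or two of $\CX^b(X^v)$) lying on opposite sides of the split. The crux of the whole proof is therefore to rule this out, i.e.\ to establish that inside the parent twin-class $X^v$ every edge of $G$ runs between a class of $\CX^a(X^v)$ and a class of $\CX^b(X^v)$. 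I expect this structural fact to be the main obstacle: it has to be extracted from the decomposition, combining Observation~\ref{obs:coarse} with the twin-set property of the child classes, since neither completeness nor the definition of $B$ alone forbids same-child edges. Once it is in hand, $S_1$ and $S_2$ are non-adjacent in $G$, so $G[S]$ is disconnected, which completes the contrapositive and hence the equivalence.
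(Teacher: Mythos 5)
Your \emph{if} direction is correct and is essentially all that the paper's one-sentence justification contains: $B$-adjacent classes are completely joined, so walking along a spanning tree of the induced bipartite graph glues the pieces $S \cap Z$ together, with no appeal to the internal connectivity of any single piece. That half is fine.

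The \emph{only if} direction is where your proposal stops, and you have put your finger on exactly the right spot — but the structural fact you defer to (every edge of $G$ inside $X^v$ runs between a class of $\CX^a(X^v)$ and a class of $\CX^b(X^v)$) does not follow from Observation~\ref{obs:coarse} together with the twin-set property, and is in fact false. The twin-set property of a class $X^a_i$ of $L_a$ only constrains adjacencies to vertices \emph{outside} $L_a$; it says nothing about edges of $G[L_a]$ joining two distinct twin-classes of $L_a$ that later merge into the same parent class $X^v$. Concretely, take $V(G)=\{u_1,u_2,w,z\}$ with edges $u_1u_2$, $u_1w$, $u_1z$, $u_2z$, $wz$, and the decomposition tree that first joins $u_1$ with $u_2$ (node $a$), then with $w$ (node $v$), then with $z$. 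Then $\{u_1\}$ and $\{u_2\}$ are distinct twin-classes of $L_a$ (they differ on $w$), both contained in the single twin-class $X^v=\{u_1,u_2,w\}$ of $L_v$, and they are joined by the edge $u_1u_2$. For $S=\{u_1,u_2\}$, the graph $G[S]$ is connected, yet $\{\{u_1\},\{u_2\}\}$ spans no edge of the bipartite graph, so it is not non-trivially connected. Hence the contrapositive you outline cannot be completed as planned, and the equivalence fails as stated. For what it is worth, the paper proves this observation in a single sentence that silently asserts the very dichotomy you were suspicious of (non-adjacent classes have no edges between them), which is only justified for cross-child pairs; so the missing step you isolated is a genuine gap in the source, not something you overlooked. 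Repairing it requires either an additional hypothesis on the decomposition ruling out same-child edges inside a parent twin-class, or enlarging the auxiliary graph (and correspondingly conditions C3--C4) to record same-child adjacencies between twin-classes.
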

This observation immediately follows from the fact that every vertex in a twin-class is adjacent to every vertex in an adjacent twin-class and is not adjacent to every vertex in a non-adjacent twin-class.

Let $t_v = (p^v_1, \overline{p}^v_1, \ldots, p^v_{w_v}, \overline{p}^v_{w_v}, c^v_1, \overline{c}^v_2, \ldots, c^v_{w_v}, \overline{c}^v_{w_v})$ be a valid tuple for $v$.
For notational convenience, we use ${\bf p}^v$ to denote $(p^v_1, \overline{p}^v_1, \ldots, p^v_{w_v}, \overline{p}^v_{w_v})$ and ${\bf c}^v$ to denote $(c^v_1, \overline{c}^v_2, \ldots, c^v_{w_v}, \overline{c}^v_{w_v})$ for each node $v$ in $T$.
For valid tuples $t_a = ({\bf p}^a, {\bf c}^a)$ for $a$ and $t_b = ({\bf p}^b, {\bf c}^b)$ for $b$,
we say that {\em $t_v$ is consistent with the pair $(t_a, t_b)$} if for each $1\le i \le w_v$, 
\begin{itemize}
    \item[C1] $p^v_i = \sum_{j \in I_a(X^v_i)} p^a_j + \sum_{j \in I_b(X^v_i)} p^b_j$;
    \item[C2] $\overline{p}^v_i = \sum_{j \in I_a(X^v_i)} \overline{p}^a_j + \sum_{j \in I_b(X^v_i)} \overline{p}^b_j$;
    \item[C3] if $c^v_i = 1$, either (1) $\{X^a_j : j \in I_a(X^v), p^a_j > 0\} \cup \{X^b_j : j \in I_b(X^v), p^b_j > 0\}$ is non-trivially connected or (2) exactly one of $\{p^s_j : s \in \{a, b\}, 1 \le j \le w_s\}$ is positive, say $p^s_j$, and $c^s_j = 1$;
    \item[C4] if $\overline{c}^v_i = 1$, either (1) $\{X^a_j : j \in I_a(X^v), \overline{p}^a_j > 0\} \cup \{X^b_j : j \in I_b(X^v), \overline{p}^b_j > 0\}$ is non-trivially connected or (2) exactly one of $\{\overline{p}^s_j : s \in \{a, b\}, 1 \le j \le w_s\}$ is positive, say $\overline{p}^s_j$, and $\overline{c}^s_j = 1$.
\end{itemize}

\begin{lemma}\label{lem:cw:rec}
    \[
        \mc(v, t_v) = \max_{t_a, t_b} \left(\mc(a, t_a) + \mc(b, t_b) + \sum_{\substack{X^a_i \in \CX^a, X^b_j \in \CX^b\\
        X^a_i, X^b_j: \text{adjacent}}} (p^a_i\overline{p}^b_j + p^b_j\overline{p}^a_i)  \right),
    \]
    where the maximum is taken over all consistent pairs $(t_a, t_b)$.
\end{lemma}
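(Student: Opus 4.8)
The plan is to establish the identity by proving the two inequalities $\mc(v,t_v)\le R$ and $\mc(v,t_v)\ge R$, where $R$ denotes the right-hand side of the claimed formula. Both rest on a correspondence between $t_v$-legitimate cuts of $G[L_v]$ and pairs consisting of a $t_a$-legitimate cut of $G[L_a]$ and a $t_b$-legitimate cut of $G[L_b]$, ranging over consistent pairs $(t_a,t_b)$. I would first record two structural facts. By Observation~\ref{obs:coarse}, each twin-class decomposes as a disjoint union $X^v_i=\bigsqcup_{j\in I_a(X^v_i)}X^a_j\sqcup\bigsqcup_{j\in I_b(X^v_i)}X^b_j$ of constituent twin-classes of $L_a$ and $L_b$. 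Moreover, since $L_v=L_a\sqcup L_b$, the edge set of $G[L_v]$ is partitioned into the edges of $G[L_a]$, the edges of $G[L_b]$, and the edges joining $L_a$ to $L_b$; and because every $X^a_i$ is a twin-set of $L_v$, each pair $(X^a_i,X^b_j)$ is either complete bipartite or edgeless. Consequently, for any cut respecting the cardinalities recorded by $t_a,t_b$, the number of crossing edges lying between $L_a$ and $L_b$ is exactly $\sum_{X^a_i,X^b_j\text{ adjacent}}(p^a_i\overline{p}^b_j+p^b_j\overline{p}^a_i)$, the summation appearing in $R$.

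For $\mc(v,t_v)\le R$ I would take an optimal $t_v$-legitimate cut $(S,L_v\setminus S)$ and restrict it, setting $S_a:=S\cap L_a$ and $S_b:=S\cap L_b$. I then read off tuples $t_a,t_b$ by letting the $p^a_j,\overline{p}^a_j$ be the cardinalities $|S\cap X^a_j|,|(L_a\setminus S)\cap X^a_j|$ and letting $c^a_j,\overline{c}^a_j$ record the actual connectivity of $G[S\cap X^a_j]$ and of its complement (and likewise for $b$). These tuples are valid, and the restricted cuts are $t_a$- and $t_b$-legitimate, so their cut sizes are at most $\mc(a,t_a)$ and $\mc(b,t_b)$. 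Conditions C1 and C2 follow from additivity of cardinalities over the disjoint decomposition above; conditions C3 and C4 follow because whenever $c^v_i=1$ the set $G[S\cap X^v_i]$ is connected, and Observation~\ref{obs:connect} (together with the single-constituent case) forces alternative~(1) or~(2). Finally, by the edge partition, $\mc(v,t_v)$ equals the cut size inside $L_a$ plus that inside $L_b$ plus the crossing term, which is at most $\mc(a,t_a)+\mc(b,t_b)$ plus the summation, hence at most $R$.

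For the reverse inequality I would fix an arbitrary consistent pair $(t_a,t_b)$, choose optimal $t_a$- and $t_b$-legitimate cuts $(S_a,L_a\setminus S_a)$ and $(S_b,L_b\setminus S_b)$, and glue them into $S:=S_a\cup S_b$. Conditions C1 and C2 guarantee that $S$ meets each $X^v_i$ in exactly $p^v_i$ vertices and its complement in $\overline{p}^v_i$ vertices. For legitimacy I only need the one-sided implication in the definition of a legitimate cut, namely that $c^v_i=1$ forces $G[S\cap X^v_i]$ connected: when alternative~(2) of C3 holds, $G[S\cap X^v_i]$ coincides with the connected graph $G[S\cap X^a_j]$ (or its $b$-analogue); when alternative~(1) holds, Observation~\ref{obs:connect} yields connectivity from the non-trivial connectedness of the intersected constituents, and a symmetric argument with C4 treats $L_v\setminus S$. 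Using the edge partition, the size of $(S,L_v\setminus S)$ equals $\mc(a,t_a)+\mc(b,t_b)$ plus the crossing term, so $\mc(v,t_v)$ is at least the bracketed expression for this pair; maximizing over consistent pairs gives $\mc(v,t_v)\ge R$.

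I expect the delicate point to be the connectivity bookkeeping encoded by C3 and C4, that is, certifying that the Boolean flag $c^v_i$ of the parent faithfully reflects the connectivity of $G[S\cap X^v_i]$ using only the cardinalities and flags of the children. The whole argument hinges on Observation~\ref{obs:connect}, which lets me collapse the internal structure of each intersected constituent twin-class to a single node of the bipartite adjacency graph on $\CX^a(X^v_i)\cup\CX^b(X^v_i)$; the two alternatives in C3 (exactly one intersected constituent versus at least two) correspond precisely to the degenerate and non-degenerate hypotheses of that observation. The size accounting and the cardinality conditions are routine once the edge partition and the complete-bipartite structure of adjacent twin-class pairs are in place, so the only genuine care required is to exploit the one-directional semantics of the flags---a flag equal to $1$ only asserts connectivity and does not forbid it---which is exactly what allows both inequalities to close without over-constraining the tuples.
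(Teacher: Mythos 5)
Your proposal is correct and follows essentially the same route as the paper: both directions are proved by the restriction/gluing correspondence between $t_v$-legitimate cuts of $G[L_v]$ and consistent pairs of legitimate cuts of $G[L_a]$ and $G[L_b]$, with the crossing term computed from the complete-bipartite-or-edgeless structure of adjacent twin-class pairs and the connectivity verified via Observation~\ref{obs:connect} and conditions C1--C4. Your explicit identification of the one-directional semantics of the flags is a point the paper leaves implicit, but the argument is the same.
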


\begin{proof}
We first show that the left-hand side is at most the right-hand side.
Suppose $(S, L_v \setminus S)$ be a $t_v$-legitimate cut of $G[L_v]$ whose size is equal to $\mc(v, t_v)$.
Let $S_a = S \cap L_a$ and $S_b = S \cap L_b$.
We claim that $(S_a, L_a \setminus S_a)$ is a $t_a$-legitimate cut of $G[L_a]$ for some valid tuple $t_a$ for $a$. 
This is obvious since we set $p^a_i = |S_a \cap X^a_i|$, $\overline{p}^a_i = |(L_a \setminus S_a) \cap X^a_i|$, $c^a_i = 1$ if $G[S_a \cap X^a_i]$ is connected, and $c^a_i = 1$ if $G[(L_a \setminus S_a) \cap X^a_i]$ is connected, which yields a valid tuple $t_a$ for $a$.
We also conclude that  $(S_b, L_b \setminus S_b)$ is a $t_b$-legitimate cut of $G[L_b]$ for some valid tuple $t_b$ for $b$.
Moreover, the number of cut edges between twin-class $X^a_i$ of $L_a$ and twin-class $X^b_j$ of $L_b$ is $|S_a \cap X^a_i|\cdot |(L_b\setminus S_b) \cap X^b_j| + |S_b \cap X^b_j|\cdot|(L_b \setminus S_a)\cap X^a_i| = p^a_i\overline{p}^b_j + p^b_j\overline{p}^a_i$ if $X^a_i$ and $X^b_j$ is adjacent, zero otherwise.
Therefore, the left-hand side is at most the right-hand side.

To show the converse direction, suppose $(S_a, L_a \setminus S_a)$ is a $t_a$-legitimate cut of $G[L_a]$ and $(S_b, L_b \setminus S_b)$ is a $t_b$-legitimate cut of $G[L_b]$, where $t_v$ is consistent with $(t_a, t_b)$ and the sizes of the cuts are $\mc(a, t_a)$ and $\mc(b, t_b)$, respectively.
We claim that $(S_a \cup S_b, L_v \setminus (S_a \cup S_b))$ is a $t_v$-legitimate cut of $G[L_v]$.
Since $t_v$ is consistent with $(t_a, t_b)$, for each $1 \le i \le w_v$, we have $p^v_i = \sum_{j \in I_a(X^v_i)} p^a_j + \sum_{j \in I_b(X^v_i)} p^b_j = \sum_{1 \le j \le w_a}|S_a \cap X^i_v| + \sum_{1 \le j \le w_b}|S_b \cap X^i_v| = |(S_a \cup S_b) \cap X^i_v|$. Symmetrically, we have $\overline{p}^i = |(L_v \setminus (S_a \cup S_b)) \cap X^v_i|$.
If $c^v_i = 1$, by condition C3 of the consistency, either (1) $\{X^a_j : j \in I_a(X^v), p^a_j > 0\} \cup \{X^b_j : j \in I_b(X^v), p^b_j > 0\}$ is non-trivially connected or (2) exactly one of $\{p^s_j : s \in \{a, b\}, 1 \le j \le w_s\}$ is positive, say $p^s_j$, and $c^s_j = 1$. 
If (1) holds, by Observation~\ref{obs:connect}, $G[(S_a \cap S_b) \cap X^i_v]$ is connected. Otherwise, as $c^s_j = 1$, $G[S_s \cap X^i_v] = G[(S_a \cup S_b) \cap X^v_i]$ is also connected.
By a symmetric argument, we conclude that $G[(L_v \setminus (S_a \cup S_b)) \cap X^i_v]$ is connected if $\overline{c}^v_i = 1$. Therefore the cut $(S_a \cup S_b, L_v \setminus (S_a \cup S_b))$ is $t_v$-legitimate.
Since the cut edges between two twin-classes of $L_a$ is counted by $\mc(a, t_a)$ and those between two twin-classes of $L_v$ is counted by $\mc(b, t_b)$. 
Similar to the forward direction, the number of cut edges between a twin-class of $L_a$ and a twin-class of $L_b$ can be counted by the third term in the right-hand side of the equality.
Hence, the left-hand side is at least the right-hand side. 
\end{proof}


\noindent{\em Proof of Theorem \ref{cliquewidth_up_faster}.}
    From a $w$-expression tree of $G$, we can obtain a decomposition tree $(T, \phi)$ of width at most $w$ in $O(n^2)$ time using Rao's algorithm \cite{Rao2008}.
    Based on this decomposition, we evaluate the recurrence in Lemma~\ref{lem:cw:rec} in a bottom-up manner. The number of valid tuples for each node of $T$ is at most $4^wn^w$. For each internal node $v$ and for each valid tuple $t_v$ for $v$, we can compute $\mc(v, t_v)$ in $(4^wn^w)^2n^{O(1)}$ time. Overall, the running time of our algorithm is $n^{O(w)}$.
    Let $r$ be the root of $T$.
    For \textsc{Maximum Connected Cut}, by the definition of legitimate cuts, we should take the maximum value among $\mc(r, (i, n-i, 1, j))$ for $1 \le i < n$ and $j \in \{0, 1\}$. Note that as $L_v$ has only one twin-class, the length of valid tuples is exactly four. For \textsc{Largest Bond}, we should take the maximum value among $\mc(r, (i, n-i, 1, 1))$ for $1 \le i < n$. 


\end{document}